\newenvironment{bprooftree}
  {\leavevmode\hbox\bgroup}
  {\DisplayProof\egroup}
\def\namedlabel#1#2{\begingroup
    #2%
    \def\@currentlabel{#2}%
    \phantomsection\label{#1}\endgroup
}
\newcommand{\BintNA}{\sf{N2Int^*}}
\newcommand{\Bint}{\sf{2Int}}
\newcommand{\BintN}{\sf{N2Int}}
\newcommand{\BintNs}{\sf{N2Int}^*}
\newcommand{\Pts}{\sf{PtS}}
\newcommand{\Bes}{\sf{BeS}}
\newcommand{\At}{\sf{At}}
\newcommand{\nn}{\sim\!\!}
\newcommand{\Nt}{\sf{N3}}
\newcommand{\Nf}{\sf{N4}}
\newcommand{\Nfb}{\Nf^{\bot}}
\newcommand{\U}{\mathcal{U}}
\newcommand\eg{\hbox{{\em e.g.}}}
\newcommand{\B}{\mathcal{B}}
\newcommand{\C}{\mathcal{C}}
\newcommand{\sS}{\sf{S}}
\newcommand{\BPR}{\sf{BPR}}
\newcommand{\Victor}[1]{\textcolor{cyan}{Victor: #1}}
\title{Bilateralism with incompatible proofs and refutations}
\author{Victor Barroso-Nascimento}{Department of Computer Science, University College London, UK}{victorluisbn@gmail.com}{https://orcid.org/0000-0002-3990-5996}{}
\author{Maria Os\'{o}rio}{Faculty of Sciences, University of Lisbon, Portugal}{mocosta@fc.ul.pt}{https://orcid.org/0009-0005-5322-1223}{}
\author{Elaine Pimentel}{Department of Computer Science, University College London, UK}{e.pimentel@ucl.ac.uk}{https://orcid.org/0000-0002-7113-0801}{}
\authorrunning{V. Barroso-Nascimento and M. Os\'{o}rio and E. Pimentel}
\keywords{Proofs and refutations, Constructive falsity, Natural deduction, Logical bilateralism, Base-extension semantics.} %TODO mandatory; please add comma-separated list of keywords
\begin{document}

\nolinenumbers

\maketitle

\begin{abstract}
Logical bilateralism challenges traditional concepts of logic by treating assertion and denial as independent yet opposed acts. While initially devised to justify classical logic, its constructive variants show that both acts admit intuitionistic interpretations. This paper presents a bilateral system where a formula cannot be both provable and refutable without contradiction, offering a framework for modelling 
mathematical proofs and refutations that exclude inconsistency. 
We formalise the logic via a bilateral natural deduction system with the desirable proof-theoretic properties  
of normalisation, subformula property and consistency, together with a base-extension semantics grounded in explicit proofs and refutations. Finally, refutation is shown to coincide with Nelson's constructive falsity, extending intuitionistic logic for constructive epistemic reasoning.
%
%The logic is formalised through a bilateral natural deduction system with desirable proof-theoretic properties, including normalisation. We also introduce a base-extension semantics requiring explicit constructions of proofs and refutations while preventing them from being simultaneously established for the same formula. The semantics is proven sound and complete with respect to the calculus. Finally, we show that our notion of refutation corresponds to David Nelson's constructive falsity, extending rather than revising intuitionistic logic and reinforcing the system's suitability for representing constructive epistemic reasoning.
\end{abstract}

%\newpage

\section{Introduction}
A central issue in constructive logic concerns the nature of refutation and how falsity should be represented within a constructive framework. In this context, David Nelson~\cite{NelsonConstructiveFalsity} was the first to highlight a subtle but important distinction between two kinds of refutative results that can be established constructively for statements of the form $\forall x \phi$. 

\noindent
{\bf About falsity.} The first, %which we may call 
{\em weak falsity}, is demonstrated by exhibiting a procedure that transforms any construction of the mathematical object into a construction of a contradiction. This is the notion of falsity traditionally adopted in constructive logics, as witnessed by the fact that $\neg \phi$ is often defined as $\phi \to \bot$. In contrast, {\em strong falsity} (represented by Nelson's strong negation $\nn\phi$) is proven by explicitly constructing witnesses to the falsity of the statement itself.

For example, let $\varphi=$ ``there exists an odd perfect number'' and define $\phi(0) = \varphi, \phi(1)=\neg \varphi$.
To establish the weak falsity of a statement such as $\forall n\in\{0,1\} \phi(n)$, one assumes it and derives a contradiction. On the other hand, to establish its strong falsity, one must instead {\em construct} one object $n$ for which $\phi(n)$ does not hold.
The significance of this difference resides in the fact that, while from the existence of a counterexample to a claim we can trivially obtain a procedure reducing purported constructions of its object to a proof of absurdity, it is not possible in general to obtain a particular counterexample to a claim from a procedure reducing it to a contradiction. 
%
%After proving that $P(a)$ does not hold for a particular $a$ and assuming that $P(x)$ holds for all $x$ we may obtain a contradiction by simply substituting $x$ by $a$. However, from the existence of a proof that assuming $P(x)$ for all $x$ leads to a contradiction, it does not follow that we can exhibit a specific object $a$ for which $P(a)$ fails.  
%
This is exactly why $\neg \forall x \phi(x)$ does not imply $\exists x \neg \phi (x)$ in, \eg, intuitionistic or minimal logic.

Nelson argues in~\cite{NelsonConstructiveFalsity} that this novel conception of strong negation naturally entails a revision of the principles of intuitionistic logic, thereby explaining why these systems are often regarded as {\em alternatives} to intuitionism rather than mere extensions of it. 
In fact, although both weak and strong falsity may reasonably be regarded as constructive, the contents of a proof of strong negation are substantially stronger. 

This also leads to a surprising (constructive) recovery of some classical equalities. In Nelson's original definitions, from a proof of $\nn\forall x \phi(x)$ it follows that the existence of a particular $a$ for which $\nn \phi(a)$ holds, hence also $\exists x\!\nn \phi(x)$. Conversely, since a proof of $\exists x\!\nn \phi(x)$ requires production of some object $a$ for which $\nn \phi(a)$ holds, if such a proof is available $a$ itself is an effective counterexample to the claim that $\forall x \phi(x)$, hence it follows that $\nn\forall x \phi(x)$. This means that, just as in classical logic, $\exists x\!\nn \phi(x)$ is equivalent to $\nn\forall x \phi(x)$, but {\em for entirely different reasons}! In classical logic the equivalence arises because the conditions for proving an existential statement are {\em weaker}, whereas in Nelson's logic it arises because the conditions for proving a negation are {\em stronger}.

\noindent
{\bf Falsity and bilateralism.} Constructively proving a formula $\phi$ requires providing a {\em constructive proof} of $\phi$. Dually, proving the negation $\nn\phi$ requires a {\em constructive disproof}--a constructive method that demonstrates that $\phi$ cannot hold.
Hence Nelson's constructive falsity can be interpreted as putting the contents of positive and negative statements on equal footing. This is essentially what is also done in {\em logical bilateralism}, an approach to logic in which positive and negative statements are not taken as mutually interdefinable. Bilateralism was originally presented by Rumfitt~\cite{Rumfitt2000-RUMYAN-2} to provide a proof-theoretic justification of classical logic, but it was later repurposed for intuitionism and its falsificationalist dual \cite{ShramkoYaroslav} through Wansing's logic $\Bint$~\cite{Wan110.1093/logcom/ext035}.
%, presented by Wansing in \cite{Wan110.1093/logcom/ext035}. 
%
In short, $\Bint$ is a bilateral bi-intuitionistic logic~\cite{rauszer1980algebraic,DBLP:conf/tableaux/Gore00}, combining the verificationist account of constructivism expressed through intuitionistic logic and its falsificationalist dual-intuitionistic counterpart in a single, unified framework. %$\Bint$ is shown not only to be a conservative extension of both intuitionistic and dual-intuitionistic logic, but also to be embeddable in both. 

\noindent
{\bf Stronger strong falsity.}
One can associate the {\em disprovability} of $\phi$ with the constructive strong negation $\nn \phi$: $\phi$ is disprovable exactly when $\nn \phi$ is provable, and  $\nn \phi$ is disprovable exactly when $\phi$ is provable.
The main formalisms of this idea, proposed by Nelson, are the logical systems $\Nt$ and $\Nf$~\cite{NelsonConstructiveFalsity,Odintsov2008-ODICNA}. The system $\Nt$, essentially intuitionistic logic extended with strong negation, was designed to provide a formal account of falsity in mathematics. By contrast, $\Nf$ is obtained from $\Nt$ by removing the {\em explosion principle} $\phi \to (\nn\phi \to \psi)$, which allows arbitrary conclusions to be derived from a contradiction. As a result, $\Nf$ is {\em paraconsistent}, in the sense that contradictions do not automatically trivialise the system.

Semantically, this distinction has a clear consequence: $\Nf$ allows for {\em simultaneous} proofs and refutations of the same formulas, whereas $\Nt$ does not~\cite[p.~447]{Wan110.1093/logcom/ext035}\footnote{The  model theory of both $\Nt$ and $\Nf$ have also been extensively studied (see, \eg, \cite{Hasuo2003-HASKCO,DBLP:journals/jancl/JarvinenRR24}). A concise overview of Nelson's logics is provided in \cite{Nascimento2020-NASNL-2}, which also introduces a semantics for Nelson's lesser-known but conceptually rich system $\sS$, presented as a possible alternative to both $\Nt$ and intuitionistic logic~\cite{DavidNelsonS}.}.
This contrast suggests that $\Nf$ and $\Nt$ capture {\em distinct notions} of strong falsity. For instance, if one aims to model the set of {\em beliefs} of an agent, it is reasonable to allow both truth and falsity to coexist without collapse, since agents may hold inconsistent beliefs without accepting everything. Wansing's $\Bint$ captures well this $\Nf$ perspective. By contrast, when modelling falsity in {\em mathematics}, such coexistence is problematic: a proof of a statement cannot coherently stand alongside its refutation. The present work is devoted to this $\Nt$ view of strong falsity\footnote{For additional sources containing in-depth defences of $\Nt$ and strong negation from a conceptual perspective, the reader is referred to \cite{Kapsner2014-KAPLAF,KurbisProofandFalsity2019-KRBPAF}.}.

\noindent
{\bf Contributions and challenges.} 
We investigate the proof-theoretic structure of falsity by introducing $\BPR$, a natural deduction system for a logic of proofs and refutations (Section~\ref{sec:nd}). Unlike existing proof systems for $\Nt$, $\BPR$ is {\em pure}, in the sense that each rule operates on a single connective. This design yields key proof-theoretic properties, including the subformula property. We also relate $\BPR$ to Nelson's logic, highlighting both the conceptual and technical consequences of our treatment of refutation.

We then show that $\BPR$ smoothly extends Wansing's natural deduction proof system $\BintN$ for $\Bint$~\cite{Wan110.1093/logcom/ext035}. However, unlike $\BintN$, a {\em direct proof} of (weak) normalisation for $\BPR$ is presented, and {\em consistency}\footnote{\cite{DBLP:journals/jsyml/Wansing99,Sano2025} study systems for nonmonotonic inference based on expansions of Nelson's logics $\Nf^\perp,\Nt^\perp$~\cite{NelsonConstructiveFalsity,Wansing1995-HEISNI,Odintsov2008-ODICNA} with Gabbay's consistency operator $\mathsf{M}$~\cite{gabbay82}. The connection with our approach is marginal since, in our system, consistency is guaranteed purely within the proof system, without adding extra operators.} is shown for both proofs and  refutations (Section~\ref{sec:pt}).

The semantics for $\BPR$ (Section~\ref{sec:bes}), formulated via base extension semantics ($\Bes$)~\cite{Sandqvist2015IL,sep-proof-theoretic-semantics,Ayhan-seq,DBLP:journals/logcom/dAragona25,Stafford}, makes explicit the idea of counterexample construction through atomic refutational rules, ensuring that no formula is simultaneously provable and refutable within the same atomic base. Furthermore, it retains all properties established in \cite{barrosonascimento2025bilateralbaseextensionsemantics,Ayhan-seq}, including the possibility of semantic embeddings via bilateral semantic harmony. Soundness and completeness results are shown, guaranteeing the correctness of the framework (Section~\ref{sec:sc}). 

We end this introduction by noting that there are extensive and well-developed discussions  in the literature on proof systems for Nelson's strong negation (see, \eg, \cite{Gurevich1977,Metcalfe2009-METASC,Kurbis-Petrukhin-2021,negri-sn} and refer to Appendix~\ref{sec:comp} for a more detailed discussion) and bilateralism (see, \eg, \cite{DBLP:journals/flap/Kurbis21,francez2014bilateralism,Gabbay2017BilateralismCritique,Francez2018ReplyToGabbay,Synthese-Simonelli,DBLP:journals/jphil/Simonelli24}). %The aim of this paper is to add some new insights to the debate. 
This paper aims at offering a new perspective that bridges these two strands.

\section{Bilateral Proof and Refutation ($\BPR$)}\label{sec:nd}
As noted in~\cite{DBLP:journals/japll/Wansing17}, inferential purity does not hold in some other natural deduction calculi for Nelson's logics with strong constructive negation~\cite{Gurevich1977,Kurbis-Petrukhin-2021,negri-sn}, in the sense that they display two connectives at a time in definitions (they also fail the subformula property), \eg:
\vspace{-0.2cm}
\begin{center}
$
\infer{\nn(\phi \to \psi)}{\phi & \nn\psi} \qquad\qquad \infer{\gamma}{\nn(\phi\wedge\psi)& \deduce{\gamma}{\deduce{\vspace{0.1cm}}{[\nn\phi]}} & \deduce{\gamma}{\deduce{\vspace{0.1cm}}{[\nn\psi]}}}
$
\end{center}
\vspace{-0.2cm}
Wansing's solution consisted of replacing the truth-preserving transitions to/from negated formulas by falsity preserving transitions to/from non-negated formulas, so the above rules are replaced by:
\vspace{-0.2cm}
\begin{center}
$
\infer={\phi \to \psi}{\infer{\phi}{} & \infer={\psi}{}} \qquad\qquad \infer{\gamma}{\infer={\phi\wedge\psi}{}& \infer{\gamma}{\llbracket\phi\rrbracket} & \infer{\gamma}{\llbracket\psi\rrbracket}}
$
\end{center}
\vspace{-0.2cm}
where the strong negation operator is substituted by double inference/discharge bars/brackets. The natural deduction system $\BintN$~\cite{DBLP:journals/japll/Wansing17} (Fig~\ref{fig:BintN}) formalises a conservative extension of propositional intuitionistic logic over a set of atomic variables $\At$, with connectives $\to, \wedge, \vee, \top, \bot$, obtained by adding co-implication ($\mapsfrom$), a connective dual to implication.  In $\BintN$, 
co-implication internalizes the preservation of {\em disproof} from the conclusion of a valid inference (understood as logical consequence) to its premises.
Hence the implication plays in {\em verificationism} a role dual to that played by co-implication in {\em falsificationism}. 

The system $\BintNs$~\cite{barrosonascimento2025bilateralbaseextensionsemantics} is a variant of $\BintN$. The key difference lies in the addition of the rules $E\vee(+)$, $E\wedge(-)$, $\bot(+)$, and $\top(-)$ (Fig.~\ref{fig:BintNs}), without which a direct proof of normalisation is not possible.
The bilateral natural deduction system $\BPR$ (Fig.~\ref{fig:SystemBPR}) then builds on $\BintNs$ by introducing the rule $PR$ (Fig.~\ref{fig:BPR}), which coordinates {\em proofs and refutations}. This single addition allows one to move from a paraconsistent to a consistent system, but it also introduces significant proof-theoretic challenges, as discussed in the following sections.

%As in $\BintN$, 
$\BPR$ features two kinds of derivations: those concluding with a formula under a single line (proofs) and those concluding under a double line (refutations). A proof (resp. a refutation) may contain refutations (resp. proofs) as subderivations. Dotted lines serve as placeholders for either single or double lines, with the proviso that in any application of $\lor(+)$ or $\land(-)$, all dotted lines must be replaced uniformly, either all by single lines or all by double lines.

The premises of a derivation in $\BPR$ are partitioned into two classes: assumptions, representing proofs, and contra-assumptions, representing refutations. Accordingly, the conclusion of a derivation is indexed by an ordered pair $(\Gamma; \Delta)$, where $\Gamma$ and $\Delta$ are finite sets of assumptions and contra-assumptions, respectively. Intuitively, $\Gamma$ collects hypotheses taken to be proved, whereas $\Delta$ collects those taken to be refuted. Finally, single (resp. double) square brackets $[\cdot]$ (resp. $\llbracket\cdot\rrbracket$) indicate the discharge of assumptions (resp. contra-assumptions).%, while double square brackets  indicate the discharge of .

We denote the existence of deductions through indexed turnstiles: $\Gamma ; \Delta \vdash^{+} \phi$ holds if and only if there is a deduction in $\BPR$ which has conclusion $\phi$, depends on the premises $(\Gamma'; \Delta')$ for $\Gamma' \subseteq \Gamma$ and $\Delta' \subseteq \Delta$ and ends with an application of a proof rule, and $\Gamma ; \Delta \vdash^{-} \phi$ holds if and only if there is such a deduction ending with an application of a refutation rule instead.

\begin{figure*}[htbp]
%\pagenumbering{gobble}
    \centering
    \scriptsize  % makes all proof trees smaller
    \setlength{\tabcolsep}{3pt}
    \renewcommand{\arraystretch}{0.8}

    % ===========================
    % (1) N2Int
    % ===========================
    \begin{subfigure}[t]{1\textwidth}
        \centering
        \vspace{-0.8em}
\begin{prooftree}
\AxiomC{$\Gamma, [\phi]; \Delta$}
\noLine
\UnaryInfC{$\Pi$}
\UnaryInfC{$\psi$}
\RightLabel{\tiny{$I \to (+)$}}
\UnaryInfC{$\phi \to \psi$}
\DisplayProof
\qquad
\AxiomC{$\Gamma_1 ; \Delta_1$}
\noLine
\UnaryInfC{$\Pi_1$}
\UnaryInfC{$\phi \to \psi$}
\AxiomC{$\Gamma_2; \Delta_{2}$}
\noLine
\UnaryInfC{$\Pi_2$}
\UnaryInfC{$\phi$}
\RightLabel{\tiny{$E \to (+)$}}
\BinaryInfC{$\psi$}
\DisplayProof
\qquad
\AxiomC{$\Gamma; \Delta$}
\noLine
\UnaryInfC{$\Pi$}
\UnaryInfC{$\phi$}
\RightLabel{\tiny{$I_{1} \vee (+)$}}
\UnaryInfC{$\phi \vee \psi$}
\end{prooftree}
\begin{prooftree}
\AxiomC{$\Gamma; \Delta$}
\noLine
\UnaryInfC{$\Pi$}
\UnaryInfC{$\psi$}
\RightLabel{\tiny{$I_{2} \vee (+)$}}
\UnaryInfC{$\phi \vee \psi$}
\DisplayProof
\qquad
\AxiomC{$\Gamma_1; \Delta_{1}$}
\noLine
\UnaryInfC{$\Pi_1$}
\UnaryInfC{$\phi$}
\AxiomC{$\Gamma_2; \Delta_{2}$}
\noLine
\UnaryInfC{$\Pi_2$}
\UnaryInfC{$\psi$}
\RightLabel{\tiny{$I \wedge (+)$}}
\BinaryInfC{$\phi \wedge \psi$}
\DisplayProof
\quad
\AxiomC{$\Gamma; \Delta$}
\noLine
\UnaryInfC{$\Pi$}
\UnaryInfC{$\phi \wedge \psi$}
\RightLabel{\tiny{$ E_{1} \wedge (+)$}}
\UnaryInfC{$\phi$}
\DisplayProof
\qquad
\AxiomC{$\Gamma; \Delta$}
\noLine
\UnaryInfC{$\Pi$}
\UnaryInfC{$\phi \wedge \psi$}
\RightLabel{\tiny{$E_{2} \wedge (+)$}}
\UnaryInfC{$\psi$}
\end{prooftree}

\begin{prooftree}
\AxiomC{$\Gamma_{1}; \Delta_{1}$}
\noLine
\UnaryInfC{$\Pi_{1}$}
\UnaryInfC{$\phi$}
\AxiomC{$\Gamma_{2}; \Delta_{2}$}
\noLine
\UnaryInfC{$\Pi_{2}$}
\doubleLine
\UnaryInfC{$\psi$}
\RightLabel{\tiny{$I \mapsfrom (+)$}}
\BinaryInfC{$\phi \mapsfrom \psi$}
\DisplayProof
\quad
\AxiomC{$\Gamma; \Delta$}
\noLine
\UnaryInfC{$\Pi$}
\UnaryInfC{$\phi \mapsfrom \psi$}
\RightLabel{\tiny{$ E_{1} \mapsfrom (+)$}}
\UnaryInfC{$\phi$}
\DisplayProof
\quad
\AxiomC{$\Gamma; \Delta$}
\noLine
\UnaryInfC{$\Pi$}
\UnaryInfC{$\phi \mapsfrom \psi$}
\doubleLine
\RightLabel{\tiny{$ E_{2} \mapsfrom (+)$}}
\UnaryInfC{$\psi$}
\end{prooftree}

\begin{prooftree}
\AxiomC{$\Gamma; \Delta, \llbracket \psi \rrbracket$}
\noLine
\UnaryInfC{$\Pi$}
\doubleLine
\UnaryInfC{$\phi$}
\doubleLine
\RightLabel{\tiny{$I \mapsfrom (-)$}}
\UnaryInfC{$\phi \mapsfrom \psi$}
\DisplayProof
\qquad
\AxiomC{$\Gamma_1 ; \Delta_1$}
\noLine
\UnaryInfC{$\Pi_1$}
\doubleLine
\UnaryInfC{$\phi \mapsfrom \psi$}
\AxiomC{$\Gamma_2; \Delta_{2}$}
\noLine
\UnaryInfC{$\Pi_2$}
\doubleLine
\UnaryInfC{$\psi$}
\RightLabel{\tiny{$E \mapsfrom (-)$}}
\doubleLine
\BinaryInfC{$\phi$}
\DisplayProof
\quad
\AxiomC{$\Gamma; \Delta$}
\noLine
\UnaryInfC{$\Pi$}
\doubleLine
\UnaryInfC{$\phi$}
\RightLabel{\tiny{$ I_{1} \wedge (-)$}}
\doubleLine
\UnaryInfC{$\phi \wedge \psi$}
\DisplayProof
\quad
\AxiomC{$\Gamma; \Delta$}
\noLine
\UnaryInfC{$\Pi$}
\doubleLine
\UnaryInfC{$\psi$}
\RightLabel{\tiny{$I_{2} \wedge (-)$}}
\doubleLine
\UnaryInfC{$\phi \wedge \psi$}
\end{prooftree}

\begin{prooftree}
\AxiomC{$\Gamma_1; \Delta_{1}$}
\noLine
\UnaryInfC{$\Pi_1$}
\doubleLine
\UnaryInfC{$\phi$}
\AxiomC{$\Gamma_2; \Delta_{2}$}
\noLine
\UnaryInfC{$\Pi_2$}
\doubleLine
\UnaryInfC{$\psi$}
\RightLabel{\tiny{$I \vee (-)$}}
\doubleLine
\BinaryInfC{$\phi \vee \psi$}
\DisplayProof
\quad
\AxiomC{$\Gamma; \Delta$}
\noLine
\UnaryInfC{$\Pi$}
\doubleLine
\UnaryInfC{$\phi \vee \psi$}
\RightLabel{\tiny{$E_{1} \vee (-)$}}
\doubleLine
\UnaryInfC{$\phi$}
\DisplayProof
\qquad
\AxiomC{$\Gamma; \Delta$}
\noLine
\UnaryInfC{$\Pi$}
\doubleLine
\UnaryInfC{$\phi \vee \psi$}
\RightLabel{\tiny{$E_{2} \vee (-)$}}
\doubleLine
\UnaryInfC{$\psi$}
\end{prooftree}

\begin{prooftree}
\AxiomC{$\Gamma_{1}; \Delta_{1}$}
\noLine
\UnaryInfC{$\Pi_{1}$}
\UnaryInfC{$\phi$}
\AxiomC{$\Gamma_{2}; \Delta_{2}$}
\noLine
\UnaryInfC{$\Pi_{2}$}
\doubleLine
\UnaryInfC{$\psi$}
\RightLabel{\tiny{$I\to (-)$}}
\doubleLine
\BinaryInfC{$\phi \to \psi$}
\DisplayProof
\quad
\AxiomC{$\Gamma; \Delta$}
\noLine
\UnaryInfC{$\Pi$}
\doubleLine
\UnaryInfC{$\phi \to \psi$}
\RightLabel{\tiny{$E_{1} \to (-)$}}
\UnaryInfC{$\phi$}
\DisplayProof
\quad
\AxiomC{$\Gamma; \Delta$}
\noLine
\UnaryInfC{$\Pi$}
\doubleLine
\UnaryInfC{$\phi \to \psi$}
\doubleLine
\RightLabel{\tiny{$E_{2} \to (-)$}}
\UnaryInfC{$\psi$}
\end{prooftree}

        \caption{System $\BintN$.}
        \label{fig:BintN}
    \end{subfigure}
    \hfill
    \\

   \begin{subfigure}[t]{1\textwidth}
        \centering

\begin{prooftree}
\quad
\AxiomC{$\Gamma_1; \Delta_{1}$}
\noLine
\UnaryInfC{$\Pi_1$}
\UnaryInfC{$\phi \vee \psi$}
\AxiomC{$\Gamma_2, [\phi]; \Delta_{2}$}
\noLine
\UnaryInfC{$\Pi_2$}
\dottedLine
\UnaryInfC{$\chi$}
\AxiomC{$\Gamma_3, [\psi]; \Delta_{3}$}
\noLine
\UnaryInfC{$\Pi_3$}
\dottedLine
\UnaryInfC{$\chi$}
\RightLabel{\tiny{$E\vee (+)$}}
\dottedLine
\TrinaryInfC{$\chi$}
\DisplayProof
\,
%\end{prooftree}
%
%\begin{prooftree}
\AxiomC{$\Gamma_1; \Delta_{1}$}
\noLine
\UnaryInfC{$\Pi_1$}
\doubleLine
\UnaryInfC{$\phi \wedge \psi$}
\AxiomC{$\Gamma_2; \Delta_{2},  \llbracket \phi \rrbracket$}
\noLine
\UnaryInfC{$\Pi_2$}
\dottedLine
\UnaryInfC{$\chi$}
\AxiomC{$\Gamma_3; \Delta_{3}, \llbracket \psi \rrbracket$}
\noLine
\UnaryInfC{$\Pi_3$}
\dottedLine
\UnaryInfC{$\chi$}
\RightLabel{\tiny{$ E \wedge (-)$}}
\dottedLine
\TrinaryInfC{$\chi$}
\end{prooftree}

\begin{prooftree}
%\AxiomC{}
%\noLine
%\UnaryInfC{}
%\noLine
%\UnaryInfC{}
%\noLine
%\UnaryInfC{}
%\noLine
%\UnaryInfC{}
%\noLine
%\UnaryInfC{}
%\RightLabel{\tiny{$ \top (+)$}}
%\UnaryInfC{$\top$}
%\DisplayProof
  \AxiomC{$\Gamma; \Delta$}
\noLine
\UnaryInfC{$\Pi$}
\UnaryInfC{$\bot$}
\RightLabel{\tiny{$ \bot(+)$}}
\dottedLine
\UnaryInfC{$\phi$}
\DisplayProof
\qquad
%\AxiomC{}
%\noLine
%\UnaryInfC{}
%\noLine
%\UnaryInfC{}
%\noLine
%\UnaryInfC{}
%\noLine
%\UnaryInfC{}
%\noLine
%\UnaryInfC{}
%\doubleLine
%\RightLabel{\tiny{$ \bot (-)$}}
%\UnaryInfC{$\bot$}
%\DisplayProof
%\qquad
    \AxiomC{$\Gamma; \Delta$}
\noLine
\UnaryInfC{$\Pi$}
\doubleLine
\UnaryInfC{$\top$}
\RightLabel{\tiny{$ \top (-)$}}
\dottedLine
\UnaryInfC{$\phi$}
\end{prooftree}

        \caption{Rules added to $\BintN$ in order to obtain the System $\BintNs$.}
        \label{fig:BintNs}
    \end{subfigure}
    \hfill
    \\
    
    % ===========================
    % (2) BPR
    % ===========================
    \begin{subfigure}[t]{1\textwidth}
        \centering
        %\vspace{-0.8em}

    \bigskip
        
\begin{prooftree}
\AxiomC{$\Gamma_1; \Delta_1$}
\noLine
\UnaryInfC{$\Pi_1$}
    \UnaryInfC{$\phi$}
    \AxiomC{$\Gamma_2; \Delta_2$}
    \noLine
\UnaryInfC{$\Pi_2$}
    \doubleLine
    \UnaryInfC{$\phi$}
    \dottedLine
    \RightLabel{\tiny{$PR$}}
    \BinaryInfC{$\psi$}
\end{prooftree}

        \caption{Proofs and refutations rule.}
        \label{fig:BPR}
    \end{subfigure}

    % ===========================
    % Global caption
    % ===========================
    \caption{System $\BPR$}
    \label{fig:SystemBPR}
\end{figure*}

%Before presenting the normalisation theorem, we comment on some of the design choices underlying our systems. In $\BintN$, the absence of the proof-theoretic and refutational versions of the rules mentioned above prevents direct proofs of normalisation and of the subformula property -- see Appendix~\ref{sec:comp} for a more detailed discussion. As shown in~\cite{Wan110.1093/logcom/ext035}, normalisation for $\BintN$ is instead established indirectly, by embedding its derivations into natural deduction for intuitionistic logic and then appealing to Prawitz's normalisation theorem~\cite{prawitz1965}. Similar modifications were introduced in~\cite{Ayhan-seq} to obtain cut elimination and the subformula property for the sequent system $\mathsf{SC2Int}$.

At this point, one may reasonably ask: how do these systems capture Nelson's strong negation, if the operator $\nn\ $  is not part of the grammar? The next result, presented in~\cite{oddsson2025strongnegationdefinable2int}, shows that extending $\BintN$ (and hence $\BPR$) with strong negation does not increase its expressive power. 
\begin{proposition}\label{ex:der}  Strong negation $\nn\phi$ is definable in $\BintN/\BPR$ as $(\phi \land (\phi \to (\phi \mapsfrom \phi))) \lor ((\phi \to \phi) \mapsfrom \phi)$. Moreover, the following rules are derivable in $\BintN/\BPR$: 
\vspace{-0.2cm}
\begin{prooftree}
    \AxiomC{$\Gamma; \Delta$}
\noLine
\UnaryInfC{$\Pi$}
    \doubleLine
    \UnaryInfC{$\phi$}
    \RightLabel{\tiny{$I \sim (+)$}}
    \UnaryInfC{$\sim \phi$}
    \DisplayProof
    \qquad
     \AxiomC{$\Gamma; \Delta$}
\noLine
\UnaryInfC{$\Pi$}
    \UnaryInfC{$\sim \phi$}
       \RightLabel{\tiny{$E \sim (+)$}}
    \doubleLine
    \UnaryInfC{$\phi$}
    \DisplayProof
    \qquad
       \AxiomC{$\Gamma; \Delta$}
\noLine
\UnaryInfC{$\Pi$}
    \UnaryInfC{$\phi$}
    \RightLabel{\tiny{$I \sim (-)$}}
    \doubleLine
    \UnaryInfC{$\sim \phi$}
    \DisplayProof
    \qquad
     \AxiomC{$\Gamma; \Delta$}
\noLine
\UnaryInfC{$\Pi$}
    \doubleLine
    \UnaryInfC{$\sim \phi$}
       \RightLabel{\tiny{$E \sim (-)$}}
    \UnaryInfC{$\phi$}
\end{prooftree}

\end{proposition}
Now, this brings important consequences.
In fact, the use of a framework in which we have both a weak falsity operator $\neg \phi$ and a concept of refutation equivalent to the strong falsity operator $\nn\phi$ also allows the study of their interactions in more detail. In $\BPR$ we can easily conclude that from a refutation of $\phi$ follows a constructive proof of the negation $\phi\to\bot$, as well as that from a proof of $\phi$ follows a refutation of the co-negation $\top\mapsfrom\phi$: %\footnote{Observe that these inferences are not valid in $\Bint$. In fact, it is straightforward to use the semantic framework in \cite{barrosonascimento2025bilateralbaseextensionsemantics} to provide a counterexample in $\Bint$: If a base $\mathcal{B}$ contains a proof rule concluding $p$ from no premises, a refutation rule concluding $p$ from no premises and no other rules, it is easy to check that $\Vdash^{+}_{\mathcal{B}} p$ and $\Vdash^{-}_{\mathcal{B}} p$ but $\nVdash^{+}_{\mathcal{B}} p \to \bot$ and $\nVdash^{-}_{\mathcal{B}} \top \to p$. This is not a counterexample in our semantics for $\BPR$ since $\mathcal{B}$ would be inconsistent.}: %(represented by $- \phi$):
\vspace{-0.2cm}
\begin{prooftree}
    \AxiomC{}
    \RightLabel{\scriptsize{$1$}}
    \UnaryInfC{$\phi$}
    \AxiomC{}
    \doubleLine
    \UnaryInfC{$\phi$}
    \RightLabel{\scriptsize{PR(+)}}
    \BinaryInfC{$\bot$}
    \RightLabel{$1$}
    \UnaryInfC{$\phi \to \bot$}
    \DisplayProof
    \qquad
    \qquad
        \AxiomC{}
    \UnaryInfC{$\phi$}
    \AxiomC{}
    \doubleLine
        \RightLabel{\scriptsize{$1$}}
    \UnaryInfC{$\phi$}
    \doubleLine
      \RightLabel{\scriptsize{PR(-)}}
    \BinaryInfC{$\top$}
    \RightLabel{$1$}
        \doubleLine
    \UnaryInfC{$\top \mapsfrom \phi$}
\end{prooftree}
\vspace{-0.1cm}
%These deductions  due to their use of $PR(+)$ and $PR(-)$. 
%From the semantics of $\to$ and $\mapsfrom$ together with the consistency constraints, i
It follows that there is a (constructive) proof of $\phi \to \bot$ iff there is no proof of $\phi$ and a (constructive) refutation of $\top \mapsfrom \phi$ iff there is no refutation of $\phi$. Hence those deductions are essentially showing that {\em strong falsity} implies {\em weak falsity} and {\em strong provability} implies {\em weak provability} (that is, absence of refutations). 
Note that these inferences are not valid in $\BintN$--this is a feature specific to $\BPR$.

At the end of Section~\ref{sec:sc} we prove that the converse does not hold; that is, weak falsity does not imply strong falsity in $\BPR$.

\section{Normalisation and the subformula principle}\label{sec:pt}

%\subsection{Normalisation and the subformula property}

In this section we prove the normalisation theorem for  $\BPR$. The aim is to show that if there exists a proof (respectively, a refutation) of a formula 
$\phi$ from a finite set of assumptions $\Theta$ and contra-assumptions $\Delta$ (that is, a set of premises $\Gamma = \Delta \cup \Theta$), then there exists a derivation of the same kind in {\em normal form}.

The proof of the normalisation theorem yields a systematic procedure for eliminating detours, that is, redundant segments of derivations that do not contribute to their essential logical structure. In standard natural deduction, such detours arise from consecutive introduction and elimination steps for the same connective. In our setting, however, they also encompass {\em non-trivial} proof-refutation interactions and other mixed rule combinations, revealing a richer and more intricate notion of redundancy.

Once normalisation is established, we prove that the subformula property holds: every formula occurring in a normal derivation is a subformula of either an undischarged assumption or the final conclusion. This result is not evident, as it usually does not follow directly from the normalisation result. 
Indeed, consider the following derivation of $\top$ from $\bot ; \emptyset$ in $\BintN$:
\vspace{-0.4cm}
\begin{prooftree}
\AxiomC{}
\UnaryInfC{$\bot$}
\RightLabel{\tiny{$\bot(+)$}}
\UnaryInfC{$\top \mapsfrom \top$}
\doubleLine
\RightLabel{\tiny{$E_{2} \mapsfrom (+)$}}
\UnaryInfC{$\top$}
\end{prooftree}
\vspace{-0.2cm}
Since $\top \mapsfrom \top$ is neither a subformula of any open premise nor of the conclusion, this derivation violates the subformula property, even though a different form of normalisation still holds for $\BintN$\footnote{In fact, normalisation for $\BintN$ is established indirectly, by embedding its derivations into natural deduction for intuitionistic logic and then appealing to Prawitz's normalisation theorem~\cite{prawitz1965}. On the other hand, in~\cite{Ayhan-seq}  cut-elimination and the subformula property are proved directly for the {\em sequent system} $\mathsf{SC2Int}$.}.

Finally, as a further consequence, we obtain consistency: $\bot$ is not provable and $\top$ is not refutable in $\BPR$.

\begin{definition}
%    The {\em degree} of a formula is defined as the number of logical connectives different from $\bot$ and $\top$ occurring on it.
The {\em degree of a formula $\phi$}, denoted $d[\phi]$, is defined inductively by
$$
\begin{array}{lcll}
d[p] &=& 0 & \mbox{ for }  p\in\At\\
d[\phi \circ \psi] &=& d[\phi] + d[\psi] + 1 &\mbox{ for } \circ\in\{\to, \mapsfrom,\vee,\wedge\}
\end{array}
$$
\end{definition}

\begin{definition}\label{def:complexityvalueatoms} Let $\Pi$ be a derivation in $\BPR$, $d$ be the greatest degree of a formula that occurs as premise or conclusion of an instance of $PR$, $\bot (+)$ or $\top (-)$ in $\Pi$, $n$ the number of occurrences of formulas with degree $d$ as premises of applications of $PR$, $\bot (+)$ or $\top (-)$ in $\Pi$, and $m$ the number of occurrences of formulas with degree $d$ as conclusions of those same rules. 
The {\em complexity value} of $\Pi$ is defined as the pair $\langle d, n + m \rangle$. We say that $\langle d',  n' + m' \rangle <  \langle d, n+m \rangle$ holds if and only if either $d' < d$ or both $d' = d$ and $n'+m' < n +m $.
\end{definition}
Notice that, by defining the second component of the complexity value as a sum, %we make it so that 
formula occurrences that are simultaneously a premise and a conclusion of the relevant rules are counted twice -- this is essential for proving Lemma~\ref{atomicpremises}.
%with our rules for assertion of disjunction and refutation of conjunction (which are the traditional ones). 
%This is not necessary in the similar proof for $\mathbf{HB_{2}}$ \cite[pg. 396-397]{Valle-Inclan2023-VALHAN}, but only because the rules for such connectives are changed.}

The following lemmas play a crucial role in the normalisation results based on Prawitz's techniques. They establish that certain rule applications can be restricted to the atomic case and that consecutive applications of them can be deleted. The proofs are given in the Appendix~\ref{app:sec3}.

\begin{lemma}\label{atomicpremises}
    Any deduction $\Pi$ with conclusion $\phi$ and premises $\Gamma$ can be reduced to a deduction $\Pi'$ with conclusion $\phi$ and premises $\Gamma$ such that all instances of $PR$, $\bot (+)$ and $\top (-)$  have premises and conclusions in $\At$.
\end{lemma}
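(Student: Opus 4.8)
The plan is to argue by well-founded induction on the complexity value $\langle d, n\rangle$. If $d = 0$ there is nothing to do, since every premise and conclusion of an instance of $PR(+)$, $PR(-)$, $\bot(+)$, $\top(-)$ is already atomic. If $d > 0$, I would select a \emph{topmost} offending instance $R$: one of the four rules having a premise or conclusion of degree exactly $d$, chosen so that none of the subderivations feeding the premises of $R$ contains another instance of the four rules with a premise or conclusion of degree $d$. Such an $R$ exists because $\Pi$ is finite and the subderivation ordering is well-founded. I would then rewrite $\Pi$ locally around $R$ so as to strictly decrease $\langle d, n\rangle$ and appeal to the induction hypothesis.

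The local rewrites come in two families. If $R$ has a \emph{premise} of degree $d$ — necessarily a $PR$-instance, since the premises of $\bot(+)$ and $\top(-)$ are the constants $\bot,\top$, which are atomic — I eliminate that premise by one unfolding step. Writing $\phi=\alpha\circ\beta$ for the premise formula, I use the elimination rules for $\circ$ to turn the given proof and refutation of $\phi$ into a proof and a refutation of a proper subformula, and then reapply the same $PR$ rule. For $\circ={\to}$, for instance, $E_{1}{\to}(-)$ and $E_{2}{\to}(-)$ applied to the refutation of $\alpha\to\beta$ yield a proof of $\alpha$ and a refutation of $\beta$, while $E{\to}(+)$ applied to the proof of $\alpha\to\beta$ and the proof of $\alpha$ yields a proof of $\beta$; applying $PR$ to the proof and refutation of $\beta$ gives a new instance whose premise has degree $<d$. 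The cases $\circ\in\{\mapsfrom,\vee,\wedge\}$ are symmetric, using $E{\mapsfrom}(-)$ and the case analyses $E\vee(+)$ and $E\wedge(-)$ for the disjunctive refutations of $\vee$ and $\wedge$; in every case the conclusion of $R$, together with its proof/refutation status, is preserved. If instead $R$ has only a degree-$d$ \emph{conclusion} $\psi=\alpha\circ\beta$, I push the rule through the matching introduction rule: e.g.\ for a proof-conclusion with $\circ={\wedge}$, derive $\alpha$ and $\beta$ each by a copy of $R$ and recombine with $I\wedge(+)$, and dually for the other connectives, choosing the proof- or refutation-version of the introduction rule according to the dotted line of the conclusion. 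Each such step replaces a degree-$d$ conclusion by conclusions of degree $<d$.

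The delicate point, and the main obstacle, is to guarantee that these rewrites genuinely decrease the measure rather than merely relocate complexity: both families duplicate some of the subderivations standing above $R$ (the refutation branch in the $\to$-premise case, the premise derivations in the $\wedge$-conclusion case, and so on). This is exactly what the topmost choice of $R$ controls. Since no degree-$d$ instance of the four rules occurs above $R$, each duplicated subderivation contributes only occurrences of degree $<d$; none of the introduction or elimination rules used in the rewrite is among the four counted rules; and no formula of degree $>d$ is ever created. Hence the rewrite deletes the degree-$d$ premise occurrences (respectively the degree-$d$ conclusion occurrence) of $R$ while adding none, so $n$ strictly decreases with $d$ fixed, and once the last degree-$d$ occurrence disappears $d$ itself drops. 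In either case $\langle d,n\rangle$ decreases. Because every rewrite is premise- and conclusion-preserving and introduces only vacuous discharges, the resulting derivation still concludes $\phi$ from the same open premises $\Gamma$, and by the induction hypothesis it reduces further to the required $\Pi'$.
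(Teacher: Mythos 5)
Your strategy is the same as the paper's: induction on the complexity value $\langle d,n\rangle$, selection of a topmost offending instance, elimination-rule unfoldings for degree-$d$ premises of $PR$ rules, and introduction-rule push-throughs for degree-$d$ conclusions; your $\to$ unfolding is literally the paper's reduction. The genuine gap is in the step you yourself single out as the delicate point. You claim that a premise-rewrite ``deletes the degree-$d$ premise occurrences of $R$ while adding none,'' and that the topmost choice of $R$ controls all duplication. That is correct for $\to$ and $\mapsfrom$, where the rewrite ends in a single new $PR$ application, but it fails for $\wedge$ and $\vee$: there the rewrite must consume the refutation of $\alpha\wedge\beta$ (resp.\ the proof of $\alpha\vee\beta$) by the case-analysis rule $E\wedge(-)$ (resp.\ $E\vee(+)$), and each of the two minor-premise branches ends in its own copy of the $PR$ rule concluding $\sigma$, the original conclusion of $R$. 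Thus the conclusion of $R$ is duplicated, and since $\sigma$ is an arbitrary formula it may itself have degree $d$; in that case the rewrite genuinely does add a new degree-$d$ occurrence (a second $\sigma$-conclusion of a $PR$ instance). The topmost choice of $R$ only controls duplication of material standing \emph{above} $R$; it says nothing about this duplication of $R$'s own conclusion, so your accounting is wrong exactly where the argument needs care.

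The gap is local and repairable, and the paper repairs it by counting rather than by a ``no new occurrences'' claim: before the rewrite the instance $R$ contributes three degree-$d$ occurrences in the worst case (its two premises $\alpha\wedge\beta$ and its conclusion $\sigma$), while afterwards there are only two (the two copies of $\sigma$); if instead $\sigma$ has degree less than $d$, the count drops from two to zero. Either way $n$ strictly decreases with $d$ fixed, and with this corrected bookkeeping the rest of your induction goes through as written.
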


\begin{lemma}\label{MariaLemma}
    Any deduction $\Pi$ showing $\Gamma; \Delta \vdash^{*} \phi$ can be reduced to a deduction $\Pi'$ showing $\Gamma' ; \Delta' \vdash^{*} \phi$ for $\Gamma' \subseteq \Gamma$ and $\Delta' \subseteq \Delta$ such that no formula occurrence is both the conclusion of an application of $\bot (+)$, $\top(-)$ or $PR$ and the premise of an application of $\bot (+)$, $\top(-)$ or $PR$.
\end{lemma}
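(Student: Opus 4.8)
The plan is to argue by induction on the number $k$ of applications of the four ``explosive'' rules $\bot(+)$, $\top(-)$, $PR(+)$ and $PR(-)$ occurring in $\Pi$, exhibiting at each step a local reduction that strictly decreases $k$ while never enlarging the undischarged assumptions or counterassumptions. By Lemma~\ref{atomicpremises} we may assume at the outset that every such rule application already has atomic premises and conclusion; the reduction below will be seen to preserve this feature, so both properties can be arranged to hold simultaneously of the final derivation.

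First I would isolate the offending configuration. Suppose some formula occurrence $\chi$ is simultaneously the conclusion of an application $R_1$ and a premise of a distinct application $R_2$, both drawn from $\{\bot(+),\top(-),PR(+),PR(-)\}$; otherwise $\Pi$ is already in the required form. Write $\psi$ for the conclusion of $R_2$ and $s\in\{+,-\}$ for its polarity (proof or refutation). The crucial observation is that each of the four rules derives an \emph{arbitrary} conclusion, of the appropriate polarity, from a fixed stock of premises: $\bot(+)$ needs only a proof of $\bot$, $\top(-)$ only a refutation of $\top$, and each of $PR(+)$, $PR(-)$ a proof together with a refutation of one and the same formula. Hence the premises already sitting above $R_1$ suffice to reach $\psi$ with polarity $s$ directly, by a single explosive application $R'$: if $R_1$ is $\bot(+)$ (resp. $\top(-)$) we reapply it with the dotted line set to $s$; if $R_1$ is $PR(+)$ or $PR(-)$ we apply $PR(+)$ when $s={+}$ and $PR(-)$ when $s={-}$ to the very proof--refutation pair feeding $R_1$.

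The reduction then replaces the subderivation ending in $R_2$ by this single application $R'$, discarding the detour through $\chi$ together with the subderivations feeding the remaining premise slots of $R_2$. Since we only delete subderivations, the undischarged assumptions and counterassumptions can only shrink, yielding $\Gamma'\subseteq\Gamma$ and $\Delta'\subseteq\Delta$; and since the premises of $R'$ are those of $R_1$ and $\psi$ was the conclusion of $R_2$, atomicity is preserved. Each step deletes the two applications $R_1,R_2$ and introduces the single application $R'$, so $k$ strictly decreases. Iterating until $k$ can no longer be lowered produces the desired $\Pi'$.

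I expect the only delicate points to be bookkeeping rather than conceptual. One must check that the polarity of $\chi$ produced by $R_1$ matches the slot of $R_2$ it fills, which is automatic since a single occurrence carries a single polarity and merely rules out certain pairings (such as $PR(-)$ feeding the $\bot$-premise of $\bot(+)$). One must also confirm that new adjacencies possibly created above $R'$ (from its premises) or below it (from $\psi$) do not disrupt the argument; they do not, because the measure $k$ counts applications globally and has strictly decreased, so these are simply absorbed into later stages of the induction.
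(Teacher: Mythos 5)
Your proposal is correct and follows essentially the same route as the paper's proof: both collapse each offending pair $R_1,R_2$ into a single application $R'$ drawing on $R_1$'s premises (switching $PR(+)$ to $PR(-)$ or vice versa to match the polarity of $R_2$'s conclusion), discard the subderivations feeding $R_2$'s other premise slots (which is why $\Gamma'\subseteq\Gamma$, $\Delta'\subseteq\Delta$), and descend on a counting measure. The only inessential difference is the measure itself — the paper counts formula occurrences that are simultaneously conclusion and premise of the four rules, while you count all applications of those rules globally — and both strictly decrease under the same reduction.
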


An interesting feature of the proof of Lemma \ref{atomicpremises} is that it essentially relies on the duality (or harmony) of proof and refutation rules for the logical operators. 
%\cyan{It is also  a variant of the proof of the same result for $\mathbf{HB_{2}}$ in \cite[pg. 396-397]{Valle-Inclan2023-VALHAN}. The main difference between both proofs is that $\BPR$, unlike $\mathbf{HB_{2}}$, uses the traditional rules for assertion of disjunction and refutation of conjunction, which makes it necessary to change the inductive basis (cf. Definition \ref{def:complexityvalueatoms}) and the proof steps.} 
A proof-theoretic analysis of this bilateral concept of harmony, which could in principle be used to extend the result to other intuitionistic bilateral logics\footnote{Those rules do not always preserve classical principles, as shown in \cite{FernandoFerreiraBilateral}.}, is carried out in \cite{francez2014bilateralism}. A semantic counterpart of bilateral harmony can also be found in the weak and strong semantic harmony results proved in \cite{barrosonascimento2025bilateralbaseextensionsemantics}.

Our normalisation proof adapts Prawitz's strategy for intuitionistic logic~\cite{prawitz1965} to the bilateralist setting. Some modifications are required, however, since the $PR$ rule introduces new complications that may threaten the subformula property. To illustrate this, consider the following deduction: 
%Our normalisation proof adapts Prawitz's strategy for intuitionistic logic \cite{prawitz1965} to the bilateralist setting. The proof has to be changed slighly because the $PR (+)$ and $PR (-)$ rules bring about new difficulties which can affect satisfaction of the subformula principle. The main change consists in the fact that we first apply reductions so as to make the premises and conclusions of any application of $\bot (+)$, $\top (-)$, $PR (+)$ and $PR (-)$ atomic in order to obtain deductions in what we call pre-normal form and only then reduce the deductions to normal form. We also adapt the inductive proof by showing the result through induction on the degree of pre-normal form deductions instead of the starting deductions. Unlike in our definitions, Prawitz lets maximal segments start with application of the rule we call $\bot (+)$ and only reduce their degrees when the consequence of the application occurs in a maximal segment. This change of the inductive structure is optional, as we could also obtain a thoroughly Prawitz-style proof by allowing maximal segments to start with conclusions of $PR (+)$, $PR(-)$, $\bot (+)$ and $\top (-)$. However, we would also need to allow the last formula occurrence of a maximal segment to be a premise of an application of $PR (+)$ or $PR (-)$ if we want to prove the subformula principle. To see why, consider the following deduction:
\vspace{-0.4cm}
\begin{prooftree}
    \AxiomC{}
    \UnaryInfC{$\phi$}
    \AxiomC{}
    \UnaryInfC{$\psi$}
    \RightLabel{\scriptsize{$\land I (+)$}}
    \BinaryInfC{$\phi \land \psi$}
    \AxiomC{}
    \doubleLine
    \UnaryInfC{$\phi$}
    \doubleLine
       \RightLabel{\scriptsize{$\land I (-)$}}
    \UnaryInfC{$\phi \land \psi$}
      \RightLabel{\scriptsize{$PR$}}
    \BinaryInfC{$\chi$}
\end{prooftree}
\vspace{-0.2cm}
Since $\phi \land \psi$ is a subformula of neither the premises nor the conclusion of this deduction, establishing the subformula property requires reductions specific to the bilateralist setting. Such reductions must show either that every deduction can be transformed into one in which no premise of a $PR$ application is the conclusion of an introduction rule, or--following the approach we adopt--that all such premises can be reduced to atomic form.

The main difference in our normalisation proof is that we first apply reductions so as to make the premises and conclusions of any application of $\bot (+)$, $\top (-)$, $PR$ {\em atomic} in order to obtain deductions in what we call atomic normal form, then reducing the deductions to normal form and simplifying them further.

The following definitions are standard~\cite{prawitz1965}.

\begin{definition}
    A {\em maximal segment} in a deduction $\Pi$ is a sequence $\phi_{1}, \ldots, \phi_{n}$ with at least one element such that $\phi_{1}$ is the conclusion of a $I$-rule, $\phi_{n}$ is the major premise of an $E$-rule and, for all $1 \leq i < n$ (if any), $\phi_{i}$ is the minor premise of $\lor (+)$ or $\land (-)$.
    
\noindent
A maximal segment $\alpha$ is said to be {\em above a formula occurrence $\beta$ in a deduction} iff all formula occurrences in $\alpha$ are above the formula occurrence $\beta$.

%\end{definition}
%As usual, notice that $\phi_{1}, \ldots , \phi_{n}$ are all formula occurrences of the same shape. \Elaine{Did not get this}
\noindent
%\begin{definition}
The {\em degree} of a maximal segment is the degree of the formula $\phi_{1}$. The {\em length} of a maximal segment is the number of formula occurrences in it.
\end{definition}

  %  The {\em degree} of a maximal segment in a deduction $\Pi$ is a pair $\langle d, l  \rangle$ such that $d$ is the degree of $\phi_{1}$ and $l$ is the length of the segment. We say that $\langle d, l \rangle < \langle d', l' \rangle$ whenever $d < d'$ or $d = d'$ and $l < l'$.

    %$d$ is called the segment's {\em complexity}, $l$  We also define $\langle d, l \rangle < \langle d', l' \rangle$ as holding whenever either $d < d'$ or $d = d'$ and $l < l'$.

\begin{definition}
    A deduction is in {\em atomic form} if all conclusions and premises of applications of $\bot (+)$, $\top (-)$ and $PR$ have degree $0$.
\end{definition}

\begin{definition}
The {\em inductive value} of a deduction $\Pi$ is a pair $\langle d, s \rangle$, where $m$ is the highest degree of a maximal segment in $\Pi$ and $s$ is the sum of the length of all maximal segments with degree $d$. We define $\langle d , s \rangle < \langle d', s' \rangle$ as holding whenever $d < d'$ or $d = d'$ and $s < s'$.

\noindent
A deduction is in {\em normal form} if it is in atomic form with inductive value $\langle 0, 0 \rangle$.
%\end{definition}
\noindent
%\begin{definition}
A deduction is in {\em simplified normal form} if it is in normal form and no formula occurrence is both the premise of an application of $PR$, $\bot (+)$ and $\top (-)$ and a conclusion of a rule with one of those shapes.
\end{definition}
We are now ready to establish normalisation for $\BPR$; the full proof is given in Appendix~\ref{app:sec3}. Our strategy proceeds in three stages. First, we transform the initial deduction into atomic form using the crucial Lemma~\ref{atomicpremises}. Next, we obtain a normal deduction by adapting Prawitz's normalisation strategy to the bilateral setting. Finally, we simplify the resulting deduction by means of Lemma~\ref{MariaLemma}. These simplifications are {\em non trivial} and required in order to secure the subformula property. And this order is {\em essential}: while normalisation procedures preserve atomic form and simplifications preserve normal form, reductions to atomic form may generate new maximal segments, and normalisation may introduce new formula occurrences that must subsequently be eliminated by simplification.

\begin{theorem}\label{thm:normalisation}
    Every deduction $\Pi$  showing $\Gamma ; \Delta \vdash^{*} \phi$ can be reduced to a deduction showing $\Gamma ; \Delta \vdash^{*} \phi$ which is in simplified normal form.
\end{theorem}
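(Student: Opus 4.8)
The plan is to carry out the two-phase strategy announced before the statement. First I would invoke Lemma~\ref{lemma:red} to replace $\Pi$ by a pre-normal form deduction $\Pi_0$ witnessing the same $\Gamma;\Delta\vdash^{*}\phi$ (Lemmas~\ref{atomicpremises} and~\ref{MariaLemma} preserve the conclusion, the premises up to the subset convention, and the kind $*$ of the final rule). In $\Pi_0$ every premise and conclusion of an application of $\bot(+)$, $\top(-)$, $PR(+)$ or $PR(-)$ is atomic, and no such occurrence is at once the conclusion and a premise of a rule of one of these four shapes. All that then remains is to remove the maximal segments of the genuine logical connectives $\to,\mapsfrom,\wedge,\vee$, which is where Prawitz's technique enters.

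The core of the argument is an induction on the inductive value $\langle d, s\rangle$ of the current deduction, using that the order $<$ on these pairs is well-founded. If $\langle d, s\rangle=\langle 0,0\rangle$ the deduction is already normal. Otherwise $d\geq 1$, and following Prawitz I would select a maximal segment of degree $d$ positioned so that the reduction copies only subderivations free of degree-$d$ maximal segments (concretely, a topmost one with no maximal segment of degree $d$ above it, and none in the minor-premise derivations it feeds). If the chosen segment has length greater than one, a permutative reduction pushes the concluding elimination upward past the intervening minor premises of $\vee(+)$ or $\wedge(-)$, strictly decreasing $s$ without raising $d$. If it has length one, the proper reduction for the relevant connective removes the introduction/elimination detour; since only the immediate subformulas of the maximal formula are inserted into the derivation, any newly created maximal segment has degree strictly below $d$, so $\langle d, s\rangle$ again strictly decreases. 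The proof and refutation versions of each connective, and the single/double-line and dotted-line conventions, are carried through uniformly and are routine.

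The one genuinely new phenomenon, and the step I expect to be the main obstacle, is the interaction of these connective reductions with the pre-normal form secured in the first phase. A proper reduction substitutes a subderivation into assumption or counterassumption positions, and this can place the atomic conclusion of an application of $\bot(+)$, $\top(-)$, $PR(+)$ or $PR(-)$ directly into a premise slot of another rule of one of those shapes, recreating a forbidden chain. I would dispose of such configurations exactly as in the proof of Lemma~\ref{MariaLemma}: because all formulas involved are atomic, the chain is collapsed by the corresponding $PR$/$\bot(+)$/$\top(-)$ reduction, which merely discards a subderivation and so cannot create any new maximal segment, let alone one of degree $d$. Interleaving these collapses with the connective reductions thus re-establishes pre-normal form after each step without increasing $\langle d, s\rangle$.

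Putting these together, every reduction step strictly decreases $\langle d, s\rangle$ while preserving the premises, the conclusion, and, as an invariant, the pre-normal form. By well-foundedness the process terminates at a deduction of inductive value $\langle 0,0\rangle$, that is, in normal form, still witnessing $\Gamma;\Delta\vdash^{*}\phi$, as required. Beyond the bookkeeping of the bilateral line conventions, the principal work is the case-by-case verification that the substitutions triggered by the $PR(+)$ and $PR(-)$ detours generate only atomic chains removable without cost to the inductive measure.
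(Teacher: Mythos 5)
Your proof is correct and follows the paper's own two-phase strategy almost verbatim: first pass to pre-normal form via Lemma~\ref{lemma:red}, then induct on the inductive value $\langle d,s\rangle$, choosing a degree-$d$ maximal segment with no degree-$d$ segment above its last formula occurrence or above any occurrence side connected with it, and applying a permutative reduction when its length exceeds $1$ or a proper reduction when its length is $1$; the paper's termination argument is exactly your well-foundedness argument. The one place you go beyond the paper is your third paragraph: the paper makes no attempt to re-establish pre-normal form after each reduction, and for the theorem as stated this is genuinely unnecessary, because normal form is defined solely as having inductive value $\langle 0,0\rangle$, and the rules $PR(+)$, $PR(-)$, $\bot(+)$, $\top(-)$ are neither introduction nor elimination rules, so the recreated chains you worry about can never form maximal segments and cannot affect the measure. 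That said, your invariant is not wasted effort: a deduction of inductive value $\langle 0,0\rangle$ that has lost the pre-normal properties can still violate the subformula property (the paper's own motivating example, where $\land I(+)$ and $\land I(-)$ feed $PR(+)$, is already ``normal'' in this sense), so the subformula-property corollary tacitly requires that the output of normalisation retain pre-normal form; your interleaved collapses, which as you correctly argue only delete material and hence cannot create maximal segments or increase $\langle d,s\rangle$, make explicit an invariant that the paper's proof leaves implicit.
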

The subformula property follows immediately from our normalisation proof.
\begin{corollary}\label{cor:sub}
 $\BPR$ satisfies the subformula property. Since $\BPR$ is a conservative extension of $\BintNs$, the latter system also satisfies the subformula property.
\end{corollary}

The proof of the following result %and proofs are straightforward 
is an adaptation of the one presented in~\cite{EcumenicalPTS} on atomic normalisation. 

%The proof of the following result is and straightforward adaptation of the proof of~\cite[Lemma 30]{EcumenicalPTS}.
\begin{lemma}\label{lemma:undischarged}
    If $\Pi$ is a normal derivation in $\BPR$ that does not end in an application of an $I$-rule, then $\Pi$ contains at least one undischarged hypothesis.
\end{lemma}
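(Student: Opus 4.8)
The plan is to argue by induction on the height of the normal derivation $\Pi$, performing a case analysis on its last rule $R$. Since $\Pi$ does not end in an I-rule — and in particular is not one of the axioms $\top(+)$ or $\bot(-)$, which I count as I-rules — the rule $R$ must be either an elimination rule or one of $\bot(+)$, $\top(-)$, $PR(+)$, $PR(-)$. The base case is a one-formula derivation consisting of a bare hypothesis, which is trivially undischarged. In each inductive case I will locate a premise whose subderivation does not end in an I-rule, invoke the induction hypothesis to extract an undischarged hypothesis $h$ from that subderivation, and then check that $R$ does not discharge $h$.

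For the elimination rules I would descend into the subderivation $\Sigma$ deriving the major premise $\mu$ of $R$. Here normality does the work: if $\Sigma$ ended in an I-rule, then $\mu$ would be simultaneously the conclusion of an I-rule and the major premise of an E-rule, i.e.\ a maximal segment of length one, of degree $\ge 1$ since every major premise of a $\BPR$ elimination rule is compound ($\phi\to\psi$, $\phi\wedge\psi$, $\phi\vee\psi$ or $\phi\mapsfrom\psi$). This contradicts that $\Pi$ has inductive value $\langle 0,0\rangle$. Hence $\Sigma$ does not end in an I-rule, the induction hypothesis yields an undischarged hypothesis $h$ in $\Sigma$, and since every $\BPR$ elimination rule discharges assumptions or counterassumptions only inside its minor-premise subderivations (as in $E\vee(+)$ and $E\wedge(-)$) and never inside the major-premise subderivation, $h$ remains undischarged in $\Pi$.

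For $\bot(+)$ and $\top(-)$ the single premise subderivation proves $\bot$, respectively refutes $\top$; since no I-rule has $\bot$ as a proof-conclusion nor $\top$ as a refutation-conclusion, that subderivation cannot end in an I-rule, the induction hypothesis applies, and as these rules discharge nothing the hypothesis survives. The delicate cases are $PR(+)$ and $PR(-)$, whose two premises are a proof and a refutation of the same formula $\phi$. Here I will use that the normal derivation may be taken to be in pre-normal form (it is produced from the pre-normal reduction of Lemma~\ref{lemma:red} and the maximal-segment reductions preserve that condition), so that $\phi$ is atomic: if $\phi$ is a proper atom then neither a proof nor a refutation of $\phi$ can be introduced; if $\phi=\top$ then its refutation cannot be introduced; and if $\phi=\bot$ then its proof cannot be introduced. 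In every case at least one of the two premise subderivations fails to end in an I-rule, the induction hypothesis supplies an undischarged hypothesis, and since $PR(+)$ and $PR(-)$ discharge nothing it persists in $\Pi$.

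I expect the main obstacle to be precisely the $PR$ cases: without atomicity of their premises one could a priori imagine a closed canonical proof together with a closed canonical refutation of the same compound formula combined by a $PR$ rule, which would falsify the statement. The argument therefore hinges essentially on the pre-normal form guaranteed by Lemmas~\ref{atomicpremises} and~\ref{MariaLemma}. A secondary point to verify carefully is the bookkeeping of discharges, namely that among all $\BPR$ rules only the eliminations $E\vee(+)$ and $E\wedge(-)$ discharge anything, and only in their minor premises, so that descending into a major premise (or into either premise of a $PR$ rule) never captures the hypothesis produced by the induction hypothesis.
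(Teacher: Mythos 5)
Your overall strategy --- induction on the height of the normal derivation with a case split on the last rule --- is almost certainly the argument the paper intends (the paper never spells out a proof, describing it only as a straightforward adaptation of the atomic-normalisation results of \cite{EcumenicalPTS}), and your treatment of the hypothesis case, the elimination cases, and the $\bot(+)$/$\top(-)$ cases is correct. The genuine problem is the $PR$ case, and it is sharper than your write-up acknowledges. The lemma is stated for \emph{normal} derivations, and the paper defines normal form purely as having inductive value $\langle 0,0\rangle$; nothing in that definition forces the premises of $PR(+)$ or $PR(-)$ to be atomic. For a normal derivation that is not pre-normal your induction genuinely fails: if $\Pi$ ends in $PR(+)$ applied to a compound $\phi$, both premise subderivations could, for all the induction hypothesis tells you, be closed and end in I-rules (say a closed proof of $\psi\to\chi$ ending in $I\to(+)$ next to a closed refutation ending in $I\to(-)$), and no maximal segment arises because a premise of a $PR$ rule is not the major premise of an E-rule. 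Ruling out that configuration is not a local check; it amounts to showing that no formula admits both a closed proof and a closed refutation, a statement of at least the same strength as Corollary~\ref{cor:consistencyBPR}, which this lemma is supposed to deliver. So your proof, as written, establishes the lemma only for normal derivations that additionally satisfy the degree-$0$ condition on the four special rules --- a strictly weaker statement than the one quoted.

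The restriction you impose is the right repair, and it suffices for the paper's use of the lemma, since Corollary~\ref{cor:consistencyBPR} only ever applies it to the output of the normalisation procedure of Theorem~\ref{thm:normalisation}, which first passes through Lemmas~\ref{atomicpremises} and~\ref{MariaLemma}. But to close the gap you must do two things explicitly. First, restate the lemma (or strengthen the paper's definition of normal form) so that it includes the requirement that all premises and conclusions of $PR(+)$, $PR(-)$, $\bot(+)$, $\top(-)$ have degree $0$. Second, actually verify your parenthetical claim that the reductions performed after pre-normalisation preserve the relevant condition --- and note that the claim as you state it is too strong: the proper reduction for $\to(+)$ substitutes a derivation for assumption occurrences, and if such an occurrence is an atomic premise of a $PR$ rule while the substituted derivation ends in $\bot(+)$ or a $PR$ rule, the no-chaining clause of pre-normality is violated. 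What substitution does preserve is exactly the degree-$0$ clause, which is fortunately the only clause your $PR$ case uses. With the statement adjusted and that preservation check recorded, your argument goes through.
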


Remarkably, we can provide a syntactic proof of consistency for $\BPR$. This result plays a key role in the semantics introduced in the next section, which relies in particular on consistency proofs to establish completeness.

%\begin{proof} Straightforward adaptation of the proof of Lemma 30 in \cite{EcumenicalPTS}.

%If a proof does not end with and $I$-rule then it either ends with and E-rule or with an application of $PR(+)$, $PR(-)$, $\bot (+)$ or $\top (-)$. Let it be one  A consequence from our use of pre-normal form deductions for the normalisation proof is that we
%\end{proof}

\begin{corollary}\label{cor:consistencyBPR}
    $\BPR$ is consistent.
\end{corollary}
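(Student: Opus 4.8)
The plan is to reduce consistency to the combination of the normalisation theorem (Theorem~\ref{thm:normalisation}) and Lemma~\ref{lemma:undischarged}. Following the reading of consistency adopted in the text, I would first unfold the statement into two claims: that $\emptyset;\emptyset \not\vdash^{+}\bot$ (there is no proof of $\bot$ from empty premises) and that $\emptyset;\emptyset \not\vdash^{-}\top$ (there is no refutation of $\top$ from empty premises). Both are handled by the same schematic argument, exploiting the duality between the proof and refutation rules of $\BPR$.

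For the first claim I would argue by contradiction. Assume $\emptyset;\emptyset \vdash^{+}\bot$. By Theorem~\ref{thm:normalisation} there is a \emph{normal} derivation $\Pi$ witnessing $\emptyset;\emptyset \vdash^{+}\bot$. The crucial observation is that $\bot$ is not the conclusion of any proof introduction rule: inspecting the $(+)$-rules of Fig.~\ref{fig:SystemBPR}, every I-rule in the proof direction concludes either $\top$ or a properly compound formula ($\phi\to\psi$, $\phi\vee\psi$, $\phi\wedge\psi$, or $\phi\mapsfrom\psi$), never the constant $\bot$. Hence $\Pi$ does not end in an application of an I-rule, so Lemma~\ref{lemma:undischarged} applies and yields at least one undischarged hypothesis in $\Pi$. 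This contradicts the fact that $\Pi$ depends on the empty sets of assumptions and counterassumptions.

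The second claim is the exact dual. Assuming $\emptyset;\emptyset \vdash^{-}\top$, normalisation again gives a normal derivation, and the dual inspection of the $(-)$-rules shows that $\top$ is never the conclusion of a refutation introduction rule (these conclude $\bot$, via $\bot(-)$, or a compound formula). So the derivation does not end in an I-rule, Lemma~\ref{lemma:undischarged} again forces an undischarged hypothesis, and we reach the same contradiction. Combining the two claims gives that $\bot$ is not provable and $\top$ is not refutable, which is precisely consistency for $\BPR$.

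I expect the only genuine subtlety — the main obstacle — to be the careful bookkeeping of which rules count as I-rules, so that the non-introducibility of the two constants is really established and all potential ``backdoors'' are closed. In particular I would verify explicitly that the coordination rules $PR(+)$, $PR(-)$ and the explosion rules $\bot(+)$, $\top(-)$ are \emph{not} introduction rules, so that a normal derivation of $\bot$ (respectively, a refutation of $\top$) ending in any of them still satisfies the hypothesis of Lemma~\ref{lemma:undischarged}. Once this classification is fixed, the argument is immediate and both directions close uniformly.
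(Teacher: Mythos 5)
Your proof is correct and takes essentially the same route as the paper: reduce to the non-introducibility of the constants, pass to a normal derivation, and invoke Lemma~\ref{lemma:undischarged} to extract an undischarged hypothesis, contradicting the emptiness of the premise sets. You are in fact slightly more thorough than the paper's own proof, which only writes out the $\vdash^{+}_{\BPR}\bot$ half (leaving the dual claim about refuting $\top$ unstated) and keeps the appeal to Theorem~\ref{thm:normalisation} implicit, whereas you make both explicit.
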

\begin{proof}
    Assume %, for the sake of reaching a contradiction, 
    that there is a derivation $\Pi$ showing 
    of $\vdash^+\bot$. Since no $I$-rule has a conclusion %with shape 
    $\bot$, the deduction cannot end with an application of such rules, so from Lemma~\ref{lemma:undischarged} it follows that there is at least one undischarged premise in the deduction, contradicting the assumption that it is a deduction showing $\vdash^+\bot$. Similarly for the case $\vdash^-\top$.
   % Then, by Theorem \ref{thm:normalisation}, $\Pi$ can be reduced to a normal derivation $\Pi'$ showing $\vdash^+_{\BPR}\bot$. No $I$-rules are capable of having $\bot$ as its conclusion, so 
    %By Theorem \ref{thm:shapenormal}, we have that there may not be any $I$-rules in $\Pi'$, as no such rule introduces $\bot$. But then, by Lemma \ref{lemma:undischarged}, $\Pi'$ contains at least one discharged proof assumption or contra-assumption, thus it cannot be a derivation showing $\vdash^+_{\BPR}\bot$. Hence, $\not\vdash^+_{\BPR}\bot$. A similar argument can be made to show that $\not\vdash^-_{\BPR}\top$.
\end{proof}

\section{Base-extension semantics}\label{sec:bes}
In this section we provide a base-extension semantics ($\Bes$) for $\BPR$. This choice of semantics is motivated by the fact that $\Bes$ enables non-trivial extensions of the results in~\cite{barrosonascimento2025bilateralbaseextensionsemantics} by combining techniques from~\cite{EcumenicalPTS} with new concepts. These must be integrated with our proof-theoretical results and have no direct counterpart in standard model-theoretic semantics.
Moreover, given that this is a paper in proof theory, we find it natural and conceptually appealing to adopt a semantic framework where proof theory is the main star.

Proof-theoretic semantics~\cite{pts-91,schroeder2006validity,sep-proof-theoretic-semantics} ($\Pts$) provides an alternative perspective for the meaning of logical operators compared to the viewpoint offered by model-theoretic semantics. In $\Pts$, the concept of {\em truth} is substituted for that of {\em proof}, emphasizing the fundamental nature of proofs as a means through which we gain demonstrative knowledge, particularly in mathematical contexts. This makes $\Pts$ a superior approach for comprehending reasoning since it ensures that the meaning of logical operators, such as connectives in logics, is defined based on their usage in inferences. 

$\Bes$~\cite{Sandqvist2015IL} is a strand of $\Pts$ where proof-theoretic validity is defined relative to a given collection $\mathcal{B}$ of inference rules defined over basic formulas of the language. 
Indeed, validity w.r.t. a set $\mathcal{B}$ of atomic rules has the general  shape
\vspace{-0.3cm}
\begin{center}
$
\Vdash_{\mathcal{B}} p \qquad \mbox{iff} \qquad\vdash_{\mathcal{B}} p
$
\end{center}
\vspace{-0.2cm}
where $\vdash_{\mathcal{B}} p$ indicates that $p$ is {\em derivable} in the proof system determined by $\mathcal{B}$. After defining validity for atoms one can also define validity for logical connectives via semantic clauses that express proof conditions (\eg, $A \land B$ is provable from $\mathcal{B}$ if and only if both $A$ and $B$ are provable in $\mathcal{B}$), which results in a framework that evaluates propositions exclusively in terms of proofs of its constituents.

%Our semantics is a non-trivial combination of the bilateral framework of \cite{barrosonascimento2025bilateralbaseextensionsemantics} with the consistency constraint from \cite{EcumenicalPTS}, augmenting this combination with an extra condition to secure epistemic consistency. 

%The conceptual groundwork for this semantics is contained mostly in \cite{Sandqvist2015IL}; directly related semantic concepts are also dealt with in \cite{Piecha2016,piecha2015failure,piecha2016completeness}. 
%Our semantics require use of some modifications presented in \cite{EcumenicalPTS} concerning the use of bases with consistency constraints, without which our essential Theorem~\ref{theorem:positivenegativesupport} would not hold. Broadly speaking, we combine the bilateral semantics presented in \cite{barrosonascimento2025bilateralbaseextensionsemantics} with the modifications presented in \cite{EcumenicalPTS} plus a few novel ones. Incidentally, a conceptual introduction to $\Bes$ which contextualizes the difference introduced by the consistency constraint is presented in \cite{EcumenicalPTS}, whereas a introduction to $\Bes$ which relates it to logical bilateralism is presented in \cite{barrosonascimento2025bilateralbaseextensionsemantics}.

\subsection{Basic definitions}

We adopt Sandqvist's~\cite{Sandqvist2015IL} terminology with the adaptations to the bilateral setting presented in \cite{barrosonascimento2025bilateralbaseextensionsemantics}. Our definitions use systems containing natural deduction rules over basic sentences for the semantical analysis, and we allow inference rules to discharge sets of basic hypotheses. On the other hand, we allow the logical constants $\bot$ and $\top$ to be manipulated by the rules as if they were atoms.

The propositional {\em base language}, denoted by $\At$--abusing the concept of atom-- is assumed to have a set $\{p_1, p_2,\ldots\}$ of countably many atomic propositions together with $\bot$ and $\top$. The elements of $\At$ are called {\em basic sentences}, or  {\em atoms}.

%Formulas are built from atoms using the binary connectives $\to,\wedge,\vee$ and the unit $\bot$. We use $\neg A$ as an abbreviation for $A \to\bot$, and 

\begin{definition}\label{def:bilateralbase}
A {\em bilateral atomic system} (a.k.a. {\em bilateral base}, or {\em base}) $\mathcal{B}$ is a (possibly empty) set of atomic rules of the form
\begin{prooftree}
\AxiomC{$[\Gamma_{\At}^{1}]; \llbracket \Delta_{\At}^{1} \rrbracket$}
\noLine
\UnaryInfC{.}
\noLine
\UnaryInfC{.}
\noLine
\UnaryInfC{.}
\UnaryInfC{$p_1$}
\AxiomC{$\ldots$}
\AxiomC{$[\Gamma^{n}_{\At}]; \llbracket \Delta^{n}_{\At} \rrbracket$}
\noLine
\UnaryInfC{.}
\noLine
\UnaryInfC{.}
\noLine
\UnaryInfC{.}
\doubleLine
\UnaryInfC{$p_n$}
\TrinaryInfC{$p$}
\DisplayProof
\AxiomC{$[\Gamma_{\At}^{1}]; \llbracket \Delta_{\At}^{1} \rrbracket$}
\noLine
\UnaryInfC{.}
\noLine
\UnaryInfC{.}
\noLine
\UnaryInfC{.}
\UnaryInfC{$p_1$}
\AxiomC{$\ldots$}
\AxiomC{$[\Gamma^{n}_{\At}]; \llbracket \Delta_{\At}^{n} \rrbracket$}
\noLine
\UnaryInfC{.}
\noLine
\UnaryInfC{.}
\noLine
\UnaryInfC{.}
\doubleLine
\UnaryInfC{$p_n$}
\doubleLine
\TrinaryInfC{$p$}
\end{prooftree}
where $p_i,p\in\At$, $\Gamma_{\At}^{i}$ is a (possibly empty) set of atomic proof assumptions and  $\Delta^{i}_{\At}$ is a (possibly empty) set of atomic contra-assumptions. The sequence $\langle p^{1}, ... , p^{n}\rangle$ of premises in a rule can be empty -- in this case the rule is called an {\em atomic axiom}. The sets $\Gamma^{i}_{\At}, \Delta^{i}_{\At}$ can be omitted when empty. The atom $p$ is the rule's {\em conclusion}.
\end{definition}

\begin{definition} \label{def:proofrulesrefutrulesandaxioms}
    An atomic rule is called a {\em proof rule} if a single line appears above its conclusion, and a {\em refutation rule} if a double line appear above it instead. Atomic axioms may also be called {\em proof axioms} if they are proof rules and {\em refutation axioms} otherwise.
\end{definition}

\begin{definition}\label{def:exts2}
An atomic system $\mathcal{C}$ is an {\em extension} of an atomic system $\mathcal{B}$ (written $\mathcal{C} \supseteq \mathcal{B}$) if $\mathcal{C}$ results from adding a (possibly empty) set of atomic rules to $\mathcal{B}$.   
\end{definition}

%If $S$ is an atomic system and $K$ is a set of sentence letters, we write $\overline{S(K)}$ for the {\em closure} of $K$ under all the rules in $S$ in the usual set-theoretic sense. An atomic system $S$ is {\em dense} if $\overline{S(\emptyset)}=\At$.

%\begin{definition}[Proof deductions]\label{def:deducatomicproofs} The consequence $\Gamma_{\At}; \Delta_{\At} \vdash^{+}_\mathcal{B} p$ holds if and only if there is a natural deduction derivation with conclusion $p$ depending on the open atomic proof assumptions $\Theta_{\At}$ and open atomic refutation assumptions $\Sigma_{\At}$  (for some $\Theta_{\At} \subseteq \Gamma_{\At}$ and $\Sigma_{\At} \subseteq \Delta_{\At}$) that only uses the rules of $\mathcal{B}$, which either consists in a single occurrence of a proof assumption or whose last rule application is a proof rule.
%\end{definition}

%\begin{definition}[Refutation deductions]\label{def:deducatomicrefs} The consequence $\Gamma_{\At}; \Delta_{\At} \vdash^{-}_\mathcal{B} p$ holds if and only if there is a natural deduction derivation with conclusion $p$ depending on the open atomic proof assumptions $\Theta_{\At}$ and open atomic refutation assumptions $\Sigma_{\At}$  (for some $\Theta_{\At} \subseteq \Gamma_{\At}$ and $\Sigma_{\At} \subseteq \Delta_{\At}$) that only uses the rules of $\mathcal{B}$, which either consists in a single occurrence of a refutation assumption or whose last rule application is a refutation rule.
%\end{definition}

\begin{definition} An {\em atomic proof} in the base $\mathcal{B}$ is a deduction using only the rules $\mathcal{B}$ which has no undischarged assumptions or contra-assumptions and ends with an application of a proof rule. We denote by $\vdash^{+}_\mathcal{B} p$ the existence of an atomic proof of $p$ in $\mathcal{B}$.

%\end{definition}
%\begin{definition} 
An {\em atomic refutation} in the base $\mathcal{B}$ is a deduction using only the rules $\mathcal{B}$ which has no undischarged assumptions or contra-assumptions and ends with an application of a refutation rule. We denote by $\vdash^{+}_\mathcal{B} p$ the existence of an atomic proof of $p$ in $\mathcal{B}$.
    
\end{definition}

%Notice that if  $p \in \{p_{1}, ... , p_{n}\}$ then $p_1, ..., p_{n}; \Gamma_{\At} \vdash^{+}_{\mathcal{B}} p$ and $\Gamma_{\At} ;  \{p_1, ..., p_{n} \}  \vdash^{-}_{\mathcal{B}} p$ regardless of the $\mathcal{B}$ and $\Gamma_{\At}$ due to how deducibility is defined.

Now we define new notions to be used in the new setting.

\begin{definition}\label{def:deduc}
An atomic system $\B$ is 
%\vspace{-0.1cm}
\begin{itemize} 
\item {\em logically consistent} if $\nvdash^{+}_{\B} \bot$ and $\nvdash^{-}_{\mathcal{B}} \top$.
\item {\em unit complete} if it contains a proof axiom concluding $\top$ and a refutation axiom concluding $\bot.$
\item  {\em epistemically consistent} if it is closed under rules with the following shape for all $p \in \At$\footnote{Technically speaking, our theorem requiring consistency only need bases to be closed under instances of either of the rules pictured above, but we include both in order to give precedence neither to the concept of proof nor of refutation.}:
\vspace{-0.4cm}
\begin{prooftree}
    \AxiomC{}
    \UnaryInfC{$p$}
    \AxiomC{}
    \doubleLine
    \UnaryInfC{$p$}
    \BinaryInfC{$\bot$}
    \DisplayProof
    \qquad \qquad
        \AxiomC{}
    \UnaryInfC{$p$}
    \AxiomC{}
    \doubleLine
    \UnaryInfC{$p$}
    \doubleLine
    \doubleLine
    \BinaryInfC{$\top$}
\end{prooftree}
\vspace{-0.1cm}
\item  {\em epistemically adequate} if it is logically consistent, unit complete and epistemically consistent.
\end{itemize}
\end{definition}

In the present paper we assume that bases and their extensions are always epistemically adequate (unless otherwise stated).

\subsection{Semantic clauses and epistemic consistency} \label{sec:semantics}

The semantics considered here are based on the clauses introduced in \cite{barrosonascimento2025bilateralbaseextensionsemantics}, except that we extend the domain $\At$ so as to include both $\bot$ and $\top$. %, thus falling into the scope of $(\At +)$ and $(\At -)$ instead of having dedicated clauses.

\begin{definition}\label{def:bilateralvalidity}
 \normalfont   The relations $\Vdash_{\mathcal{B}}^{+}, \Vdash_{\mathcal{B}}^{-}, \Vdash^{+} and \Vdash^{-}$ are defined in Fig~\ref{fig:sc}.
 An asterisk $*$ may be used as a placeholder for $+$ and $-$ when a result is to be shown for both signs. Whenever an asterisk is used in a statement or proof, all its instances must be uniformly replaced by the same sign.
\end{definition}
 %as follows:
\begin{figure}[t]
\begin{enumerate}
    \item[\namedlabel{c:at+}{($\At +$)}] $\Vdash_{\mathcal{B}}^{+} p$ iff $\vdash^{+}_{\mathcal{B}} p$, for $p \in \At$;

\medskip
    
        \item[\namedlabel{c:at-}{$(\At -)$}] $\Vdash_{\mathcal{B}}^{-} p$ iff $\vdash^{-}_{\mathcal{B}} p$, for $p \in \At$;

\medskip 

\item[\namedlabel{c:and+}{($\land +$)}] $\Vdash_{\mathcal{B}}^{+} \phi \land \psi$ iff $\Vdash_{\mathcal{B}}^{+} \phi$ and $\Vdash_{\mathcal{B}}^{+} \psi$

\medskip

\item[\namedlabel{c:and-}{($\land -$)}] $\Vdash_{\mathcal{B}}^{-} \phi \land \psi$ iff  $ \forall \mathcal{C} (\mathcal{C} \supseteq \mathcal{B})$ and $\forall p (p \in \At)$ : ($ \emptyset ; \phi \Vdash_{\mathcal{C}}^{+} p$ and $\emptyset ;\psi \Vdash_{\mathcal{C}}^{+} p$ implies $\Vdash_{\mathcal{C}}^{+} p$) and ($ \emptyset ; \phi \Vdash_{\mathcal{C}}^{-} p$ and $\emptyset ;\psi \Vdash_{\mathcal{C}}^{-} p$ implies $\Vdash_{\mathcal{C}}^{-} p$);

\medskip

\item[\namedlabel{c:or+}{($\lor +$)}] $\Vdash_{\mathcal{B}}^{+} \phi \lor \psi$ iff  $ \forall \mathcal{C} (\mathcal{C} \supseteq \mathcal{B})$ and $\forall p (p \in \At)$ $:  (\phi ;\emptyset \Vdash_{\mathcal{C}}^{+} p$ and $\psi ; \emptyset \Vdash_{\mathcal{C}}^{+} p$ implies $\Vdash_{\mathcal{C}}^{+} p)$ and $(\phi ;\emptyset \Vdash_{\mathcal{C}}^{-} p$ and $\psi ; \emptyset \Vdash_{\mathcal{C}}^{-} p$ implies $\Vdash_{\mathcal{C}}^{-} p)$;
\medskip

\item[\namedlabel{c:or-}{($\lor -$)}] $\Vdash_{\mathcal{B}}^{-} \phi \lor \psi$ iff $\Vdash_{\mathcal{B}}^{-} \phi$ and $\Vdash_{\mathcal{B}}^{-} \psi$;

\medskip

\item[\namedlabel{c:to+}{($\to +$)}] $\Vdash_\mathcal{B}^{+} \phi \to \psi$ iff $\phi ; \emptyset \Vdash_\mathcal{B}^{+} \psi$;

\medskip

\item[\namedlabel{c:to-}{($\to -$)}] $\Vdash_\mathcal{B}^{-} \phi \to \psi$ iff $\Vdash_\mathcal{B}^{+} \phi$ and $\Vdash_\mathcal{B}^{-} \psi$;

\medskip

\item[\namedlabel{c:mapsfrom+}{($\mapsfrom +$)}] $\Vdash_\mathcal{B}^{+} \phi \mapsfrom \psi$ iff $\Vdash_\mathcal{B}^{+} \phi$ and $\Vdash_\mathcal{B}^{-} \psi$;

\medskip

\item[\namedlabel{c:mapsfrom-}{($\mapsfrom -$)}] $\Vdash_\mathcal{B}^{-} \phi \mapsfrom \psi$ iff $\emptyset ; \psi \Vdash_\mathcal{B}^{-} \phi$;

\medskip

%\medskip

%\item For non-empty $\Gamma$, $\Gamma \Vdash_\mathcal{B}^{+} \phi$ iff $\forall (\mathcal{C} \supseteq \mathcal{B}) : \   \Vdash_{\mathcal{C}}^{+} \psi$ for all $\psi \in \Gamma$ implies $\Vdash_{\mathcal{C}}^{+} \phi$.

%\medskip

%\item For non-empty $\Gamma$, $\Gamma \Vdash_\mathcal{B}^{-} \phi$ iff $\forall \mathcal{C} (\mathcal{B} \supseteq \mathcal{C}) : \   \Vdash_{\mathcal{C}}^{-} \psi$ for all $\psi \in \Gamma$ implies $\Vdash_{\mathcal{C}}^{-} \phi$.

\item[\namedlabel{c:inf+}{(Inf $+$)}] For non-empty finite $\Gamma \cup \Delta$, $\Gamma ; \Delta \Vdash^{+}_{\mathcal{B}} \chi$ iff $ \forall \mathcal{C} (\mathcal{C} \supseteq \mathcal{B}) : \ $ if $ \ \Vdash_{\mathcal{C}}^{+} \phi$ for all $\phi \in \Gamma$ and $\Vdash_{\mathcal{C}}^{-} \psi$ for all $\psi \in \Delta$ then $\Vdash_{\mathcal{C}}^{+} \chi$.

\medskip

\item[\namedlabel{c:inf-}{(Inf $-$)}] For non-empty finite $\Gamma \cup \Delta$, $\Gamma ; \Delta \Vdash^{-}_{\mathcal{B}} \chi$ iff $ \forall\mathcal{C} (\mathcal{C} \supseteq \mathcal{B}) : \ $if$ \ \Vdash_{\mathcal{C}}^{+} \phi$ for all $\phi \in \Gamma$ and $\Vdash_{\mathcal{C}}^{-} \psi$ for all $\psi \in \Delta$ then $\Vdash_{\mathcal{C}}^{-} \chi$.

\medskip

\item[\namedlabel{c:2int+}{($\BPR +$)}] For finite $\Gamma \cup \Delta$, $\Gamma ; \Delta \Vdash^{+} \phi$ iff $\Gamma ; \Delta \Vdash^{+}_{\mathcal{B}} \phi$ for all $\mathcal{B}$;

\medskip

\item[\namedlabel{c:2int-}{($\BPR -$)}] For finite $\Gamma \cup \Delta$, $\Gamma ; \Delta \Vdash^{-} \phi$ iff $\Gamma ; \Delta \Vdash^{-}_{\mathcal{B}} \phi$ for all $\mathcal{B}$;
\end{enumerate}
\caption{$\BPR$ semantic clauses}\label{fig:sc}
\end{figure}

Unlike in the semantics defined in \cite{barrosonascimento2025bilateralbaseextensionsemantics}, the requirement of epistemic adequacy for bases (in this case specifically the combination of logical and epistemic consistency) allow us to prove that proofs and refutations indeed cannot coexist. This is not the case in $\Bint$ since, due to its paraconsistent nature, formulas are generally allowed to be provable and refutable at the same time.

The proof of the next result proceeds along the same lines as Lemma 2 in \cite{barrosonascimento2025bilateralbaseextensionsemantics}, adapted to a framework in which $\top$ and $\bot$ are regarded as atomic formulas.
\begin{lemma} If $\Gamma ; \Delta\Vdash^{*}_{\mathcal{B}} \phi$ and $\mathcal{B} \supseteq \mathcal{C}$ then $\Gamma ; \Delta \Vdash^{*}_{\mathcal{C}} \phi$.
    
\end{lemma}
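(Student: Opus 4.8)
The plan is to fix the sign $*$ and argue by induction on the structure of $\phi$, splitting first on whether the context $\Gamma \cup \Delta$ is empty. The single fact driving the whole argument is transitivity of the extension relation of Definition~\ref{def:exts2}: since extension is just inclusion of rule-sets, whenever one base extends another, every extension of the former is also an extension of the latter. Consequently, if $\Gamma \cup \Delta$ is non-empty the result is immediate, for then $\Gamma;\Delta \Vdash^{*}_{\mathcal{B}} \phi$ is by \ref{c:inf+}/\ref{c:inf-} a statement of the form ``for every extension of the base, \ldots'', and the extensions one must check in order to pass to $\mathcal{C}$ form a subfamily of those already quantified over. The interesting work is therefore confined to the empty-context case, where support is governed by the connective clauses of Definition~\ref{def:bilateralvalidity}.

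For the base case, let $p \in \At$ (this already subsumes $\bot$ and $\top$, which we treat as atoms). By \ref{c:at+}, $\Vdash^{+}_{\mathcal{B}} p$ unfolds to the existence of an atomic proof $\vdash^{+}_{\mathcal{B}} p$; since passing to an extension only adds rules, every rule used in that derivation remains available, so the very same derivation witnesses $\vdash^{+}_{\mathcal{C}} p$ and hence $\Vdash^{+}_{\mathcal{C}} p$. The $-$ case is identical via \ref{c:at-}, reading ``refutation'' for ``proof''.

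In the inductive step I would separate the clauses into two groups. The \emph{local} clauses \ref{c:and+}, \ref{c:or-}, \ref{c:to-} and \ref{c:mapsfrom+} mention support only at the base itself, and are discharged by applying the induction hypothesis to the immediate subformulas -- e.g. for \ref{c:and+} one transfers $\Vdash^{+}_{\mathcal{B}}\phi$ and $\Vdash^{+}_{\mathcal{B}}\psi$ separately and recombines. The \emph{quantified} clauses \ref{c:and-}, \ref{c:or+}, \ref{c:to+} and \ref{c:mapsfrom-} already open with a universal quantifier over extensions (the last two through \ref{c:inf+}/\ref{c:inf-}), and transfer with no appeal to the induction hypothesis at all, exactly by the subfamily observation used for the non-empty context above.

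I do not anticipate a genuine obstacle; the only points demanding care are bookkeeping ones. First, one must check that the recursion stays well-founded: \ref{c:to+} and \ref{c:mapsfrom-} route through the inference clauses, but these only ever refer to support of the same subformulas $\phi,\psi$, so the induction may safely be taken on formula degree. Second, one must honour the uniform-replacement convention for $*$, so that a $+$ hypothesis yields a $+$ conclusion and a $-$ hypothesis a $-$ conclusion. Finally, it is worth noting explicitly that epistemic adequacy of the bases is never invoked here: monotonicity holds for arbitrary bases, resting solely on the purely set-theoretic inclusion of rule-sets.
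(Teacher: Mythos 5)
Your outline is, in its skeleton, the argument the paper itself relies on: the paper's proof of this lemma is a one-line deferral to Lemma~2 of \cite{barrosonascimento2025bilateralbaseextensionsemantics} (with $\top$ and $\bot$ now treated as atoms), and that argument is precisely your three-way split --- persistence of atomic derivations when rules are added, transitivity of $\supseteq$ for the clauses that quantify over extensions (including $(\to +)$ and $(\mapsfrom -)$ via the inference clauses), and the induction hypothesis for the local clauses. There is, however, one genuine gap in your setup: you cannot ``fix the sign $*$'' and then induct on $\phi$. Two of your four ``local'' clauses mix the signs: $(\to -)$ unfolds $\Vdash^{-}_{\mathcal{B}}\phi\to\psi$ into $\Vdash^{+}_{\mathcal{B}}\phi$ \emph{and} $\Vdash^{-}_{\mathcal{B}}\psi$, and $(\mapsfrom +)$ unfolds $\Vdash^{+}_{\mathcal{B}}\phi\mapsfrom\psi$ into $\Vdash^{+}_{\mathcal{B}}\phi$ \emph{and} $\Vdash^{-}_{\mathcal{B}}\psi$. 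So in the $-$ induction the case $\phi\to\psi$ needs monotonicity of $\Vdash^{+}$ at the subformula $\phi$, which a sign-fixed induction hypothesis does not supply (symmetrically for $\mapsfrom$ in the $+$ induction); your closing insistence that ``a $+$ hypothesis yields a $+$ conclusion and a $-$ hypothesis a $-$ conclusion'' suggests this interaction was missed rather than silently absorbed. The repair is exactly the phenomenon the paper flags just before Theorem~\ref{theorem:positivenegativesupport}: the two signed statements must be proved simultaneously, by induction on the degree of $\phi$, with the induction hypothesis covering both $+$ and $-$ for all formulas of smaller degree.

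A second, smaller point concerns the direction of the inclusion. Your transitivity argument (the extensions of the target base form a subfamily of the extensions of the source base) proves the statement in which the \emph{conclusion} base extends the \emph{hypothesis} base, i.e.\ $\mathcal{C}\supseteq\mathcal{B}$. The lemma as printed says $\mathcal{B}\supseteq\mathcal{C}$, which under Definition~\ref{def:exts2} would make it a restriction claim --- and that claim is false: an atom provable by an axiom of $\mathcal{B}$ absent from $\mathcal{C}$ is supported in $\mathcal{B}$ but not in $\mathcal{C}$. Since the paper only ever invokes the lemma upward (e.g.\ in the proof of Theorem~\ref{theorem:positivenegativesupport}, where $\Vdash^{+}_{\mathcal{B}}$ and $\mathcal{D}\supseteq\mathcal{C}\supseteq\mathcal{B}$ yield $\Vdash^{+}_{\mathcal{D}}$), your reading is clearly the intended one, but you should say explicitly which inclusion you are assuming; as written, your proof and the printed statement do not match.
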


The result below is proved in Appendix~\ref{app:sec4}. Notice that, although a similar model-theoretic result for a different language is stated (but not proved) in \cite{Wansing1995-HEISNI}, any proof would look significantly different from the one presented here.

\begin{theorem}\label{theorem:positivenegativesupport}
For any $\mathcal{B}$, $\Vdash^{+}_{\B} \phi$ implies $\nVdash^{-}_{\B} \phi$ and  $\Vdash^{-}_{\B} \phi$ implies $\nVdash^{+}_{\B} \phi$.
\end{theorem}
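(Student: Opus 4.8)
The statement to prove is Theorem~\ref{theorem:positivenegativesupport}: for any epistemically adequate base $\mathcal{B}$, positive support $\Vdash^{+}_{\B}\phi$ excludes negative support $\Vdash^{-}_{\B}\phi$, and vice versa. Since the two implications are symmetric, I focus on proving $\Vdash^{+}_{\B}\phi \Rightarrow \nVdash^{-}_{\B}\phi$; the dual follows by an entirely analogous argument with the roles of proof and refutation rules (and of $\bot$ and $\top$) interchanged.

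The plan is to argue by induction on the structure of $\phi$. The base case $\phi = p \in \At$ is where epistemic adequacy does the real work. Suppose toward a contradiction that $\Vdash^{+}_{\B} p$ and $\Vdash^{-}_{\B} p$. By the atomic clauses \ref{c:at+} and \ref{c:at-} this means $\vdash^{+}_{\B} p$ and $\vdash^{-}_{\B} p$, i.e.\ there is an atomic proof and an atomic refutation of $p$ in $\B$. Because $\B$ is epistemically consistent, it is closed under the rule that takes an atomic proof and an atomic refutation of the same atom $p$ and concludes $\bot$ (the left-hand epistemic consistency rule). Composing the atomic proof and refutation of $p$ via this rule yields an atomic proof $\vdash^{+}_{\B}\bot$, contradicting logical consistency of $\B$ (Definition~\ref{def:deduc}), which asserts $\nvdash^{+}_{\B}\bot$. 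Hence no atom can be both positively and negatively supported.

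For the inductive step I would proceed by cases on the principal connective, using the semantic clauses to reduce each case to the induction hypothesis on strictly smaller subformulas. For the ``direct'' clauses this is immediate: if $\Vdash^{+}_{\B}\phi\land\psi$ then by \ref{c:and+} we have $\Vdash^{+}_{\B}\phi$, whereas $\Vdash^{-}_{\B}\phi\land\psi$ would (via \ref{c:and-} applied in a suitable extension, or more directly) force a negative support situation that, combined with the positive one, contradicts the IH on $\phi$ or $\psi$; similarly $\to$ and $\mapsfrom$ decompose so that the positive and negative conditions pin down incompatible support facts about the \emph{same} immediate subformula. The cleanest cases are those where one polarity is ``atomic-like'' (e.g.\ \ref{c:to-} gives $\Vdash^{+}_{\B}\phi$ and $\Vdash^{-}_{\B}\psi$ for the negative support of $\phi\to\psi$, while \ref{c:to+} gives $\phi;\emptyset\Vdash^{+}_{\B}\psi$): here one must extract a concrete contradiction by instantiating the inference clause, possibly passing to an extension $\mathcal{C}\supseteq\B$ and using Monotonicity to transport support facts.

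The main obstacle will be the clauses defined via the generalized elimination schema over all extensions, namely \ref{c:and-} and \ref{c:or+}, where negative (resp.\ positive) support is not a direct conjunction of subformula support but a second-order condition quantifying over every extension $\mathcal{C}\supseteq\B$ and every atom $p$. Here I cannot simply read off support of a subformula, so the strategy is to choose a \emph{witnessing} atom or a cleverly constructed extension that converts the second-order condition into a usable first-order statement: for instance, to derive a contradiction from $\Vdash^{+}_{\B}\phi\lor\psi$ together with $\Vdash^{-}_{\B}\phi\lor\psi$, one uses \ref{c:or-} (which gives $\Vdash^{-}_{\B}\phi$ and $\Vdash^{-}_{\B}\psi$ directly) against the elimination-style \ref{c:or+} by instantiating the universally quantified atom with a fresh $p$ and building an extension in which $\phi;\emptyset\Vdash^{+}p$ and $\psi;\emptyset\Vdash^{+}p$ hold, thereby forcing $\Vdash^{+}_{\mathcal{C}}p$ while the negative supports force $\Vdash^{-}_{\mathcal{C}}p$ — contradicting the already-established atomic case in $\mathcal{C}$. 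The care needed is to ensure the constructed extension remains epistemically adequate so that the atomic base case applies within it; this is exactly the place where the consistency constraints on bases (and their preservation under the relevant extensions) are indispensable, mirroring the role they play in \cite{EcumenicalPTS}.
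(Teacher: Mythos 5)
Your overall skeleton (induction on the structure of $\phi$, base case discharged by combining epistemic consistency with logical consistency) matches the paper, and your base case and your treatment of the connectives $\to$ and $\mapsfrom$, where one polarity is given by a direct clause, are essentially the paper's argument; your decision to prove one implication and obtain the other ``by symmetry'' is also harmless, since each implication is equivalent to the statement that $\phi$ is not both positively and negatively supported. The genuine gap lies exactly in the cases you flag as the main obstacle, $(\land -)$ and $(\lor +)$. Take your sketch for $\lor$: you propose to pick a fresh atom $p$ and build an extension $\C \supseteq \B$ in which $\phi;\emptyset \Vdash^{+}_{\C} p$ and $\psi;\emptyset \Vdash^{+}_{\C} p$ hold, so that $(\lor +)$ forces $\Vdash^{+}_{\C} p$, and then claim that ``the negative supports force $\Vdash^{-}_{\C} p$''. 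That last step has no justification: $\Vdash^{-}_{\C}\phi$ and $\Vdash^{-}_{\C}\psi$ (from $(\lor -)$ and monotonicity) say nothing about a fresh atom $p$; no clause transfers refutations of $\phi$ and $\psi$ into a refutation of $p$. The construction itself is also problematic: atomic rules may only mention atoms, so for compound $\phi,\psi$ the only way to make the two hypothetical judgements hold outright is to add a proof axiom for $p$ --- but then $\Vdash^{+}_{\C} p$ holds trivially and the first conjunct of $(\lor +)$ yields no contradiction; and if you additionally force $\Vdash^{-}_{\C} p$ by a refutation axiom, $\C$ ceases to be epistemically adequate and falls outside the range of the quantifier. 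For the same reason your earlier sentence on $\land$ is wrong as stated: $\Vdash^{-}_{\B}\phi\land\psi$ does \emph{not} yield negative support of a conjunct, because the refutation clause for $\land$ is the generalized one.

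The paper's argument for these cases requires no constructed extension at all; the mechanism is \emph{vacuity} plus the unit atoms. For $\lor$: from $\Vdash^{-}_{\B}\phi\lor\psi$, clause $(\lor -)$ gives $\Vdash^{-}_{\B}\phi$ and $\Vdash^{-}_{\B}\psi$; by monotonicity these persist to every $\mathcal{D}\supseteq\B$, so the induction hypothesis gives $\nVdash^{+}_{\mathcal{D}}\phi$ and $\nVdash^{+}_{\mathcal{D}}\psi$ for all such $\mathcal{D}$. Hence $\phi;\emptyset \Vdash^{+}_{\C}\bot$ and $\psi;\emptyset \Vdash^{+}_{\C}\bot$ hold vacuously for every $\C\supseteq\B$. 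If $\Vdash^{+}_{\B}\phi\lor\psi$ also held, instantiating $(\lor +)$ at $\C=\B$ and $p=\bot$ (recall $\bot\in\At$) would give $\Vdash^{+}_{\B}\bot$, i.e.\ $\vdash^{+}_{\B}\bot$, contradicting logical consistency; dually, $(\land -)$ is handled with $p=\top$ and the impossibility of $\vdash^{-}_{\B}\top$. In other words, the universally quantified atom should be instantiated with a unit whose support is outlawed by logical consistency, rather than with a fresh atom whose support you must engineer. If you replace your extension construction by this vacuity argument, your proof goes through and coincides with the paper's.
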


Just like the proof of Lemma~\ref{atomicpremises}, the proof of Theorem~\ref{theorem:positivenegativesupport} only works due the harmonic relationship existing between proof and refutation conditions. It also uses both the epistemic and logical consistency constraints in an essential way. If bases were not required to be epistemically consistent\footnote{In fact, if bases were only required to be logically consistent and unit complete we would obtain an alternative $\Bes$ for $\Bint$ instead of one for $\BPR$.} we could have a base with both $\Vdash^{+}_{\mathcal{B}} p$ and $\Vdash^{-}_{\mathcal{B}} p$, and if instead of using logical consistency we followed \cite{barrosonascimento2025bilateralbaseextensionsemantics} and used the traditional $\Bes$ definition of proofs of $\bot$ (as well as refutations of $\top$) in terms of atomic explosion ({\em e.g.} $\Vdash^{+}_{\mathcal{B}} \bot$ iff $\Vdash^{+}_{\mathcal{B}} p$ and $\Vdash^{-}_{\mathcal{B}} p$ for all atoms $p$) then in any base $\mathcal{B}$ with $\Vdash^{+}_{\mathcal{B}} p$ and $\Vdash^{-}_{\mathcal{B}} p$ for all $p \in \At$ we would also have $\Vdash^{+}_{\mathcal{B}} \phi$ and $\Vdash^{-}_{\mathcal{B}} \phi$ for all $\phi$. As such, Theorem~\ref{theorem:positivenegativesupport}, which shows that the semantics in fact codifies our pre-theoretic intuitions about $\BPR$, can only be obtained by this novel combination of the bilateralist ideas presented in \cite{barrosonascimento2025bilateralbaseextensionsemantics} with the treatment of consistency developed in \cite{EcumenicalPTS}.

\section{Soundness and Completeness}\label{sec:sc}

Having established normalisation and its immediate consequences, we now turn to the semantic adequacy of our system. The aim is to show that $\BPR$ is both sound and complete with respect to the $\Bes$ proposed in Section \ref{sec:semantics}. The proof of Soundness follows closely that of $\BintNs$ in~\cite{barrosonascimento2025bilateralbaseextensionsemantics}, differing only in that it has two additional cases corresponding to the instances of the $PR$ rule. The proof of completeness, however, requires a more detailed analysis, since our semantic framework presupposes that the bases are consistent. In this context, consistency will play the same crucial structural role as it plays in \cite{EcumenicalPTS}.

The soundness proof for $\BPR$ is given in Appendix~\ref{app:sec5}. Since the semantic treatment of $\bot$ and $\top$ here differs from the proof in~\cite{barrosonascimento2025bilateralbaseextensionsemantics} only in that we regard them as atomic formulas rather than connectives, no new cases arise in the auxiliary lemmas, which therefore have their proofs omitted.
%The only substantial additions concern the two instances of the co-ordination rule $PR$, whose soundness must be verified.

\begin{theorem}\label{thm:soundness}
    If $\Gamma; \Delta \vdash^{*}_{\BPR} \phi$ then $\Gamma; \Delta \Vdash^{*} \phi$.
\end{theorem}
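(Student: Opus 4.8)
The plan is to prove soundness by induction on the structure of the derivation $\Pi$ witnessing $\Gamma; \Delta \vdash^{*}_{\BintNA} \phi$, showing simultaneously (because proofs and refutations interact) that a proof derivation yields $\Gamma;\Delta\Vdash^{+}\phi$ and a refutation derivation yields $\Gamma;\Delta\Vdash^{-}\phi$. The cases for the logical rules $I/E$ of $\to$, $\mapsfrom$, $\land$, $\lor$, together with the unit rules $\bot(+)$, $\top(-)$, $\top(+)$, $\bot(-)$, are exactly those already handled for $\BintNs$ in \cite{barrosonascimento2025bilateralbaseextensionsemantics}; since the only change is that $\bot$ and $\top$ now range inside $\At$ and are validated through clauses \ref{c:at+} and \ref{c:at-}, I would cite that argument and the auxiliary lemmas (monotonicity and the inference clauses \ref{c:inf+}, \ref{c:inf-}) rather than reproduce them. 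The genuinely new work is the pair of coordination rules, so I would present only the $PR(+)$ and $PR(-)$ cases in detail.

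For $PR(+)$, the derivation ends by concluding an arbitrary $\psi$ from a proof of $\phi$ and a refutation of $\phi$; by the induction hypothesis these subderivations give $\Gamma_1;\Delta_1 \Vdash^{+}\phi$ and $\Gamma_2;\Delta_2\Vdash^{-}\phi$, where the premise sets combine into $\Gamma;\Delta$. I must show $\Gamma;\Delta\Vdash^{+}\psi$, i.e.\ that for every base $\mathcal{B}$ and every extension $\mathcal{C}\supseteq\mathcal{B}$ supporting all of $\Gamma$ positively and all of $\Delta$ negatively, we have $\Vdash^{+}_{\mathcal{C}}\psi$. The key observation is that any such $\mathcal{C}$ would, by the inference clauses, satisfy both $\Vdash^{+}_{\mathcal{C}}\phi$ and $\Vdash^{-}_{\mathcal{C}}\phi$; but Theorem~\ref{theorem:positivenegativesupport} forbids exactly this, since $\Vdash^{+}_{\mathcal{C}}\phi$ implies $\nVdash^{-}_{\mathcal{C}}\phi$. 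Hence no such witnessing extension $\mathcal{C}$ exists, and the universally quantified condition in \ref{c:inf+} is satisfied \emph{vacuously}, giving $\Gamma;\Delta\Vdash^{+}\psi$ for the arbitrary conclusion $\psi$. The $PR(-)$ case is strictly dual: the same subderivations give a positive and a negative support for $\phi$ in any witnessing $\mathcal{C}$, Theorem~\ref{theorem:positivenegativesupport} again excludes this, and clause \ref{c:inf-} is satisfied vacuously, yielding $\Gamma;\Delta\Vdash^{-}\psi$.

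The main obstacle, and the reason this theorem depends essentially on the earlier development, is precisely that the soundness of the coordination rules is \emph{not} a direct verification but rides entirely on the incompatibility result Theorem~\ref{theorem:positivenegativesupport}; without epistemic and logical consistency of bases there would be extensions simultaneously supporting $\phi$ positively and negatively, the universal condition would no longer be vacuous, and $PR(+)$ would license an unsound inference to an arbitrary $\psi$. A minor subtlety I would flag is the bookkeeping of open premises: I must check that the premise sets $(\Gamma_i;\Delta_i)$ of the two subderivations are correctly amalgamated into $(\Gamma;\Delta)$ and that the empty-premise boundary case (where $\Gamma\cup\Delta$ is empty, so clauses \ref{c:2int+} and \ref{c:2int-} rather than \ref{c:inf+}, \ref{c:inf-} apply) is handled — but there the argument is identical, since the excluded situation $\Vdash^{+}_{\mathcal{B}}\phi$ together with $\Vdash^{-}_{\mathcal{B}}\phi$ is ruled out for every base $\mathcal{B}$ directly by Theorem~\ref{theorem:positivenegativesupport}.
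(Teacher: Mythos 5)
Your proposal is correct and follows essentially the same route as the paper's own proof: induction on the derivation, reuse of the $\BintNs$ soundness cases from \cite{barrosonascimento2025bilateralbaseextensionsemantics} for the logical rules, and a vacuity argument for $PR(+)$/$PR(-)$ in which any extension supporting all premises would support $\phi$ both positively and negatively, contradicting Theorem~\ref{theorem:positivenegativesupport}. Your explicit treatment of the empty-premise boundary case (where clauses \ref{c:2int+}/\ref{c:2int-} apply directly) is a detail the paper glosses over, but it does not change the argument.
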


In contrast to soundness, the completeness proof is more involved. The overall strategy follows that of \cite{barrosonascimento2025bilateralbaseextensionsemantics}, based on Sandqvist's simulation base construction. However, in our case we must additionally ensure that the simulation bases used in the argument satisfy the constraints required in the definition of $\Bes$.
To achieve this, we show that the simulation bases are themselves consistent. The key step is to establish atomic normalisation for these bases. Since normalisation has already been proved for $\BPR$, this step follows directly. The resulting normalisation procedure then allows us to extend the consistency argument to the simulation bases, ensuring that they meet the requirements of our semantic framework. 

\begin{definition}
   If $\Gamma$ is a set of formulas, then $\Gamma^{Sub} = \{ \phi \ | \ \phi $ is a subformula of $\psi$ and $\psi \in \Gamma\}$.
\end{definition}

\begin{definition}
  An {\em atomic mapping} $\alpha$ for a set $\Gamma^{Sub}$ is any injective function $\alpha: \Gamma^{Sub} \mapsto \At$ such that, for all $p \in \Gamma^{Sub}$ with $p \in \At$, $\alpha(p) = p$.

\end{definition}

  For every formula $\phi\in\Gamma^{Sub}$, we will denote the atom $\alpha(\phi)$ by $p^{\phi}$. Note that, since $\alpha$ is injective, this notation is well defined, as no two different formulas may be assigned to the same atom.
\medskip

  Given an atomic mapping $\alpha$ for $\Gamma$, a {\em simulation base} $\mathcal{U}$ is a base which contains rules mimicking the rules of system $\BPR$. The construction of this base is exactly as in \cite{barrosonascimento2025bilateralbaseextensionsemantics}, with the addition of the following rules for all $\phi \in \Gamma$:
  \vspace{-0.7cm}
\begin{prooftree}
    \AxiomC{}
    \UnaryInfC{$p^{\phi}$}
    \AxiomC{}
    \doubleLine
    \UnaryInfC{$p^{\phi}$}
    \RightLabel{\tiny $PR(+),p$}

    \BinaryInfC{$q$}
    \DisplayProof
    \quad \quad \quad
    \AxiomC{}
    \UnaryInfC{$p^{\phi}$}
    \AxiomC{}
    \doubleLine
    \UnaryInfC{$p^{\phi}$}
    \doubleLine
    \RightLabel{\tiny $PR(-),p$}
    \BinaryInfC{$q$}
\end{prooftree}
 \vspace{-0.2cm}
Furthermore, notice that instances of $p^{\bot}$ and $p^{\top}$ are now written as simply $\bot$ and $\top$, as in the present framework those are treated as if they were atoms. 

%  Note that the subformula property is crucial in the definition of the simulation base, since the atomic mapping is defined only for atoms occurring as subformulas of the relevant formulas. The subformula property ensures that the atomic mapping is well-defined and that every step in a derivation in $\BPR$ can be simulated by an application of some rule in $\mathcal{U}$.

%\Victor{This is not entirely the case. I think that if a logic does not satisfy the subformula property then in general we cannot give it a PTS because validity cannot be inductively defined by considering only subformulas, but we don't need to explicitly rely on the subformula property (as in: use it as a lemma) in order to build simulation bases or prove soundness/completeness.}
%\Maria{But if a random (non-subformula of premises of conclusion) non-atomic formula appears in a derivation, what would be its counterpart when mirroring the derivation with the rules in the simulation base? }
  
The proof of atomic normalisation follows standard lines. The atomic reductions are analogous to those used in our normalisation proof. The following example illustrates how these reductions can be applied in the setting of simulation bases.

\begin{example}
The following is a proper reduction for simulation bases:
\vspace{-0.3cm}
\begin{prooftree}
    \AxiomC{$[p^{\phi}]$}
    \noLine
    \UnaryInfC{$\Sigma$}
    \UnaryInfC{$p^{\psi}$}             \RightLabel{\scriptsize{$\rightarrow I(+), p^{\phi \to \psi}$}}
    \UnaryInfC{$p^{\phi \rightarrow \psi}$}
    \AxiomC{$\Sigma'$}
    \UnaryInfC{$p^{\phi}$}    \RightLabel{\scriptsize{$\rightarrow E(+), p^{\phi \to \psi}$}}
    \BinaryInfC{$p^{\psi}$} 
    \DisplayProof
    \quad
    $\Longrightarrow$    
    \quad
    \AxiomC{$\Sigma'$}
    \UnaryInfC{$p^{\phi}$}
    \noLine
    \UnaryInfC{$\Sigma$}
    \UnaryInfC{$p^{\psi}$} 
\end{prooftree}
\vspace{-0.3cm}
\end{example}
\begin{theorem}\label{thm:atomicnorm}
    Let $\Pi$ be a derivation of $\phi$ from the set of assumptions $\Gamma$ and contra-assumptions $\Delta$, in $\mathcal{U}$. Then $\Pi$ reduces to a normal derivation $\Pi'$ of $\phi$ from the set of assumptions $\Gamma$ and contra-assumptions $\Delta$.
\end{theorem}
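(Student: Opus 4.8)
The plan is to reduce atomic normalisation in the simulation base $\mathcal{U}$ to the normalisation theorem already established for $\BPR$ (Theorem~\ref{thm:normalisation}), exploiting the fact that, by construction, every atomic rule of $\mathcal{U}$ is a faithful copy of a rule of $\BPR$ under the atomic mapping $\alpha$. First I would make this correspondence precise by defining a translation $(\cdot)^{\flat}$ sending an atomic derivation $\Pi$ in $\mathcal{U}$ to a derivation $\Pi^{\flat}$ in $\BPR$, by replacing each occurrence of an atom $p^{\psi}$ in the range of $\alpha$ with the formula $\psi$ it encodes, and leaving every other atom (including $\bot$, $\top$ and the arbitrary conclusion atoms $q$ of the $PR$ rules and of the elimination-style rules) untouched. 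Since $\alpha$ is injective and fixes the atomic members of $\Gamma^{Sub}$, this map is well defined, and because each $\mathcal{U}$-rule mimics exactly one $\BPR$-rule, $(\cdot)^{\flat}$ preserves the entire structure relevant to normalisation: introduction rules go to introduction rules, elimination rules to elimination rules, the pair $PR(+),p$ and $PR(-),p$ to $PR(+)$ and $PR(-)$, and discharge of assumptions and counter-assumptions is respected.

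The decisive point of this translation is that it restores the \emph{degrees} needed for the termination measure. In $\mathcal{U}$ every formula is an atom, so naively all maximal segments would have degree $0$; but $(\cdot)^{\flat}$ sends an atom $p^{\psi}$ to $\psi$, whose degree $d[\psi]$ is exactly the rank that drives the induction. Thus maximal segments, their degrees, the complexity value and the inductive value of $\Pi$ coincide with those of $\Pi^{\flat}$, and $\Pi$ is in normal form precisely when $\Pi^{\flat}$ is. With this in place the argument runs as follows: given $\Pi$ deriving $\phi$ from $\Gamma$ and $\Delta$ in $\mathcal{U}$, I would pass to $\Pi^{\flat}$, apply Theorem~\ref{thm:normalisation} to obtain a normal $\BPR$-derivation of $\phi^{\flat}$ from the translated open premises, and translate back along $\alpha$ to recover a normal atomic derivation $\Pi'$ in $\mathcal{U}$ with the same open assumptions $\Gamma$ and counter-assumptions $\Delta$. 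The Example preceding the statement confirms, for the $\to$-detour, that each Prawitz-style reduction and each $PR$-specific reduction used in the proof of Theorem~\ref{thm:normalisation} has an exact atomic counterpart, so one may equally replay the proof directly inside $\mathcal{U}$, first reaching pre-normal form via the atomic analogues of Lemmas~\ref{atomicpremises} and~\ref{MariaLemma} and then inducting on the inductive value; the two routes agree under $(\cdot)^{\flat}$.

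The main obstacle is to verify that the reduction \emph{stays within the image of the translation}, so that the normal $\BPR$-derivation genuinely arises from a $\mathcal{U}$-derivation and can be translated back. This is exactly where the subformula property for $\BPR$ (the corollary to Theorem~\ref{thm:normalisation}) is essential: since $\Gamma^{Sub}$ is closed under subformulas and the open premises of $\Pi^{\flat}$ lie in $\Gamma^{Sub}$, every formula occurring in the normal derivation is a subformula of the conclusion or of some open premise, hence lies in $\Gamma^{Sub}$ and is therefore the image of its atom under $\alpha$; consequently the back-translation is total. A secondary point to check is that applying the pre-normal-form reductions to $\Pi^{\flat}$ may push $PR$ rules down onto atomic subformulas $p$, which correspond to themselves under $\alpha$ (as $\alpha$ fixes atoms of $\Gamma^{Sub}$) and for which $\mathcal{U}$ already contains the corresponding instances of $PR(+),p$ and $PR(-),p$; together with the uniform handling of $\bot$, $\top$ and the fresh conclusion atoms, this guarantees that normal form transfers without gaps. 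Once these checks are discharged, $\Pi'$ is the required normal atomic derivation.
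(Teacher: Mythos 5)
Your proposal is correct in substance but takes a genuinely different route from the paper. The paper's proof is a direct replay: it re-enacts the reduction steps of Theorem~\ref{thm:normalisation} \emph{inside} $\mathcal{U}$, as atomic reductions on the mimicking rules (the displayed Example is one such reduction), with the induction measure read off from the encoded formulas; since every reduction step is performed with rules that already belong to $\mathcal{U}$, closure of $\mathcal{U}$-derivations under reduction is automatic. You instead invoke Theorem~\ref{thm:normalisation} as a black box, transporting $\Pi$ into $\BPR$ along $(\cdot)^{\flat}$, normalising there, and pulling back along $\alpha$. This buys wholesale reuse of the theorem without re-checking each reduction case, at the price of two translation lemmas (structure preservation in both directions); all of the real content of your argument sits in the pull-back direction, and you correctly identify the two places where it could break (degree transfer, and totality of the back-translation).

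There is, however, one soft spot in the back-translation argument as you state it. Totality needs more than the subformula property for \emph{formulas}: every \emph{rule application} in the normal $\BPR$-derivation must be an instance of a rule actually present in $\mathcal{U}$. This holds for the logical mimicking rules and for $PR$ applications whose premises lie in the range of $\alpha$ (your ``secondary point''), but $\mathcal{U}$ also licenses steps whose translation is a $PR$ application on an atom \emph{outside} the range of $\alpha$: being epistemically adequate, $\mathcal{U}$ is closed under the epistemic-consistency rules for \emph{all} atoms $q$, and those rules conclude only $\bot$ or $\top$. If $\Pi$ contains such a step followed by the atomic $\bot(+)$-style (or $\top(-)$-style) rule, the pre-normal-form reduction of Lemma~\ref{MariaLemma}, applied to $\Pi^{\flat}$, merges the two into a single $PR$ application with premises $q \notin \mathrm{ran}\,\alpha$ and a conclusion other than $\bot$ or $\top$ --- and, as the schemas $PR(+),p$ and $PR(-),p$ are written, $\mathcal{U}$ contains no such rule (their premises must be of the form $p^{\phi}$). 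The repair is easy but must be made explicit: either back-translate such an application as the two-step ``epistemic-consistency rule followed by $\bot$-elimination'' and define atomic normal form so that this configuration is admissible (which is all that Lemma~\ref{lemma:simulationconsistent} requires), or note that for the completeness argument one only needs derivations whose open atoms are coding atoms, where the issue does not arise. The paper's direct-replay route makes this choice implicitly, since its reductions can only ever be those expressible with the rules $\mathcal{U}$ actually has.
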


\begin{corollary}\label{lemma:simulationconsistent}
    $\mathcal{U}$ is consistent.
\end{corollary}
The proofs of these results can be obtained through a straightforward adaptation of the proofs of Theorem~\ref{thm:normalisation} and Corollary~\ref{cor:consistencyBPR} to the atomic setting.

\begin{lemma}\label{lemma:atomicsupportiffderivability}
Let $\phi \in \Gamma^{Sub}$, $\alpha$ be an atomic mapping for $\Gamma^{Sub}$ and $\mathcal{U}$ a simulation base defined with $\alpha$. Then for all epistemically adequate $\mathcal{B} \supseteq \mathcal{U}$, it holds that $\Vdash^{*}_{\mathcal{B}} \phi$ iff $\vdash^{*}_{\mathcal{B}} p^{\phi}$.
\end{lemma}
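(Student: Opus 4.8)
The plan is to prove Lemma~\ref{lemma:atomicsupportiffderivability} by induction on the degree $d[\phi]$ of the formula $\phi$, establishing the biconditional for both signs simultaneously (using the $*$ convention), since the proof and refutation clauses for the connectives are mutually entangled. The base case, where $\phi = p$ is an atom, should be essentially immediate: by construction $\alpha(p) = p$, so $\Vdash^{*}_{\mathcal{B}} p$ unfolds via clauses \ref{c:at+}/\ref{c:at-} to $\vdash^{*}_{\mathcal{B}} p$, which is exactly $\vdash^{*}_{\mathcal{B}} p^{p}$. The inductive step will proceed connective by connective ($\land, \lor, \to, \mapsfrom$), in each case for both the $+$ and $-$ readings, so there are eight principal subcases; by the stated duality between proof and refutation conditions, roughly half follow by a symmetric argument from the other half.

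First I would treat a representative ``easy'' connective such as $\land$ on the positive side. Here \ref{c:and+} gives $\Vdash^{+}_{\mathcal{B}}\phi\land\psi$ iff $\Vdash^{+}_{\mathcal{B}}\phi$ and $\Vdash^{+}_{\mathcal{B}}\psi$; by the induction hypothesis these are equivalent to $\vdash^{+}_{\mathcal{B}} p^{\phi}$ and $\vdash^{+}_{\mathcal{B}} p^{\psi}$; and the simulation base $\mathcal{U}$ contains precisely the rules mimicking $I\land(+)$ and $E_i\land(+)$ relating $p^{\phi}, p^{\psi}$ to $p^{\phi\land\psi}$, so $\vdash^{+}_{\mathcal{B}} p^{\phi}$ and $\vdash^{+}_{\mathcal{B}} p^{\psi}$ hold iff $\vdash^{+}_{\mathcal{B}} p^{\phi\land\psi}$. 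The forward use of the introduction rule is direct; for the converse I would invoke atomic normalisation (Theorem~\ref{thm:atomicnorm}) and the subformula/shape analysis to argue that any atomic proof of $p^{\phi\land\psi}$ can be taken in normal form, so that the last rule is the simulated introduction and the premises $\vdash^{+}_{\mathcal{B}} p^{\phi}$, $\vdash^{+}_{\mathcal{B}} p^{\psi}$ can be extracted.

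The harder subcases are those whose semantic clause quantifies over all extensions and all atoms, namely \ref{c:and-}, \ref{c:or+}, and the elimination-style directions generally. For these I would follow Sandqvist's technique: to pass from support to derivability one instantiates the universally quantified extension $\mathcal{C}$ and atom $p$ at the simulation base itself and at $p^{\phi\circ\psi}$, using that $\mathcal{U}$ contains the simulated elimination rule so that $\emptyset;\phi\Vdash^{*}_{\mathcal{C}} p^{\phi\circ\psi}$ (and symmetrically for $\psi$) holds by the induction hypothesis together with the rule, whence the clause delivers $\Vdash^{*}_{\mathcal{C}} p^{\phi\circ\psi}$, i.e.\ $\vdash^{*}_{\mathcal{C}} p^{\phi\circ\psi}$; monotonicity and the fact that every $\mathcal{B}\supseteq\mathcal{U}$ is assumed epistemically adequate keep everything inside the admissible class of bases. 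The main obstacle, which is exactly where the consistency machinery is needed, is ensuring that this instantiation does not collapse: one must guarantee that the simulation base $\mathcal{U}$ (and its relevant extensions) are logically and epistemically consistent so that $\Vdash^{*}_{\mathcal{C}} p$ does not hold trivially for the chosen witness atom, which is precisely what Lemma~\ref{lemma:simulationconsistent} supplies. I would therefore lean on Lemma~\ref{lemma:simulationconsistent} at each elimination-style step to license the separation of support into genuine derivability, and close the remaining $\lor$ and $\mapsfrom$ cases by appeal to the duality already exploited for $\land$ and $\to$.
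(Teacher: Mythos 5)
Your overall skeleton---a simultaneous induction on the degree of $\phi$ for both signs, the base case via $\alpha(p)=p$, the use of simulated introduction/elimination rules for the ``local'' clauses, and Sandqvist-style instantiation of the quantified extension and atom at $p^{\phi\circ\psi}$ for \ref{c:and-} and \ref{c:or+}---is the right shape, and it matches the proof the paper imports (the paper simply cites Lemma 7 of \cite{EcumenicalPTS}, with $\bot,\top$ read as atoms). But two steps do not work as you describe. First, the converse of the $\land(+)$ case should not go through normalisation at all. The lemma quantifies over \emph{every} epistemically adequate $\mathcal{B}\supseteq\mathcal{U}$, and such a $\mathcal{B}$ may contain arbitrary additional atomic rules, whereas Theorem~\ref{thm:atomicnorm} is about derivations in $\mathcal{U}$; moreover, even in $\mathcal{U}$ a normal closed derivation of $p^{\phi\land\psi}$ need not end with the simulated introduction (it could end with a simulated $E\lor(+)$, $\bot(+)$ or $PR$ rule), so ``the last rule is the simulated introduction'' requires a full track analysis you do not have. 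The correct and much simpler argument is that $\mathcal{U}$ contains simulated \emph{elimination} rules: compose the given derivation of $p^{\phi\land\psi}$ with them to get $\vdash^{+}_{\mathcal{B}}p^{\phi}$ and $\vdash^{+}_{\mathcal{B}}p^{\psi}$, then apply the induction hypothesis. (Theorem~\ref{thm:atomicnorm} and Lemma~\ref{lemma:simulationconsistent} are needed elsewhere: to show that $\mathcal{U}$ is itself an admissible base and in case 1 of the completeness proof, not inside this lemma.)

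Second, and this is the genuine gap, you omit the step where the consistency constraint actually bites. In the derivability-to-support direction for \ref{c:or+} (and \ref{c:and-}, \ref{c:to+}, \ref{c:mapsfrom-}) one takes an arbitrary adequate $\mathcal{C}\supseteq\mathcal{B}$ and atom $q$ with $\phi;\emptyset\Vdash^{+}_{\mathcal{C}}q$ and $\psi;\emptyset\Vdash^{+}_{\mathcal{C}}q$, and must produce $\vdash^{+}_{\mathcal{C}}q$ via the simulated $E\lor(+)$ rule; this requires converting the hypothetical support $\phi;\emptyset\Vdash^{+}_{\mathcal{C}}q$ into a hypothetical atomic derivation $p^{\phi}\vdash_{\mathcal{C}}q$. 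The standard move adjoins an axiom concluding $p^{\phi}$ to $\mathcal{C}$, but under the present semantics this augmented base may fail to be logically consistent (e.g.\ if $\mathcal{C}$ refutes $p^{\phi}$) and then lies \emph{outside} the class of bases over which all clauses quantify, so the usual argument stalls. The proof the paper relies on resolves exactly this by a case split: if the augmented base is adequate one proceeds as usual, replacing axiom applications by open assumptions; if it is inconsistent, one extracts a hypothetical derivation of $\bot$ (or a refutation of $\top$) from $p^{\phi}$ in $\mathcal{C}$ and reaches $p^{\phi}\vdash_{\mathcal{C}}q$ through the atomic ex falso rules. Your appeal to Lemma~\ref{lemma:simulationconsistent} cannot substitute for this case analysis: that lemma asserts only that $\mathcal{U}$ itself is consistent, and says nothing about the axiom-augmented extensions of an arbitrary $\mathcal{B}\supseteq\mathcal{U}$ constructed mid-proof. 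Note also that consistency is not needed where you place it (the support-to-derivability direction, via instantiation at $\mathcal{C}=\mathcal{B}$ and $p=p^{\phi\circ\psi}$, goes through with no consistency considerations); it is needed precisely in the opposite direction, which your proposal leaves unspecified.
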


Due to our use of the consistency constraint, our completeness proof has to use the strategy developed in \cite{EcumenicalPTS} instead of the traditional strategy for $\Bes$. The proof can be found in Appendix~\ref{app:sec5}. 

\begin{theorem} \label{thm:completeness}
    If $\Gamma;\Delta\Vdash^* \phi$ then $\Gamma;\Delta\vdash^*_{\BPR} \phi$.
\end{theorem}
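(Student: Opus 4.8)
The plan is to prove completeness by contraposition, adapting Sandqvist's simulation-base technique as refined in \cite{EcumenicalPTS} for the consistency-constrained setting. Assuming $\Gamma ; \Delta \vdash^{*}_{\BPR} \phi$ fails, I would build a concrete epistemically adequate base witnessing $\Gamma ; \Delta \nVdash^{*} \phi$. First I fix the finite set $\Theta = (\Gamma \cup \Delta \cup \{\phi\})^{Sub}$ of all relevant subformulas, choose an atomic mapping $\alpha$ for $\Theta$, and form the simulation base $\mathcal{U}$ whose rules mirror those of $\BPR$ (including the new $PR(+),p$ and $PR(-),p$ schemas). The crucial first obligation is that $\mathcal{U}$, and every extension used, is a legitimate base in the sense of our semantics, i.e.\ epistemically adequate; logical consistency of $\mathcal{U}$ is exactly Lemma~\ref{lemma:simulationconsistent}, while unit completeness and epistemic consistency are arranged by construction (the $PR$-mimicking rules guarantee the latter).

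The technical engine is Lemma~\ref{lemma:atomicsupportiffderivability}, which transfers between semantic support over bases extending $\mathcal{U}$ and atomic derivability of the proxy atoms: $\Vdash^{*}_{\mathcal{B}} \chi$ iff $\vdash^{*}_{\mathcal{B}} p^{\chi}$ for every $\chi \in \Theta$ and every epistemically adequate $\mathcal{B} \supseteq \mathcal{U}$. Using this I would argue as follows. Suppose toward contradiction that $\Gamma ; \Delta \Vdash^{*} \phi$. Instantiating validity at the base $\mathcal{U}$, the hypotheses $\Vdash^{+}_{\mathcal{U}} \gamma$ for $\gamma \in \Gamma$ and $\Vdash^{-}_{\mathcal{U}} \delta$ for $\delta \in \Delta$ would force $\Vdash^{*}_{\mathcal{U}} \phi$. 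But by construction $\mathcal{U}$ contains precisely the atomic axioms $p^{\gamma}$ (as proof axioms) and the refutation axioms for each $p^{\delta}$, so Lemma~\ref{lemma:atomicsupportiffderivability} gives $\vdash^{+}_{\mathcal{U}} p^{\gamma}$ and $\vdash^{-}_{\mathcal{U}} p^{\delta}$, i.e.\ the semantic premises are met at $\mathcal{U}$. Hence $\Vdash^{*}_{\mathcal{U}} \phi$, and applying Lemma~\ref{lemma:atomicsupportiffderivability} once more yields $\vdash^{*}_{\mathcal{U}} p^{\phi}$.

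The final move is to translate this atomic $\mathcal{U}$-derivation of $p^{\phi}$ back into a $\BPR$-derivation of $\phi$ from $\Gamma ; \Delta$, contradicting the failure assumption. Since every rule of $\mathcal{U}$ is a faithful image under $\alpha$ of a $\BPR$ rule restricted to the subformulas in $\Theta$, I would define the inverse translation $\alpha^{-1}$ on derivations and show by induction on the atomic derivation that each rule application maps to the corresponding $\BPR$ rule application (the injectivity of $\alpha$ making this well defined, and the fact that all atoms occurring are proxies $p^{\chi}$ for $\chi \in \Theta$ being guaranteed by atomic normalisation, Theorem~\ref{thm:atomicnorm}, together with the subformula property). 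This yields $\Gamma ; \Delta \vdash^{*}_{\BPR} \phi$, the desired contradiction. Both signs $+$ and $-$ are handled uniformly through the asterisk convention.

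The main obstacle I expect is the faithfulness of the translation in the presence of the coordination rules. One must ensure that the atomic $PR(+),p$ and $PR(-),p$ instances, which can conclude an \emph{arbitrary} atom $q$ from $p^{\chi}$ and $\overline{p^{\chi}}$, correspond exactly to applications of $\BPR$'s $PR(+)$ and $PR(-)$ and do not let $\mathcal{U}$ derive spurious proxies; this is precisely where atomic normalisation and the subformula property are indispensable, since they confine any normal $\mathcal{U}$-derivation of $p^{\phi}$ to atoms lying in the range of $\alpha$ over $\Theta$, so that $\alpha^{-1}$ is everywhere defined and the back-translation lands in $\BPR$ over the intended assumptions $\Gamma ; \Delta$.
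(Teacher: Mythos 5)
Your overall strategy (simulation base $\mathcal{U}$, the transfer Lemma~\ref{lemma:atomicsupportiffderivability}, back-translation of atomic derivations into $\BPR$) is indeed the paper's strategy, but there is a genuine gap at the central step. You assert that ``by construction $\mathcal{U}$ contains precisely the atomic axioms $p^{\gamma}$ (as proof axioms) and the refutation axioms for each $p^{\delta}$''. The simulation base does not contain these axioms: it contains only rules mimicking the $\BPR$ rules. The paper instead forms a second base $\mathcal{B} \supseteq \mathcal{U}$ by adding a proof axiom concluding $p^{\gamma}$ for each $\gamma \in \Gamma$ and a refutation axiom concluding $p^{\delta}$ for each $\delta \in \Delta$. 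Folding those axioms into the base, as you do, creates the problem your argument cannot avoid: the resulting base need not be logically consistent, hence need not be epistemically adequate, hence is not a legitimate base of the semantics at all. In that situation you may neither instantiate $\Gamma;\Delta\Vdash^{*}\phi$ at it nor invoke Lemma~\ref{lemma:atomicsupportiffderivability} (which is stated only for epistemically adequate extensions of $\mathcal{U}$), and Lemma~\ref{lemma:simulationconsistent} gives you nothing, since the base carrying the premise axioms is no longer the bare simulation base.

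This inconsistent case is not a corner that can be ignored; it is exactly where completeness has real content in this logic. Take $\Gamma=\{p\}$ and $\Delta=\{p\}$: by Theorem~\ref{theorem:positivenegativesupport} no adequate base supports $p$ both positively and negatively, so $p;p\Vdash^{*}\phi$ holds vacuously for every $\phi$, and completeness demands $p;p\vdash^{*}_{\BPR}\phi$ --- which is supplied by $PR(+)$/$PR(-)$, not by any transfer argument. The paper's proof therefore splits on whether $\mathcal{B}$ is consistent: if it is not, one extracts a derivation of $\bot$ (or a refutation of $\top$) in $\mathcal{B}$, replaces the added axioms by assumptions, translates proxies back to formulas, and closes with $\bot(+)$ (or $\top(-)$) to reach $\phi$; only when $\mathcal{B}$ is consistent does the argument you sketched (adequacy of $\mathcal{B}$, transfer via Lemma~\ref{lemma:atomicsupportiffderivability}, monotonicity from $\mathcal{U}\subseteq\mathcal{B}$, back-translation of the atomic derivation) go through. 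Your proposal covers only this second branch, so as written it fails precisely on the vacuously valid sequents that the $PR$ rules were introduced to capture. A smaller slip in the same passage: the added axioms give $\vdash^{+}_{\mathcal{B}}p^{\gamma}$ and $\vdash^{-}_{\mathcal{B}}p^{\delta}$ directly by construction; Lemma~\ref{lemma:atomicsupportiffderivability} is then used in the opposite direction, to convert these derivability facts into the semantic support $\Vdash^{+}_{\mathcal{B}}\gamma$ and $\Vdash^{-}_{\mathcal{B}}\delta$ needed to trigger the validity assumption.
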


We can prove now that weak falsity does not imply strong falsity in $\BPR$.
%because it is not the case that from a proof showing the impossibility of proving $\phi$ we can extract an actual counterexample to $\phi$, the same once again holding for an analogous concept of weak provability. That this is in fact the case in our new logic can be shown  by two simple counterexamples. 
In fact, consider a base $\mathcal{B}$ containing only the rule $R1$ and a base $\mathcal{C}$ containing only the rule $R2$ below:
\vspace{-0.4cm}
\begin{prooftree}
\AxiomC{}
    \UnaryInfC{$p$}
    \RightLabel{\scriptsize{$R1$}}
    \UnaryInfC{$\bot$}
    \DisplayProof
    \qquad
    \qquad
        \AxiomC{}
        \doubleLine
        \UnaryInfC{$p$}
        \doubleLine
    \RightLabel{\scriptsize{$R2$}}
    \UnaryInfC{$\top$}
\end{prooftree}
\vspace{-0.3cm}
Since our bases satisfy the consistency constraint there cannot be any $\mathcal{D} \supseteq \mathcal{B}$ with $\Vdash^{+}_{\mathcal{D}} p$ or any  $\mathcal{E} \supseteq \mathcal{C}$ with $\Vdash^{-}_{\mathcal{E}} p$ since, if there were, we would obtain deductions showing $\vdash^{+}_{\mathcal{D}} p$ and $\vdash^{-}_{\mathcal{E}} p$ which could be composed with the rules to show that such bases are inconsistent. So from the semantic clauses $(\to +)$ and $(\mapsfrom -)$ we conclude $\Vdash^{+}_{\mathcal{D}} p \to \bot$ and $\Vdash^{-}_{\mathcal{E}} \top \mapsfrom p$. On the other hand, since there are no refutation rules in $\mathcal{B}$ and no proof rules in $\mathcal{C}$ we immediately conclude $\nvdash^{-}_{\mathcal{B}} p$ and $\nvdash^{+}_{\mathcal{C}} p$, meaning that in $\mathcal{B}$ we can prove weak falsity without providing a constructive counterexample and in $\mathcal{C}$ we can prove weak provability without providing a constructive proof, showing that the concepts interact as expected.

\section{Concluding remarks and future work}\label{sec:conc}
In this paper, we develop a form of bilateralism in which proofs and refutations are distinct but incompatible assertoric acts, and we establish syntactic and semantic results for the logic $\BPR$.

On the syntactic side, we show that our natural deduction system--extending an adapted version of Wansing's $\Bint$--is weakly normalising. As corollaries, we obtain the subformula property and consistency, ensuring that all proofs and refutations can be reduced to analytic form.

On the semantic side, we introduce a base-extension semantics that is sound and complete with respect to the proof system. Because it relies on the explicit construction of proofs and counterexamples via atomic rules, the semantics remains faithful to the mathematical practice of proving and refuting, while also supporting epistemic interpretations.

Finally, we show that $\BPR$ internalises constructive falsity via refutation and supports the definition of further operators, including intuitionistic negation. Its relationship to Nelson's $\Nt$ mirrors that between $\Bint$ and $\Nf$.

There are several directions for future work. A natural next step is to develop alternative proof systems for $\BPR$, in particular a sequent calculus along the lines of~\cite{Ayhan-seq}, enabling systematic comparison with existing frameworks such as~\cite{negri-sn}. This also opens the door for automatic reasoning. Further directions include first-order extensions, applications to other logics with strong negation, and a deeper analysis of the proof-theoretic and semantic structure of $\BPR$.

\newpage

\bibliography{references}

@book{KurbisProofandFalsity2019-KRBPAF,
	address = {Cambridge, UK},
	author = {Nils K\"{u}rbis},
	editor = {},
	publisher = {Cambridge University Press},
	title = {Proof and Falsity: A Logical Investigation},
	year = {2019}
}

@book{Kapsner2014-KAPLAF,
	address = {Cham, Switzerland},
	author = {Andreas Kapsner},
	editor = {},
	publisher = {Springer},
	title = {Logics and Falsifications: A New Perspective on Constructivist Semantics},
	year = {2014}
}

@book{rauszer1980algebraic,
  title={An algebraic and Kripke-style approach to a certain extension of intuitionistic logic},
  author={Rauszer, Cecylia and Rauszer, C},
  volume={167},
  year={1980},
  publisher={Pa{\'n}stwowe Wydawnictwo Naukowe}
}

@article{Kurbis-Petrukhin-2021, 
title={Normalisation for Some Quite Interesting Many-Valued Logics}, 
volume={30},  
DOI={10.12775/LLP.2021.009}, 
number={3}, journal={Logic and Logical Philosophy}, 
author={Nils K\"{u}rbis and Yaroslav Petrukhin},
year={2021}, 
pages={493--534} 
}

@article{Gurevich1977,
author = {Yuri Gurevich},
year = {1977},
pages = {49--59},
title = {Intuitionistic logic with strong negation},
volume = {36},
journal = {Studia Logica},
doi = {10.1007/BF02121114}
}

@article{negri-sn,
author = {Norihiro Kamide  and Sara Negri},
year = {2025},
pages = {},
title = {Unified Natural Deduction for Logics of Strong Negation},
volume = {66},
issue = {4},
journal = {Notre Dame Journal of Formal Logic},
doi = {10.1215/00294527-2025-0016}
}

@article{DBLP:journals/jphil/Simonelli24,
  author       = {Ryan Simonelli},
  title        = {A General Schema for Bilateral Proof Rules},
  journal      = {J. Philos. Log.},
  volume       = {53},
  number       = {3},
  pages        = {623--656},
  year         = {2024},
  doi          = {10.1007/S10992-024-09743-W},
  timestamp    = {Fri, 31 May 2024 21:06:04 +0200},
  biburl       = {https://dblp.org/rec/journals/jphil/Simonelli24.bib},
  bibsource    = {dblp computer science bibliography, https://dblp.org}
}

@article{DBLP:journals/jsyml/Wansing99,
  author       = {Heinrich Wansing},
  title        = {Displaying The Modal Logic of Consistency},
  journal      = {J. Symb. Log.},
  volume       = {64},
  number       = {4},
  pages        = {1573--1590},
  year         = {1999},
  doi          = {10.2307/2586798},
}

@article{Stafford,
author = {Will Stafford, Thomas Piecha, Peter Schroeder-Heister}, 
year = {2025},
title = {Logics of Proof-Theoretic Validity},
journal = {Topoi},
doi = {10.1007/s11245-025-10248-7},
}

@article{DBLP:journals/logcom/dAragona25,
  author       = {Antonio Piccolomini d'Aragona},
  title        = {A comparison of three kinds of monotonic proof-theoretic semantics
                  and the base-incompleteness of intuitionistic logic},
  journal      = {J. Log. Comput.},
  volume       = {35},
  number       = {8},
  year         = {2025},
  doi          = {10.1093/LOGCOM/EXAF062},
  timestamp    = {Sun, 04 Jan 2026 13:45:22 +0100},
  biburl       = {https://dblp.org/rec/journals/logcom/dAragona25.bib},
  bibsource    = {dblp computer science bibliography, https://dblp.org}
}

@InProceedings{gabbay82,
author={Gabbay, Dov M.},
editor={Loveland, D. W.},
title={Intuitionistic basis for non-monotonic logic},
booktitle={6th Conference on Automated Deduction},
year={1982},
publisher={Springer Berlin Heidelberg},
address={Berlin, Heidelberg},
pages={260--273},
isbn={978-3-540-39240-8},
doi = {10.1007/BFb0000064}
}

@article{Synthese-Simonelli,
author = {Simonelli, Ryan},
year = {2026},
month = {01},
pages = {},
title = {``Yes,''  ``No,'' neither, and both},
volume = {207},
journal = {Synthese},
doi = {10.1007/s11229-025-05385-w}
}

@article{DBLP:journals/jancl/JarvinenRR24,
  author       = {Jouni J{\"{a}}rvinen and
                  S{\'{a}}ndor Radeleczki and
                  Umberto Rivieccio},
  title        = {Nelson algebras, residuated lattices and rough sets: {A} survey},
  journal      = {J. Appl. Non Class. Logics},
  volume       = {34},
  number       = {2-3},
  pages        = {368--428},
  year         = {2024},
  doi          = {10.1080/11663081.2024.2336386},
  timestamp    = {Sun, 19 Jan 2025 14:53:22 +0100},
  biburl       = {https://dblp.org/rec/journals/jancl/JarvinenRR24.bib},
  bibsource    = {dblp computer science bibliography, https://dblp.org}
}

@article{Sano2025,
author       = {Katsuhiko Sano and Masanobu Toyooka},
  title        = {Cut-free Sequent Calculi for {W}ansing's Expansions of {N}elson's Logics},
  journal      = {Studia Logica},
  doi          = {10.1007/s11225-025-10210-z},
  year = {2025}
}

@article{Ayhan-seq,
author = {Ayhan, Sara and Wansing, Heinrich},
year = {2023},
month = {07},
pages = {50},
title = {On Synonymy in Proof-Theoretic Semantics: The Case of \(\mathtt{2Int}\)},
volume = {52},
journal = {Bulletin of the Section of Logic},
doi = {10.18778/0138-0680.2023.18}
}

@article{DBLP:journals/japll/Wansing17,
  author       = {Heinrich Wansing},
  title        = {A more general general proof theory},
  journal      = {J. Appl. Log.},
  volume       = {25},
  pages        = {23--46},
  year         = {2017},
  doi          = {10.1016/J.JAL.2017.01.002},
  timestamp    = {Tue, 29 Dec 2020 18:22:16 +0100},
  biburl       = {https://dblp.org/rec/journals/japll/Wansing17.bib},
  bibsource    = {dblp computer science bibliography, https://dblp.org}
}

@article{Wan110.1093/logcom/ext035,
    author = {Wansing, Heinrich},
    title = "{Falsification, natural deduction and bi-intuitionistic logic}",
    journal = {Journal of Logic and Computation},
    volume = {26},
    number = {1},
    pages = {425-450},
    year = {2013},
    month = {07},
    abstract = "{ A kind of a bi-intuitionistic propositional logic is introduced that combines verification and its dual. This logic, 2Int , is motivated by a certain dualization of the natural deduction rules for intuitionistic propositional logic, Int . It is shown that 2Int can be faithfully embedded into Int with respect to validity. Moreover, 2Int can be faithfully embedded into dual intuitionistic logic, DualInt , with respect to a notion of dual validity. In 2Int , from a falsificationist perspective, intuitionistic negation internalizes support of truth, whereas from a verificationist perspective, co-negation internalizes support of falsity. }",
    issn = {0955-792X},
    doi = {10.1093/logcom/ext035},
    eprint = {https://academic.oup.com/logcom/article-pdf/26/1/425/7952117/ext035.pdf},
}

@article{Metcalfe2009-METASC,
	author = {George Metcalfe},
	journal = {Bulletin of the Section of Logic},
	number = {1/2},
	pages = {1--7},
	title = {A Sequent Calculus for Constructive Logic with Strong Negation as a Substructural Logic},
	volume = {38},
	year = {2009}
}

@article{Nascimento2020-NASNL-2,
	author = {Thiago Nascimento and Umberto Rivieccio and Jo{\~{a}}o Marcos and Matthew Spinks},
	doi = {10.1093/jigpal/jzaa015},
	journal = {Logic Journal of the IGPL},
	number = {6},
	pages = {1182--1206},
	title = {Nelson's Logic S},
	volume = {28},
	year = {2020}
}

@article{Hasuo2003-HASKCO,
	author = {Ichiro Hasuo and Ryo Kashima},
	doi = {10.1093/jigpal/11.6.615},
	journal = {Logic Journal of the IGPL},
	number = {6},
	pages = {615--646},
	publisher = {Oxford University Press},
	title = {Kripke Completeness of First-Order Constructive Logics with Strong Negation},
	volume = {11},
	year = {2003}
}

@book{Odintsov2008-ODICNA,
	address = {Dordrecht, Netherland},
	author = {Sergei Odintsov},
	series = {Trends in Logic},
	publisher = {Springer Dordrecht},
	title = {Constructive Negations and Paraconsistency},
	doi = {10.1007/978-1-4020-6867-6},
	year = {2008}
}

@article{NelsonConstructiveFalsity,
 ISSN = {00224812},
 URL = {http://www.jstor.org/stable/2268973},
 author = {David Nelson},
 journal = {The Journal of Symbolic Logic},
 number = {1},
 pages = {16--26},
 publisher = {Association for Symbolic Logic},
 title = {Constructible Falsity},
 urldate = {2025-07-25},
 volume = {14},
 year = {1949}
}

@article{DBLP:journals/flap/Kurbis21,
  author       = {Nils K{\"{u}}rbis},
  title        = {Normalisation for Bilateral Classical Logic with some Philosophical
                  Remarks},
  journal      = {{FLAP}},
  volume       = {8},
  number       = {2},
  pages        = {531--556},
  year         = {2021},
  url          = {https://collegepublications.co.uk/ifcolog/?00044},
}

@article{Rumfitt2000-RUMYAN-2,
	author = {Ian Rumfitt},
	doi = {10.1093/mind/109.436.781},
	journal = {Mind},
	number = {436},
	pages = {781--823},
	publisher = {Oxford University Press},
	title = {Yes and No},
	volume = {109},
	year = {2000}
}

@article{ShramkoYaroslav,
 ISSN = {00393215, 15728730},
 URL = {http://www.jstor.org/stable/20016721},
 abstract = {We consider a logic which is semantically dual (in some precise sense of the term) to intuitionistic. This logic can be labeled as "falsification logic": it embodies the Popperian methodology of scientific discovery. Whereas intuitionistic logic deals with constructive truth and non-constructive falsity, and Nelson's logic takes both truth and falsity as constructive notions, in the falsification logic truth is essentially non-constructive as opposed to falsity that is conceived constructively. We also briefly clarify the relationships of our falsification logic to some other logical systems.},
 author = {Yaroslav Shramko},
 journal = {Studia Logica: An International Journal for Symbolic Logic},
 number = {2/3},
 pages = {347--367},
 publisher = {Springer},
 title = {Dual Intuitionistic Logic and a Variety of Negations: The Logic of Scientific Research},
 urldate = {2025-10-06},
 volume = {80},
 year = {2005}
}

@article{pts-91,
  author    = {Peter Schroeder-Heister},
  title = {Uniform Proof-Theoretic Semantics for Logical Constants (Abstract)},
  journal = {Journal of Symbolic Logic},
  volume = {56},
  pages = {1142},
  year = {1991}
}

@article{schroeder2006validity,
  title={Validity concepts in proof-theoretic semantics},
  author={Peter Schroeder-Heister},
  journal={Synthese},
  volume={148},
  number={3},
  pages={525--571},
  doi = {10.1007/s11229-004-6296-1},
  year={2006},
  publisher={Springer}
}

@book{prawitz1965,
	title = {Natural Deduction: A Proof-Theoretical Study},
	author = {Dag Prawitz},
	year = {1965},
	publisher = {Dover Publications}
}

@inproceedings{DBLP:conf/tableaux/Gore00,
  author       = {Rajeev Gor{\'{e}}},
  editor       = {Roy Dyckhoff},
  title        = {Dual Intuitionistic Logic Revisited},
  booktitle    = {Automated Reasoning with Analytic Tableaux and Related Methods, International
                  Conference, {TABLEAUX} 2000, St Andrews, Scotland, UK, July 3-7, 2000,
                  Proceedings},
  series       = {Lecture Notes in Computer Science},
  pages        = {252--267},
  publisher    = {Springer},
  year         = {2000},
  url          = {https://doi.org/10.1007/10722086\_21},
  doi          = {10.1007/10722086\_21},
  timestamp    = {Tue, 14 May 2019 10:00:54 +0200},
  biburl       = {https://dblp.org/rec/conf/tableaux/Gore00.bib},
  bibsource    = {dblp computer science bibliography, https://dblp.org}
}

@article{Sandqvist2015IL,
	year = {2015},
	doi = {10.1093/jigpal/jzv021},
	pages = {719--731},
	number = {5},
	title = {Base-Extension Semantics for Intuitionistic Sentential Logic},
	volume = {23},
	journal = {Logic J. of  the IGPL},
	author = {Tor Sandqvist}
}

@InCollection{sep-proof-theoretic-semantics,
	author       =	{Schroeder-Heister, Peter},
	title        =	{{Proof-Theoretic Semantics}},
	booktitle    =	{The {Stanford} Encyclopedia of Philosophy},
	editor       =	{Edward N. Zalta and Uri Nodelman},
	url =	{https://plato.stanford.edu/archives/win2022/entries/proof-theoretic-semantics/},
	year         =	{2022},
	edition      =	{{W}inter 2022},
	publisher    =	{Metaphysics Research Lab, Stanford University}
}

@misc{barrosonascimento2025bilateralbaseextensionsemantics,
      title={Bilateral base-extension semantics}, 
      author={Victor Barroso-Nascimento and Maria Os\'{o}rio},
      year={2025},
      eprint={2505.01593},
      archivePrefix={arXiv},
      primaryClass={math.LO},
      url={https://arxiv.org/abs/2505.01593}, 
}

@InProceedings{DavidNelsonS,
	author={David Nelson},
	editor={Arend Heyting},
	title={Negation and separation of concepts in constructive systems},
	booktitle={Constructivity in
Mathematics},
	year={1959},
	publisher={North-Hollan},
	address={Amsterdam},
	pages={205--225},
}

@article{EcumenicalPTS,
  title={An ecumenical view of proof-theoretic semantics},
  author={Barroso-Nascimento, Victor and Pereira, Luiz Carlos and Pimentel, Elaine},
  journal={Synthese},
  volume={206},
  number={4},
  pages={197--206},
  year={2025},
  publisher={Springer},
doi = {10.1007/s11229-025-05269-z}
}

@misc{oddsson2025strongnegationdefinable2int,
      title={Strong Negation is Definable in 2Int}, 
      author={Hrafn Valt\'{y}r Oddsson},
      year={2025},
      eprint={2501.13979},
      archivePrefix={arXiv},
      primaryClass={math.LO},
      url={https://arxiv.org/abs/2501.13979}, 
}

@article{francez2014bilateralism,
  title={Bilateralism in proof-theoretic semantics},
  author={Francez, Nissim},
  journal={J. of  Philosophical Logic},
  volume={43},
  number={2},
  pages={239--259},
  doi = {10.1007/s10992-012-9261-3},
  year={2014},
  publisher={Springer}
}

@article{DummettYesNo,
 ISSN = {00264423, 14602113},
 URL = {http://www.jstor.org/stable/3093710},
 author = {Michael Dummett},
 journal = {Mind},
 number = {442},
 pages = {289--295},
 publisher = {[Oxford University Press, Mind Association]},
 title = {'Yes', 'No' and 'Can't Say'},
 urldate = {2025-12-26},
 volume = {111},
 year = {2002}
}

@article{Kurbis2016-KRBSCO,
	author = {Nils K\"{u}rbis},
	doi = {10.1007/s10992-016-9395-9},
	journal = {Journal of Philosophical Logic},
	number = {6},
	pages = {623--644},
	publisher = {Springer Verlag},
	title = {Some Comments on {I}an {R}umfitt's Bilateralism},
	volume = {45},
	year = {2016}
}

@article{Wansing1995-HEISNI,
	author = {Heinrich Wansing},
	doi = {10.1305/ndjfl/1040308828},
	journal = {Notre Dame Journal of Formal Logic},
	number = {1},
	pages = {44--54},
	publisher = {Duke University Press},
	title = {Semantics-Based Nonmonotonic Inference},
	volume = {36},
	year = {1995}
}

@article{Valle-Inclan2023-VALHAN,
	author = {Pedro del Valle{-}Inclan},
	doi = {10.18778/0138-0680.2023.14},
	journal = {Bulletin of the Section of Logic},
	number = {3},
	pages = {377--409},
	title = {Harmony and Normalisation in Bilateral Logic},
	volume = {52},
	year = {2023}
}

@article{Gabbay2017BilateralismCritique,
  author    = {Michael Gabbay},
  title     = {Bilateralism Does Not Provide a Proof Theoretic Treatment of Classical Logic (for Technical Reasons)},
  journal   = {Journal of Applied Logic},
  volume    = {25},
  doi = {10.1016/j.jal.2017.11.001},
  pages     = {108--122},
  year      = {2017}
}

@article{Francez2018ReplyToGabbay,
  author    = {Nissim Francez},
  title     = {Bilateralism does provide a proof theoretic treatment of classical logic (for non-technical reasons)},
  journal   = {Journal of Applied Logics: The ifCoLog Journal of Logics and Their Applications},
  volume    = {5},
  number    = {8},
  pages     = {1653--1662},
  year      = {2018},
  note      = {Reply to {G}abbay's ``Bilateralism does not provide a proof theoretic treatment of classical logic (for technical reasons)''},
}

@article{AgostinoContamination,
 ISSN = {00393215, 15728730},
 URL = {http://www.jstor.org/stable/45379227},
 abstract = {In this paper we provide a detailed proof-theoretical analysis of a natural deduction system for classical propositional logic that (i) represents classical proofs in a more natural way than standard Gentzen-style natural deduction, (ii) admits of a simple normalization procedure such that normal proofs enjoy the Weak Subformula Property, (iii) provides the means to prove a Non-contamination Property of normal proofs that is not satisfied by normal proofs in the Gentzen tradition and is useful for applications, especially in formal argumentation, (iv) naturally leads to defining a notion of depth of a proof, to the effect that, for every fixed natural k, normal k-depth deducibility is a tractable problem and converges to classical deducibility as tends to infinity.},
 author = {Marcello D'Agostino and Dov Gabbay and Sanjay Modgil},
 journal = {Studia Logica: An International Journal for Symbolic Logic},
 number = {2},
 pages = {291--357},
 publisher = {Springer},
 title = {Normality, Non-contamination and Logical Depth in Classical Natural Deduction},
 urldate = {2026-01-02},
 volume = {108},
 year = {2020}
}

@article{RumfittAnswertoFerreira,
 ISSN = {00264423, 14602113},
 URL = {http://www.jstor.org/stable/20532706},
 abstract = {I explain why Fernando Ferreira's interesting formal result does not threaten the bilateralist account of the sense of the connectives.},
 author = {Ian Rumfitt},
 journal = {Mind},
 number = {468},
 pages = {1059--1063},
 publisher = {[Oxford University Press, Mind Association]},
 title = {Co-ordination Principles: A Reply},
 urldate = {2025-12-29},
 volume = {117},
 year = {2008}
}

@article{FernandoFerreiraBilateral,
    author = {Ferreira, Fernando},
    title = {The Co-ordination Principles: A Problem for Bilateralism},
    journal = {Mind},
    volume = {117},
    number = {468},
    pages = {1051-1057},
    year = {2008},
    month = {10},
    issn = {0026-4423},
    doi = {10.1093/mind/fzn036},
    eprint = {https://academic.oup.com/mind/article-pdf/117/468/1051/3873198/fzn036.pdf},
}

\newpage

\appendix
%!TEX root = main.tex
% !TEX spellcheck = en-UK

%!TEX root = main.tex
% !TEX spellcheck = en-UK

\section{Extended Discussion of Related Work}\label{sec:comp}
{\bf On $\Nf$, $\Nfb$ and $\Bint$.}
Whereas $\Nfb$ is an extension of Nelson's $\Nf$ which allows definition of a weak negation, Wansing's $\Bint$ is instead both an extension of intuitionistic and dual-intuitionistic logic. This is interesting to point out because, as mentioned, Nelson's logics were originally proposed as alternatives to intuitionistic logic, but the similarities between $\Nfb$ and $\Bint$ suggests that both logics are not very far apart. In fact, a recent result \cite{oddsson2025strongnegationdefinable2int} shows that strong negation $\nn\phi$ is definable in $\Bint$ as the formula $(\phi \land (\phi \to (\phi \mapsfrom \phi))) \lor ((\phi \to \phi) \mapsfrom \phi)$, dispelling any doubts we might have concerning the fact that adding strong negation to $\Bint$ does not make it more expressive. This is particularly interesting for a number of reasons. First, proofs of strong negation are equivalent to refutations and refutations of strong negation are equivalent to proofs, in the sense that from the proof in \cite{oddsson2025strongnegationdefinable2int} plus soundness and completeness it follows that ($\Vdash^{+}_{\mathcal{B}} \nn\phi$ iff $\Vdash^{-}_{\mathcal{B}} \phi$) and ($\Vdash^{-}_{\mathcal{B}} \nn\phi$ iff $\Vdash^{+}_{\mathcal{B}} \phi$). This allows Nelson's negation to function as a sort of {\em structural} bilateral operator which allows direct passage from proofs to refutations. Second, from the definability of $\nn\phi$ in $\Bint$ it follows that the semantic harmony results shown for arbitrary formulas in \cite{barrosonascimento2025bilateralbaseextensionsemantics} also applies to strong negation\footnote{This result could also be proved directly by modifying Definition 25 in \cite{barrosonascimento2025bilateralbaseextensionsemantics} by including the clause $(\sim \phi)^{\mathbb{D}} = \sim (\phi)^{\mathbb{D}}$. }, which in turn allows its use for the obtainment of semantic embeddings. Third, from the fact that proofs of $\nn\phi$ are equivalent to refutations of $\phi$ it follows that addition of the strong negation operator to intuitionistic logic is equivalent to the addition of $\Bint$'s concept of refutation to it, as would also be the case of any similar ``strong affirmation" operator added to dual-intuitionistic logic.

\medskip
\noindent
{\bf On $\BPR$ and $\Nt$.}
We argue that $\BPR$ differs crucially from $\Nt$ inasmuch Nelson's strong negation $\sim$ is involutive by definition: a P-realizer of $\sim \phi$ is exactly an N-realizer of $\phi$, and conversely. This collapses proofs of $\sim \phi$ into refutations of $\phi$ and refutations of $\sim \phi$ into proofs of $\phi$. Such duality is essential to $\Nt$, but it does not reflect the epistemic notion of constructive falsity we aim to model. In mathematics one does not ``refute a refutation'' as refutation is an act of producing a counterexample, not an operator closed under involution. On the other hand, $\BPR$ captures this asymmetry directly. Proofs and refutations are primitive epistemic acts, governed by coordination rules that enforce incompatibility without relying on an involutive negation. This allows our system to represent constructive reasoning in terms of constructions and counterexamples, and to support a semantics where proofs and refutations cannot coexist. For these reasons, $\BPR$ provides a more accurate model of constructive mathematical practice than $\Nt$.

%\cyan{The non-involutive character of refutations does not preclude a conceptually sound definition of strong negation. We justify this by making a distinction between the \textit{act of refuting} and the \textit{claim that a refutation was performed}. A refutation of $\phi$ is an assertoric act consisting in the presentation of a counterexample to $\phi$, whereas an assertion of $\sim \phi$ consists in the affirmation that such a counterexample has been provided. In Rumfitt's terms, refuting something is like answering ``No" to a question, while asserting a strong negation is like saying ``Their answer was no". Once the distinction is made, it is not hard to see that involutiveness should only apply to the latter notion: an act cannot be refuted, but the claim that an act has been performed can. In other words, strong negation is a proposition whose semantic content is a non-propositional act. In a mathematical setting, to refute the claim that a counterexample to $\phi$ has been provided is to provide a proof of $\phi$, and to be in a position to assert that $\phi$ has a counterexample is to be in possession of said counterexample, but it does not make sense to ask how to refute the act of writing a counterexample.}

\medskip
\noindent
{\bf On $\BPR$ and other bilateral logics.}
Although it is conceptually closer to $\Bint$ and $\Nt$, there are important comparisons to be made between $\BPR$ and other bilateral logics\footnote{The logics considered in this paragraph use positive and negative labels for formulas instead of proof and dual proof rules, but that is an inessential change.}. For instance, $\BPR$ can be obtained from the logic $\mathbf{HB_{2}}$~\cite{Valle-Inclan2023-VALHAN} by adding rules for the units $\top$ and $\bot$ and for co-implication, removing the classical co-ordination rule and using the traditional rules for assertion of disjunction and refutation of conjunction instead of the modified ones\footnote{The modified rules define such connectives in terms of the \textit{disjunctive syllogism}, essentially defining $\phi \lor \psi$ as $\neg \phi \to \psi$. Since $\phi \lor \phi$ is not equivalent to $\neg \phi \to \psi$ in intuitionistic logic, this modification could not be sensibly used in $\BPR$.}. Since $\mathbf{HB_{2}}$ is equivalent to Rumfitt's calculus [pg. 384]\cite{Valle-Inclan2023-VALHAN}, $\BPR$ can also be interpreted as a intuitionistic version of Rumfitt's logic. $\mathbf{HB_2}$ is itself a variant of the logic $\mathbf{HB_{1}}$ presented in \cite{Kurbis2016-KRBSCO}, which is also equivalent to Rumfitt's logic. Those three logics have in common, as do the slightly less similar $C-intelim$ logic in \cite{AgostinoContamination}, the fact that they aim to provide proof-theoretic characterisations of \textit{classical logic}, a goal we do not share. They also define $\bot$ as a punctuation mark instead of a formula, which prevents them from being able to draw comparisons between weak and strong negation. The difference in the formulation of the rules and the fact that $\BPR$ is closer to $\Bint$ and $\Nt$ is mostly explained by technical implications of this change in goal.

\medskip
\noindent
{\bf On $\BPR$ and Rumfitt.}
The new rule $PR$ is the natural deduction equivalent of one of Ian Rumfitt's bilateral \textit{ co-ordination principles}, namely his \textit{Law of Non-Contradiction} \cite[pg. 804]{Rumfitt2000-RUMYAN-2}. The only noticeable differences are that Rumfitt's principle coordinates only atoms, whereas ours co-ordinate arbitrary formulas, and our rule can conclude an arbitrary formula, whereas Rumfitt's principle always has $\bot$ as its conclusion. The first modification is optional, but the second is essential for obtainment of the subformula principle. $\BPR$ does not contain his \textit{Reductio} principle, which would allow one to obtain a refutation of $\phi$ when from a proof of $\phi$ follows one of $\bot$, as well as a proof of $\phi$ when from a refutation of $\phi$ follows a proof of $\bot$. The \textit{Non-contradiction} principle is at first defined by Rumfitt only for atomic premises, but he uses the \textit{Reductio} to extend it to arbitrary formulas. We do the opposite by proving that all deductions can be reduced to ones in which all instances of $PR$ have atomic conclusions, a proof that essentially relies on the harmony between proofs and refutations characteristic of bilateral systems.

%That such a result can be proved without the \textit{Reductio} in $\BPR$ is relevant in its own right, as it shows that \textit{Non-contradiction} and \textit{Reductio} are independent under certain conditions and elucidates why those do not obtain in Rumfitt's logic.}

Both our acceptance of \textit{Non-contradiction} and our rejection of \textit{Reductio} can be justified on conceptual grounds. Inclusion of $PR$ is warranted by the fact that we specifically want to formalise epistemic entities for which such principle holds, as is the case of mathematical proofs. Exclusion of the \textit{Reductio} is justified by the fact that this would allow rejections to be established indirectly, thus collapsing weak and strong falsity.  This is not a problem for Rumfitt, as he aims to provide a proof-theoretic justification of classical logic\footnote{Whether he succeeds or not is a contentious matter. See, for instance, Dummett's rebuttal in \cite{DummettYesNo}. The original proposal also has technical problems, as pointed out in \cite{FernandoFerreiraBilateral}, although Rumfitt argues that this poses no threat to bilateralist accounts \cite{RumfittAnswertoFerreira}.}, but since this is precisely the distinction we wish to draw the principle cannot be accepted here. This would be akin to claiming that, by producing a contradiction from a claim, we could always obtain a particular counterexample for it, which is a untenable position.

\section{Proofs of some selected results} \label{sec:appendix}
%!TEX root = main.tex
% !TEX spellcheck = en-UK

\subsection{Proofs from Section~\ref{sec:pt}}\label{app:sec3}
\begin{definition}
    Two formula occurrences are {\em side connected} if they are premises of the same rule application.
\end{definition}

\begin{proof}[Proof of Lemma~\ref{atomicpremises}]
    We prove the result by induction on the complexity value $\langle d, n \rangle$ of $\Pi$, showing that, if %$\langle d,  n \rangle$ and 
it is not the case that $d = n = 0$, then $\Pi$ reduces to a deduction $\Pi'$ which has complexity value $\langle d', n' \rangle$ for $\langle d', n' \rangle <\langle d, n\rangle$ and the same premises and conclusion. Clearly, if $d = n = 0$ then by the definition of complexity values we have that all instances of $PR$, $\bot (+)$ and $\top (-)$ have atomic premises and conclusion.

First take a formula occurrence $\phi$ in $\Pi$ which is a premise or conclusion with degree $d$ of an instance of $PR$, $\bot (+)$ or $\top (-)$\footnote{Clearly, in the case of $\bot (+)$ and $\top (-)$ it can only be the rule's conclusion.}, satisfying the requirement that no instance of $PR$, $\bot (+)$ or $\top (-)$ above the formula occurrence has premises or conclusions with degree $d$ (if $\phi$ is the conclusion of $PR$ this includes the rule application of which the formula is a conclusion). If the formula is a premise of $PR$ we also require that no such instance occurs above the formula occurrence side connected with it\footnote{As usual, we can find such a formula occurrence by picking any premise or conclusion with degree $d$ of an application of $PR$, $\bot (+)$ or $\top (-)$ and, if there is some rule application above it or above its side connected formula which makes it violate the requirements, we just switch to that upper occurrence. Since deductions are finite, by repeating this procedure we eventually reach a formula occurrence satisfying the desired conditions. Notice that, in the case of applications of $PR$ with premises of the same degree, the effect of this procedure is that we have to deal with the premises of the application before dealing with the conclusion.}. We show the reductions for when the application of $PR$ contains a single line, since the cases where it contains a double line instead are always analogous. The reductions are defined based on the shape of $\phi$:

\begin{itemize}
    \item[($\phi = \psi \land \chi$).]  We rewrite the deduction as follows:

\medskip

    \begin{bprooftree}
  \hspace{-50pt}
        \AxiomC{$\Sigma$}
        \UnaryInfC{$\psi \land \chi$}
        \AxiomC{$\Sigma'$}
        \doubleLine
        \UnaryInfC{$\psi \land \chi$}
        \RightLabel{\scriptsize{$PR$}}
        \BinaryInfC{$\sigma$}
        \noLine
        \UnaryInfC{$\Sigma''$}
\end{bprooftree}
\quad
$\Longrightarrow$
\quad
\begin{bprooftree}
  \AxiomC{$\Sigma'$}
        \doubleLine
        \UnaryInfC{$\psi \land \chi$}
          \AxiomC{}
          \doubleLine
          \RightLabel{\scriptsize{1}}
          \UnaryInfC{$\psi$}
            \AxiomC{$\Sigma$}
        \UnaryInfC{$\psi \land \chi$}
        \RightLabel{\scriptsize{$E_{1} \land (+)$}}
        \UnaryInfC{$\psi$}
             \RightLabel{\scriptsize{$PR$}}
        \BinaryInfC{$\sigma$}
         \AxiomC{}
          \RightLabel{\scriptsize{2}}
          \doubleLine
          \UnaryInfC{$\chi$}
            \AxiomC{$\Sigma$}
        \UnaryInfC{$\psi \land \chi$}
               \RightLabel{\scriptsize{$E_{2} \land (+)$}}
        \UnaryInfC{$\chi$}
         \RightLabel{\scriptsize{$PR$}}
        \BinaryInfC{$\sigma$}
        \RightLabel{\scriptsize{$E_{1} \land (-),  1, 2$}}
        \TrinaryInfC{$\sigma$}
        \noLine
        \UnaryInfC{$\Sigma ''$}
\end{bprooftree}

\medskip

%\cyan{Mental note by Victor: remove ambiguity in definition of degree by stating that if a formula occurrence appears both as premise and as conclusion of such a rule it is counted twice in the degree.}

%\cyan{Mental note by Victor (for extended version of the paper): we can use an alternative proof of the lemma for $\bot$, but this changes the reductions! And the shape of the reduction is different if both premises of PR are maximal and if only one of them is!}

Notice that this reduction replaces the original application of $PR$ by two others and creates a copy of $\sigma$, which might have the same degree as $\psi \land \chi$. This is not problematic because, in this case, the new deduction contains two formula occurrences with degree $d$ appearing as the premise or conclusion of $PR$ (the conclusions with shape $\sigma$), whereas the original deduction had three (the two premises with shape $\psi \land \chi$ and the conclusion $\sigma$), so the complexity value of the deduction still decreases by 1. Notice also that, if the formula occurrence $\sigma$ in the original derivation is also the premise of an application of $PR$, then it has to be counted twice in the inductive measure, as otherwise the new deduction would have the same inductive value as the original.

Since all applications of $PR$, $\bot (+)$ and $\top (-)$ in $\Sigma$ above the original application of $PR$ which are duplicated by this procedure have premises with degree less than $d$ and the two new applications of $PR$ have premises with degree lesser than $d$, either the occurrences of $\psi \land \chi$ on $\Pi$ were the last ones appearing on $\Pi$ as premises or conclusions of $PR$, $\bot (+)$ or $\top (-)$ which have degree $d$, in which case $d' < d$, or there are other such applications, in which case $d' = d$ but $n' < n$, thus we have $\langle d', n' \rangle < \langle d, n \rangle$. The only difference in the reduction if we were considering an application of $PR$ with a double line is that the new deduction would instead have two occurrences of $PR$ concluding refutations of $\sigma$ and an instance of $E \land (-)$ with double lines.

\bigskip

    \item[($\phi = \psi \lor \chi$).] We rewrite the deduction as follows:

    \begin{bprooftree}
    \hspace{-50pt}
        \AxiomC{$\Sigma$}
        \UnaryInfC{$\psi \lor \chi$}
        \AxiomC{$\Sigma'$}
        \doubleLine
        \UnaryInfC{$\psi \lor \chi$}
                \RightLabel{\scriptsize{$PR$}}
        \BinaryInfC{$\sigma$}
        \noLine
        \UnaryInfC{$\Sigma''$}
    \end{bprooftree}
\quad
$\Longrightarrow$
\quad
\begin{bprooftree}
         \AxiomC{$\Sigma$}
        \UnaryInfC{$\psi \lor \chi$}
          \AxiomC{}
          \RightLabel{\scriptsize{1}}
          \UnaryInfC{$\psi$}
            \AxiomC{$\Sigma'$}
            \doubleLine
        \UnaryInfC{$\psi \lor \chi$}
        \doubleLine
        \RightLabel{\scriptsize{$E_{1} \lor (-)$}}
        \UnaryInfC{$\psi$}
             \RightLabel{\scriptsize{$PR$}}
        \BinaryInfC{$\sigma$}
         \AxiomC{}
          \RightLabel{\scriptsize{2}}
          \UnaryInfC{$\chi$}
            \AxiomC{$\Sigma'$}
            \doubleLine
        \UnaryInfC{$\psi \lor \chi$}
        \doubleLine
               \RightLabel{\scriptsize{$E_{2} \lor (-)$}}
        \UnaryInfC{$\chi$}
         \RightLabel{\scriptsize{$PR$}}
        \BinaryInfC{$\sigma$}
        \RightLabel{\scriptsize{$E_{1} \lor (+),  1, 2$}}
        \TrinaryInfC{$\sigma$}
        \noLine
        \UnaryInfC{$\Sigma ''$}
\end{bprooftree}
\medskip

The same reasoning used in the case for $\phi = \psi \land \chi$ shows $\langle d', d' \rangle < \langle d, n \rangle$ for the complexity degree $\langle d', n' \rangle$ of $\Pi'$. 

\bigskip

    \item[($\sigma = \psi \to \chi$).]  We rewrite the deduction as follows:
    
    \begin{bprooftree}
        \AxiomC{$\Sigma$}
        \UnaryInfC{$\psi \to \chi$}
        \AxiomC{$\Sigma'$}
        \doubleLine
        \UnaryInfC{$\psi \to \chi$}
                \RightLabel{\scriptsize{$PR$}}
        \BinaryInfC{$\sigma$}
        \noLine
        \UnaryInfC{$\Sigma''$}
    \end{bprooftree}
\quad
$\Longrightarrow$
\quad
\begin{bprooftree}
         \AxiomC{$\Sigma$}
        \UnaryInfC{$\psi \to \chi$}
        \AxiomC{$\Sigma'$}
     \doubleLine
        \UnaryInfC{$\psi \to \chi$}
        \RightLabel{\scriptsize{$E_{1} \to (-)$}}
        \UnaryInfC{$\psi$}
         \RightLabel{\scriptsize{$E \to (+)$}}
        \BinaryInfC{$\chi$}
            \AxiomC{$\Sigma'$}
     \doubleLine
        \UnaryInfC{$\psi \to \chi$}
        \doubleLine
        \RightLabel{\scriptsize{$E_{2} \to (-)$}}
        \UnaryInfC{$\chi$}
    \RightLabel{\scriptsize{$PR$}}
        \BinaryInfC{$\sigma$}
        \noLine
        \UnaryInfC{$\Sigma''$}
\end{bprooftree}

\bigskip

\item[($\phi = \psi \mapsfrom \chi$)]  We rewrite the deduction as follows:

\bigskip

    \begin{bprooftree}
        \AxiomC{$\Sigma$}
        \UnaryInfC{$\psi \mapsfrom \chi$}
        \AxiomC{$\Sigma'$}
        \doubleLine
        \UnaryInfC{$\psi \mapsfrom \chi$}
                \RightLabel{\scriptsize{$PR$}}
        \BinaryInfC{$\sigma$}
        \noLine
        \UnaryInfC{$\Sigma''$}
    \end{bprooftree}
\quad
$\Longrightarrow$
\quad
\begin{bprooftree}
\AxiomC{$\Sigma'$}
         \doubleLine
        \UnaryInfC{$\psi \mapsfrom \chi$}
    \AxiomC{$\Sigma$}
        \UnaryInfC{$\psi \mapsfrom \chi$}
    \RightLabel{\scriptsize{$E_{2} \mapsfrom (+)$}}
        \doubleLine
        \UnaryInfC{$\chi$}
         \RightLabel{\scriptsize{$E \mapsfrom (-)$}}
         \doubleLine
        \BinaryInfC{$\psi$}
            \AxiomC{$\Sigma$}
        \UnaryInfC{$\psi \mapsfrom \chi$}
    \RightLabel{\scriptsize{$E_{1} \to (+)$}}
        \UnaryInfC{$\psi$}
    \RightLabel{\scriptsize{$PR$}}
        \BinaryInfC{$\sigma$}
        \noLine
        \UnaryInfC{$\Sigma''$}
\end{bprooftree}

\end{itemize}
\bigskip

If the formula occurrence is a conclusion of $PR$, $\bot (+)$ or $\top (-)$, the reductions to be applied are uniform. We exemplify them for $PR$ with a single line:

\bigskip

\begin{prooftree}
    \AxiomC{$\Sigma$}
    \UnaryInfC{$\chi$}
    \AxiomC{$\Sigma'$}
    \doubleLine
    \UnaryInfC{$\chi$}
    \RightLabel{\scriptsize{$PR$}}
    \BinaryInfC{$\phi \land \psi$}
    \noLine
    \UnaryInfC{$\Sigma''$}
    \DisplayProof
\quad
$\Longrightarrow$
\quad
    \AxiomC{$\Sigma$}
    \UnaryInfC{$\chi$}
    \AxiomC{$\Sigma'$}
    \doubleLine
    \UnaryInfC{$\chi$}
    \RightLabel{\scriptsize{$PR$}}
    \BinaryInfC{$\phi$}
        \AxiomC{$\Sigma$}
    \UnaryInfC{$\chi$}
    \AxiomC{$\Sigma'$}
    \doubleLine
    \UnaryInfC{$\chi$}
    \RightLabel{\scriptsize{$PR$}}
    \BinaryInfC{$\psi$}
    \RightLabel{\scriptsize{$I \land (+)$}}
    \BinaryInfC{$\phi \land \psi$}
    \noLine
    \UnaryInfC{$\Sigma''$}
\end{prooftree}

\begin{prooftree}
    \AxiomC{$\Sigma$}
    \UnaryInfC{$\chi$}
    \AxiomC{$\Sigma'$}
    \doubleLine
    \UnaryInfC{$\chi$}
    \RightLabel{\scriptsize{$PR$}}
    \BinaryInfC{$\phi \lor \psi$}
    \noLine
    \UnaryInfC{$\Sigma''$}
    \DisplayProof
\quad
$\Longrightarrow$
\quad
    \AxiomC{$\Sigma$}
    \UnaryInfC{$\chi$}
    \AxiomC{$\Sigma'$}
    \doubleLine
    \UnaryInfC{$\chi$}
    \RightLabel{\scriptsize{$PR$}}
    \BinaryInfC{$\phi$}
       \RightLabel{\scriptsize{$I_{1} \lor  (+)$}}
      \UnaryInfC{$\phi \lor \psi$}
      \noLine
      \UnaryInfC{$\Sigma ''$}
\end{prooftree}

\begin{prooftree}
    \AxiomC{$\Sigma$}
    \UnaryInfC{$\chi$}
    \AxiomC{$\Sigma'$}
    \doubleLine
    \UnaryInfC{$\chi$}
    \RightLabel{\scriptsize{$PR$}}
    \BinaryInfC{$\phi \to \psi$}
    \noLine
    \UnaryInfC{$\Sigma''$}
    \DisplayProof
\quad
$\Longrightarrow$
\quad
    \AxiomC{$\Sigma$}
    \UnaryInfC{$\chi$}
    \AxiomC{$\Sigma'$}
    \doubleLine
    \UnaryInfC{$\chi$}
    \RightLabel{\scriptsize{$PR$}}
    \BinaryInfC{$\psi$}
    \RightLabel{\scriptsize{$I \to (+)$}}
      \UnaryInfC{$\phi \to \psi$}
      \noLine
      \UnaryInfC{$\Sigma ''$}
\end{prooftree}

\begin{prooftree}
    \AxiomC{$\Sigma$}
    \UnaryInfC{$\chi$}
    \AxiomC{$\Sigma'$}
    \doubleLine
    \UnaryInfC{$\chi$}
    \RightLabel{\scriptsize{$PR$}}
    \BinaryInfC{$\phi \mapsfrom \psi$}
    \noLine
    \UnaryInfC{$\Sigma''$}
    \DisplayProof
\quad
$\Longrightarrow$
\quad
    \AxiomC{$\Sigma$}
    \UnaryInfC{$\chi$}
    \AxiomC{$\Sigma'$}
    \doubleLine
    \UnaryInfC{$\chi$}
    \RightLabel{\scriptsize{$PR$}}
    \BinaryInfC{$\phi$}
        \AxiomC{$\Sigma$}
    \UnaryInfC{$\chi$}
    \AxiomC{$\Sigma'$}
    \doubleLine
    \UnaryInfC{$\chi$}
    \RightLabel{\scriptsize{$PR$}}
    \doubleLine
    \BinaryInfC{$\psi$}
    \RightLabel{\scriptsize{$I \mapsfrom (+)$}}
    \BinaryInfC{$\phi \mapsfrom \psi$}
    \noLine
    \UnaryInfC{$\Sigma''$}
\end{prooftree}

The restrictions on the formula occurrence we pick guarantee that no formula occurrence which is a premise or conclusion of one of the relevant rules occur in $\Sigma$ ir $\Sigma'$, as well as that $\chi$ has degree lesser than that of the selected formula, so in any case either there are no other  relevant formula occurrences with the same degree in the deduction, so it reduces to a $\Pi'$ with $d < d'$, or there are other occurrences and it reduces to a $\Pi'$ with $d = d'$ but $n' < n$, so in any case we obtain a deduction with complexity degree $\langle d', n' \rangle$ such that $\langle d', n' \rangle < \langle d, n \rangle$. The cases for $\bot (+)$ are analogous, and the cases for $(\phi \land \psi)$, $(\phi \lor \psi)$, $(\phi \to \psi)$ and $(\phi \mapsfrom \psi)$ occurring as a consequence of $\top (-)$ are respectively analogous to those of $(\phi \lor \psi)$,  $(\phi \land \psi)$,  $(\phi \mapsfrom \psi)$ and  $(\phi \to \psi)$.
\end{proof}

\begin{proof}[Proof of Lemma~\ref{MariaLemma}] We prove the result by induction of the number of formula occurrences which are at the same time conclusion of an application of $PR$, $\bot (+)$ or $\top (-)$. The proof is quite simple, as we just delete rule applications and possibly substitute applications of $PR$ with single lines by ones with double lines (and vice-versa). We exemplify by stating the cases in which the first rule application is of shape $PR$, the remaining cases being analogous.

\begin{prooftree}
    \AxiomC{$\Sigma$}
    \UnaryInfC{$\phi$}
    \AxiomC{$\Sigma'$}
    \doubleLine
    \UnaryInfC{$\phi$}
    \RightLabel{\scriptsize{$PR$}}
    \BinaryInfC{$\psi$}
    \AxiomC{$\Sigma''$}
    \doubleLine
    \UnaryInfC{$\psi$}
      \RightLabel{\scriptsize{$PR$}}
    \BinaryInfC{$\chi$}
    \noLine
    \UnaryInfC{$\Sigma'''$}
    \DisplayProof
    \quad
    $\Longrightarrow$
     \AxiomC{$\Sigma$}
    \UnaryInfC{$\phi$}
    \AxiomC{$\Sigma'$}
    \doubleLine
    \UnaryInfC{$\phi$}
    \RightLabel{\scriptsize{$PR$}}
    \BinaryInfC{$\chi$}
    \noLine
    \UnaryInfC{$\Sigma'''$}
\end{prooftree}

\begin{prooftree}
    \AxiomC{$\Sigma$}
    \UnaryInfC{$\phi$}
    \AxiomC{$\Sigma'$}
    \doubleLine
    \UnaryInfC{$\phi$}
    \RightLabel{\scriptsize{$PR$}}
    \BinaryInfC{$\psi$}
    \AxiomC{$\Sigma''$}
    \doubleLine
    \UnaryInfC{$\psi$}
      \RightLabel{\scriptsize{$PR$}}
      \doubleLine
    \BinaryInfC{$\chi$}
    \noLine
    \UnaryInfC{$\Sigma'''$}
    \DisplayProof
    \quad
    $\Longrightarrow$
     \AxiomC{$\Sigma$}
    \UnaryInfC{$\phi$}
    \AxiomC{$\Sigma'$}
    \doubleLine
    \UnaryInfC{$\phi$}
    \RightLabel{\scriptsize{$PR$}}
    \doubleLine
    \BinaryInfC{$\chi$}
    \noLine
    \UnaryInfC{$\Sigma'''$}
\end{prooftree}

\bigskip

\begin{prooftree}
    \AxiomC{$\Sigma$}
    \UnaryInfC{$\phi$}
    \AxiomC{$\Sigma'$}
    \doubleLine
    \UnaryInfC{$\phi$}
    \RightLabel{\scriptsize{$PR$}}
    \BinaryInfC{$\bot$}
      \RightLabel{\scriptsize{$\bot (+)$}}
    \UnaryInfC{$\chi$}
    \noLine
    \UnaryInfC{$\Sigma'''$}
    \DisplayProof
    \quad
    $\Longrightarrow$
   \AxiomC{$\Sigma$}
    \UnaryInfC{$\phi$}
    \AxiomC{$\Sigma'$}
    \doubleLine
    \UnaryInfC{$\phi$}
    \RightLabel{\scriptsize{$PR$}}
    \BinaryInfC{$\chi$}
    \noLine
    \UnaryInfC{$\Sigma'''$}
\end{prooftree}

\bigskip

\begin{prooftree}
    \AxiomC{$\Sigma$}
    \UnaryInfC{$\phi$}
    \AxiomC{$\Sigma'$}
    \doubleLine
    \UnaryInfC{$\phi$}
    \RightLabel{\scriptsize{$PR$}}
    \BinaryInfC{$\top$}
      \RightLabel{\scriptsize{$\top (-)$}}
      \doubleLine
    \UnaryInfC{$\chi$}
    \noLine
    \UnaryInfC{$\Sigma'''$}
    \DisplayProof
    \quad
    $\Longrightarrow$
   \AxiomC{$\Sigma$}
    \UnaryInfC{$\phi$}
    \AxiomC{$\Sigma'$}
    \doubleLine
    \UnaryInfC{$\phi$}
    \RightLabel{\scriptsize{$PR$}}
    \doubleLine
    \BinaryInfC{$\chi$}
    \noLine
    \UnaryInfC{$\Sigma'''$}
\end{prooftree}

\bigskip

In all cases the number of formula occurrences satisfying the original conditions decreases by $1$, which establishes the desired results.

\end{proof}

\begin{proof}[Proof of Theorem~\ref{thm:normalisation}] The proper and permutative reductions are laid out in Figures ~\ref{fig:posproper+reductions}, ~\ref{fig:negproper-reductions} and \ref{fig:permutativereductions}.

\begin{figure}
    \centering

\begin{prooftree}
    \AxiomC{$[\phi]$}
    \noLine
    \UnaryInfC{$\Sigma$}
    \UnaryInfC{$\psi$}             \RightLabel{\scriptsize{$\rightarrow I(+)$}}
    \UnaryInfC{$\phi\rightarrow \psi$}
    \AxiomC{$\Sigma'$}
    \UnaryInfC{$\phi$}    \RightLabel{\scriptsize{$\rightarrow E(+)$}}
    \BinaryInfC{$\psi$} 
       \noLine
    \UnaryInfC{$\Sigma''$}
    \DisplayProof
    $\Longrightarrow$    
    \AxiomC{$\Sigma'$}
    \UnaryInfC{$\phi$}
    \noLine
    \UnaryInfC{$\Sigma$}
    \UnaryInfC{$\psi$}
       \noLine
    \UnaryInfC{$\Sigma''$}
\end{prooftree}
\medskip
\begin{prooftree}
    \AxiomC{$\Sigma$}
    \UnaryInfC{$\phi_1$}
    \AxiomC{$\Sigma'$}
    \doubleLine
    \UnaryInfC{$\phi_2$}    \RightLabel{\scriptsize{$\mapsfrom I(+)$}}
    \BinaryInfC{$\phi_1\mapsfrom \phi_2$}    \RightLabel{\scriptsize{$\mapsfrom E_1(+)$}}
    \UnaryInfC{$\phi_1$}
       \noLine
    \UnaryInfC{$\Sigma''$}
    \DisplayProof
    $\Longrightarrow$     
    \AxiomC{$\Sigma$}
    \UnaryInfC{$\phi_1$}
       \noLine
    \UnaryInfC{$\Sigma''$}
    \DisplayProof
    \qquad
    \AxiomC{$\Sigma$}
    \UnaryInfC{$\phi_1$}
    \AxiomC{$\Sigma'$}
    \doubleLine
    \UnaryInfC{$\phi_2$}    \RightLabel{\scriptsize{$\mapsfrom I(+)$}}
    \BinaryInfC{$\phi_1\mapsfrom A_2$}    \RightLabel{\scriptsize{$\mapsfrom E_2(+)$}}
    \doubleLine
    \UnaryInfC{$\phi_2$}
       \noLine
    \UnaryInfC{$\Sigma''$}
    \DisplayProof
    $\Longrightarrow$     
    \AxiomC{$\Sigma'$}
    \doubleLine
    \UnaryInfC{$\phi_2$}
       \noLine
    \UnaryInfC{$\Sigma''$}
\end{prooftree}
\medskip
\begin{prooftree}
\AxiomC{$\Sigma$}
\UnaryInfC{$\phi_i$}
\RightLabel{\tiny{$I_1\vee(+)$}}
\UnaryInfC{$\phi_1 \vee \phi_2$}
\AxiomC{$[\phi_{1}]$}
\noLine
\UnaryInfC{$\Sigma_1$}
\dottedLine
\UnaryInfC{$\chi$}
\AxiomC{$[\phi_{2}]$}
\noLine
\UnaryInfC{$\Sigma_2$}
\dottedLine
\UnaryInfC{$\chi$}
\RightLabel{\tiny{$E\vee (+)$}}
\dottedLine
\TrinaryInfC{$\chi$}
   \noLine
    \UnaryInfC{$\Sigma'$}
\DisplayProof
$\Longrightarrow$
\AxiomC{$\Sigma$}
\UnaryInfC{$\phi_i$}
\noLine
\UnaryInfC{$\Sigma_i$}
\dottedLine
\UnaryInfC{$\chi$}
   \noLine
    \UnaryInfC{$\Sigma'$}
\end{prooftree}
\medskip

\begin{prooftree}
    \AxiomC{$\Sigma$}
    \UnaryInfC{$\phi$}
    \AxiomC{$\Sigma'$}
    \UnaryInfC{$\psi$}    \RightLabel{\tiny{$ I \wedge (+)  $}}
    \BinaryInfC{$\phi \land \psi$}    \RightLabel{\tiny{$ E_1 \wedge (+)$}}
    \UnaryInfC{$\phi$}
       \noLine
    \UnaryInfC{$\Sigma''$}
    \DisplayProof
    $\Longrightarrow$     
    \AxiomC{$\Sigma$}
    \UnaryInfC{$\phi$}
       \noLine
    \UnaryInfC{$\Sigma''$}
    \DisplayProof
    \qquad
    \AxiomC{$\Sigma$}
    \UnaryInfC{$\phi$}
    \AxiomC{$\Sigma'$}
    \UnaryInfC{$\psi$}    \RightLabel{\tiny{$ I \wedge (+) $}}
    \BinaryInfC{$\phi \land \psi$}    \RightLabel{\tiny{$ E_2 \wedge (+)$}}
    \UnaryInfC{$\psi$}
    \DisplayProof
    $\Longrightarrow$     
    \AxiomC{$\Sigma'$}
    \UnaryInfC{$\psi$}
       \noLine
    \UnaryInfC{$\Sigma''$}
\end{prooftree}

\caption{Proper (proof) reductions}
    \label{fig:posproper+reductions}
\end{figure}

\begin{figure}
    \centering
   \begin{prooftree}
    \AxiomC{$\Sigma$}
    \UnaryInfC{$\phi$}
    \AxiomC{$\Sigma'$}
    \doubleLine
    \UnaryInfC{$\psi$}    \RightLabel{\tiny{$ I \rightarrow (-)$}}
    \doubleLine
    \BinaryInfC{$\phi \rightarrow \psi$}    \RightLabel{\tiny{$ E_1 \rightarrow (-)$}}
    \UnaryInfC{$\phi$}
    \DisplayProof
    $\Longrightarrow$     
    \AxiomC{$\Sigma$}
    \UnaryInfC{$\phi$}
    \DisplayProof
    \qquad
    \AxiomC{$\Sigma$}
    \UnaryInfC{$\phi$}
    \AxiomC{$\Sigma'$}
    \doubleLine
    \UnaryInfC{$\psi$}    \RightLabel{\tiny{$ I \rightarrow (-)$}}
    \doubleLine
    \BinaryInfC{$\phi\rightarrow \psi$}    \RightLabel{\tiny{$ E_2 \rightarrow (-)$}}
    \doubleLine
    \UnaryInfC{$\psi$}
    \DisplayProof
    $\Longrightarrow$     
    \AxiomC{$\Sigma'$}
    \doubleLine
    \UnaryInfC{$\psi$}
\end{prooftree}

\medskip

 \begin{prooftree}
    \AxiomC{$\llbracket \psi \rrbracket$}
    \noLine
    \UnaryInfC{$\Sigma$}
    \doubleLine
    \UnaryInfC{$\phi$}             \RightLabel{\tiny{$ I \mapsfrom (-)$}}
    \doubleLine
    \UnaryInfC{$\phi \mapsfrom \psi$}
    \AxiomC{$\Sigma'$}
    \doubleLine
    \UnaryInfC{$\psi$}    \RightLabel{\tiny{$ E \mapsfrom (-)$}}
    \doubleLine
    \BinaryInfC{$\phi$} 
    \DisplayProof
    $\Longrightarrow$    
    \AxiomC{$\Sigma'$}
    \doubleLine
    \UnaryInfC{$\psi$}
    \noLine
    \UnaryInfC{$\Sigma$}
    \doubleLine
    \UnaryInfC{$\phi$} 
\end{prooftree}

\medskip

\begin{prooftree}
    \AxiomC{$\Sigma$}
    \doubleLine
    \UnaryInfC{$\phi$}
    \AxiomC{$\Sigma'$}
    \doubleLine
    \UnaryInfC{$\psi$}    \RightLabel{\tiny{$ I \vee (-)$}}
    \doubleLine
    \BinaryInfC{$\phi \vee \psi$}    \RightLabel{\tiny{$ E_1 \vee (-)$}}
    \doubleLine
    \UnaryInfC{$\phi$}
    \DisplayProof
    $\Longrightarrow$     
    \AxiomC{$\Sigma$}
    \doubleLine
    \UnaryInfC{$\phi$}
    \DisplayProof
    \qquad
    \AxiomC{$\Sigma$}
    \doubleLine
    \UnaryInfC{$\phi$}
    \AxiomC{$\Sigma'$}
    \doubleLine
    \UnaryInfC{$\psi$}    \RightLabel{\tiny{$ I \vee (-)$}}
    \doubleLine
    \BinaryInfC{$\phi \vee \psi$}    \RightLabel{\tiny{$E_2 \vee (-)$}}
    \doubleLine
    \UnaryInfC{$\psi$}
    \DisplayProof
    $\Longrightarrow$     
    \AxiomC{$\Sigma'$}
    \doubleLine
    \UnaryInfC{$\psi$}
\end{prooftree}

\medskip

\begin{prooftree}
\AxiomC{$\Sigma$}
\doubleLine
\UnaryInfC{$\phi_i$}
\RightLabel{\tiny{$I_1\vee(+)$}}
\doubleLine
\UnaryInfC{$\phi_1 \land \phi_2$}
\AxiomC{$\llbracket \phi_{1} \rrbracket$}
\noLine
\UnaryInfC{$\Sigma_1$}
\dottedLine
\UnaryInfC{$\chi$}
\AxiomC{$\llbracket \phi_{2} \rrbracket$}
\noLine
\UnaryInfC{$\Sigma_2$}
\dottedLine
\UnaryInfC{$\chi$}
\RightLabel{\tiny{$E\vee (+)$}}
\dottedLine
\TrinaryInfC{$\chi$}
\DisplayProof
$\Longrightarrow$
\AxiomC{$\Sigma$}
\doubleLine
\UnaryInfC{$\phi_i$}
\noLine
\UnaryInfC{$\Sigma_i$}
\dottedLine
\UnaryInfC{$\chi$}
\end{prooftree}

\caption{Proper (dual proof) reductions}
    \label{fig:negproper-reductions}

\end{figure}

\begin{figure}
    \centering
    \scriptsize
\hbox{
\begin{bprooftree}
\AxiomC{$\Sigma$}
\UnaryInfC{$\phi \vee \psi$}
\AxiomC{$[\phi]$}
\noLine
\UnaryInfC{$\Sigma'$}
\dottedLine
\UnaryInfC{$\chi$}
\AxiomC{$[\psi]$}
\noLine
\UnaryInfC{$\Sigma''$}
\dottedLine
\UnaryInfC{$\chi$}
\RightLabel{\tiny{$E\vee (+)$}}
\dottedLine
\TrinaryInfC{$\chi$}
\AxiomC{$\Xi$}
\BinaryInfC{$\sigma$}
\DisplayProof
$\Longrightarrow$
\AxiomC{$\Pi_1$}
\UnaryInfC{$\phi \vee \psi$}
\AxiomC{$[\phi]$}
\noLine
\UnaryInfC{$\Sigma'$}
\dottedLine
\UnaryInfC{$\chi$}
\AxiomC{$\Xi$}
\BinaryInfC{$\sigma$}
\AxiomC{$[\psi]$}
\noLine
\UnaryInfC{$\Sigma''$}
\dottedLine
\UnaryInfC{$\chi$}
\AxiomC{$\Xi$}
\BinaryInfC{$\sigma$}
\RightLabel{\tiny{$E\vee (+)$}}
\TrinaryInfC{$\sigma$}
\end{bprooftree}}
\vspace{3em}

\hbox{
\begin{bprooftree}
\AxiomC{$\Sigma$}
\UnaryInfC{$\phi \vee \psi$}
\AxiomC{$[\phi]$}
\noLine
\UnaryInfC{$\Sigma'$}
\dottedLine
\UnaryInfC{$\chi$}
\AxiomC{$[\psi]$}
\noLine
\UnaryInfC{$\Sigma''$}
\dottedLine
\UnaryInfC{$\chi$}
\RightLabel{\tiny{$E\vee (+)$}}
\dottedLine
\TrinaryInfC{$\chi$}
\AxiomC{$\Xi$}
\doubleLine
\BinaryInfC{$\sigma$}
\DisplayProof
$\Longrightarrow$
\AxiomC{$\Pi_1$}
\UnaryInfC{$\phi \vee \psi$}
\AxiomC{$[\phi]$}
\noLine
\UnaryInfC{$\Sigma'$}
\dottedLine
\UnaryInfC{$\chi$}
\AxiomC{$\Xi$}
\doubleLine
\BinaryInfC{$\sigma$}
\AxiomC{$[\psi]$}
\noLine
\UnaryInfC{$\Sigma''$}
\dottedLine
\UnaryInfC{$\chi$}
\AxiomC{$\Xi$}
\doubleLine
\BinaryInfC{$\sigma$}
\RightLabel{\tiny{$E\vee (+)$}}
\doubleLine
\TrinaryInfC{$\sigma$}
\end{bprooftree}}
\vspace{3em}

\hbox{
\begin{bprooftree}
\AxiomC{$\Sigma$}
\doubleLine
\UnaryInfC{$\phi \land \psi$}
\AxiomC{$\llbracket \phi \rrbracket$}
\noLine
\UnaryInfC{$\Sigma'$}
\dottedLine
\UnaryInfC{$\chi$}
\AxiomC{$\llbracket \psi \rrbracket$}
\noLine
\UnaryInfC{$\Sigma''$}
\dottedLine
\UnaryInfC{$\chi$}
\RightLabel{\tiny{$E\land  (-)$}}
\dottedLine
\TrinaryInfC{$\chi$}
\AxiomC{$\Xi$}
\BinaryInfC{$\sigma$}
\DisplayProof
$\Longrightarrow$
\AxiomC{$\Pi_1$}
\doubleLine
\UnaryInfC{$\phi \land \psi$}
\AxiomC{$\llbracket \phi \rrbracket$}
\noLine
\UnaryInfC{$\Sigma'$}
\dottedLine
\UnaryInfC{$\chi$}
\AxiomC{$\Xi$}
\BinaryInfC{$\sigma$}
\AxiomC{$\llbracket \psi \rrbracket$}
\noLine
\UnaryInfC{$\Sigma''$}
\dottedLine
\UnaryInfC{$\chi$}
\AxiomC{$\Xi$}
\BinaryInfC{$\sigma$}
\RightLabel{\tiny{$E\land (-)$}}
\TrinaryInfC{$\sigma$}
\end{bprooftree}}
\vspace{3em}

\hbox{
\begin{bprooftree}
\AxiomC{$\Sigma$}
\doubleLine
\UnaryInfC{$\phi \land \psi$}
\AxiomC{$\llbracket \phi \rrbracket$}
\noLine
\UnaryInfC{$\Sigma'$}
\dottedLine
\UnaryInfC{$\chi$}
\AxiomC{$\llbracket \psi \rrbracket$}
\noLine
\UnaryInfC{$\Sigma''$}
\dottedLine
\UnaryInfC{$\chi$}
\RightLabel{\tiny{$E\land  (-)$}}
\dottedLine
\TrinaryInfC{$\chi$}
\AxiomC{$\Xi$}
\doubleLine
\BinaryInfC{$\sigma$}
\DisplayProof
$\Longrightarrow$
\AxiomC{$\Pi_1$}
\doubleLine
\UnaryInfC{$\phi \land \psi$}
\AxiomC{$\llbracket \phi \rrbracket$}
\noLine
\UnaryInfC{$\Sigma'$}
\dottedLine
\UnaryInfC{$\chi$}
\AxiomC{$\Xi$}
\doubleLine
\BinaryInfC{$\sigma$}
\AxiomC{$\llbracket \psi \rrbracket$}
\noLine
\UnaryInfC{$\Sigma''$}
\dottedLine
\UnaryInfC{$\chi$}
\AxiomC{$\Xi$}
\doubleLine
\BinaryInfC{$\sigma$}
\RightLabel{\tiny{$E\land (-)$}}
\doubleLine
\TrinaryInfC{$\sigma$}
\end{bprooftree}}
\vspace{3em}
    \caption{Permutative reductions. }
    \label{fig:permutativereductions}
\end{figure}

Let $\Pi$ be any deduction which is not in simplified normal form. Transform it in a deduction $\Pi'$ which is in atomic form using Lemma~\ref{atomicpremises} (notice that this procedure may increase the inductive value by copying maximal segments). Now we show that $\Pi'$ can be reduced to a deduction $\Pi''$ in normal form:

Let $\Pi'$ have induction value $\langle d, l \rangle$. Pick any maximal segment $\alpha$ with degree $d$ such that no other maximal segment with degree $d$ has all of its formula occurrences appearing above the last formula occurrence in $\alpha$ or above any formula occurrences side connected with it. Notice that, as usual, such a segment can always be found; if we pick a random segment $\alpha$ with degree $d$ and there is a maximal segment $\beta$ with all formula occurrences above the last formula occurence of $\alpha$ or above a formula occurrence side connected with it, we can just switch to the maximal segment $\beta$, check again if it satisfies the desired conditions and jump to another maximal segment above it if it does not. Since deductions are finite we will always end up with a maximal segment satisfying the desired conditions. Notice also that two segments of the same degree may occur side by side, in the sense that both of them may pass through distinct minor premises of the same rule application, but in this case we are allowed to pick either one.

If the chosen maximal segment has length greater than $1$, we apply a permutative reduction to the application of $\lor (+)$ or $\land (-)$ which concludes its last formula occurrence. Since there are no maximal segments with degree $d$ above formula occurrence side connected with it the inductive value of the deduction is not increased by possible duplications of the deductions of those side connected occurrences. This procedure reduces the length of the segment by at least $1$ (if there are two maximal segments passing through different minor premises of the rule application then both are shortened and the sum of the lengths decreases by $2$, but if there is only one then it decreases by $1$), so we obtain a deduction $\Pi''$ with inductive value $\langle d, s' \rangle$ such that $d = d$ and $s' < s$, hence $\langle d, s \rangle < \langle d, s' \rangle$. 

If the maximal segment has length $1$, we apply a proper reduction to it and obtain a deduction $\Pi''$ with inductive value $\langle d', s' \rangle$. Since there are no maximal segments of the same degree above the last formula occurrences of the segment or side connected occurrences (hence by multiplying occurrences of parts of the deduction we do not increase the inductive value) and any new maximal segments created by the procedure always have lesser degree than that of the original maximal segment, if that maximal segment was the last one with degree $d$ in the derivation then $d' < d$; if it wasn't, the reduction removes a segment of degree $d$ and thus $d = d'$ and $s' = (s - 1)$, so in any case we have $\langle d', s' \rangle < \langle d, s \rangle$. Notice also that neither permutative nor proper reductions add new formulas to the set of premises the deductions depend on. By successively applying this procedure we eventually reach the desired deduction with degree $\langle0, 0 \rangle$. 

Notice that it follows immediately from the shape of the reductions that no new applications of $PR$, $\bot (+)$ or $\top (-)$ with non-atomic conclusions are created, so since $\Pi'$ is in atomic form the deduction $\Pi''$ is in normal form. This allows us to apply Lemma~\ref{MariaLemma}, which also does not create such applications or any new maximal formulas, to obtain a deduction $\Pi'''$ in simplified normal form.

\end{proof}

\begin{proof}[Proof of Corollary~\ref{cor:sub}]
This can be shown by an %straightforward 
adaptation of the proof in~\cite[pp.~52--53]{prawitz1965}. Apart from the addition of dual, analogous cases for each rule, the main difference arises in the adaptation of part (ii) of Prawitz's Theorem~2, where we must also allow formula occurrences that are premises of applications of $PR$ or $\top(-)$, in addition to premises of $\bot(+)$. A further difference is on proofs of co-implication and refutations of implication, which can be treated in a similar way as proofs of conjunction. 
%As the argument contains no substantial new ideas, we omit the details.\qed
%From our proof of normalisation it follows that $\BPR$ indeed satisfies the subformula property. 
\end{proof}

\subsection{Proofs from Section~\ref{sec:bes}}\label{app:sec4}

\begin{proof}[Proof of Theorem~\ref{theorem:positivenegativesupport}]
    We prove both results simultaneously, by induction on the degree of $\phi$. It is not possible to prove both clauses independently due to the interaction between proofs and refutations in some of the semantic clauses.
    \bigskip

    \begin{enumerate}
        \item (Base case). Let $p \in \At$ such that $\Vdash^+_{\B} p$. Note that, in this case, $p$ can never be $\bot$, as this would imply $\vdash^+_{\B}\bot$, contradicting our supposition of logical consistency. Assume, in order to reach a contradiction, that $\Vdash^-_{\B} p$ holds. If $p$ is $\top$, this would entail $\Vdash^-_{\B}\top$, and so $\vdash^-_{\B}\top$, which, once again, contradicts our definition of base. For any other atom $p$, the two conditions yield that $\vdash^+_{\B}p$ and $\vdash^-_{\B}p$ hold simultaneously, so by using one of the rules of epistemic consistency we obtain $\vdash^+_{\B}\bot$ (or $\vdash^{-}_{\mathcal{B}} \top$). Contradiction. Hence, $\nVdash^{-}_{\mathcal{B}} p$. The proof that $\Vdash^{-}_{\mathcal{B}} p$ implies $\nVdash^{+}_{\mathcal{B}} p$ is analogous.
        \medskip 
        \item $(\phi =\psi \wedge \chi).$ Assume that $\Vdash^+_{\B} \psi \wedge \chi$. Then, by the semantic clause for proofs of conjunction, we have $\Vdash^+_{\B} \psi$ and $\Vdash^+_{\B} \chi$. Let $\C$ be any base such that $\C\supseteq\B$ and $\mathcal{D}$ any base such that $\mathcal{D} \supseteq \mathcal{C}$. By transitivity of $\supseteq$ and monotonicity we have $\Vdash^+_{\mathcal{D}} \phi$ and $\Vdash^+_{\mathcal{D}} \chi$, so, by the induction hypothesis, $\not\Vdash^-_{\mathcal{D}} \psi$ and $\not\Vdash^-_{\mathcal{D}} \chi$. Furthermore, since $\mathcal{D}$ is an arbitrary extension of $\mathcal{C}$ we conclude that  $\emptyset; \phi \Vdash_{\C}^-\top$ and $\emptyset; \chi \Vdash_{\C}^-\top$ hold vacuously but, by logical consistency and the semantic clause for refutations of atoms, $\Vdash_{\C}^-\top$ does not hold. Hence, since $\top \in \At$, by the clause for refutation of conjunctions, $\not\Vdash_{\B}^- \phi\wedge \chi$.
        \medskip 
        
        Now, assume that $\Vdash^-_{\B} \psi \wedge \chi$. Then, by the semantic clause for refutation of conjunctions,  for all $\mathcal{C} \supseteq \B$ we have that if $\emptyset ; \psi \Vdash^{-}_{\mathcal{C}} p$ and $\emptyset ; \chi \Vdash_{\mathcal{C}}^{-} p$ then $ \Vdash_{\mathcal{C}}^{-} p$, for all $p \in \At$. Now assume $\Vdash^{+}_{\mathcal{B}} \psi \land \chi$. Then by the clause for proofs of conjunction we have $\Vdash^{+}_{\mathcal{B}} \psi$ and $\Vdash^{+}_{\mathcal{B}} \chi$, whence by monotonicity $\Vdash^{+}_{\mathcal{C}} \psi$ and $\Vdash^{+}_{\mathcal{C}} \chi$ for all $\C \supseteq \B$. Induction hypothesis: $\nVdash^{-}_{\mathcal{C}} \psi$ and $\nVdash^{-}_{\mathcal{C}} \chi$ for all $\C \supseteq \B$. We conclude that $\emptyset ; \psi \Vdash^{-}_{\mathcal{B}} \top$ and $\emptyset ; \chi \Vdash^{-}_{\mathcal{B}} \top$ hold vacuously, so since $\emptyset ; \psi \Vdash^{-}_{\mathcal{C}} \top$ and $\emptyset ; \chi \Vdash_{\mathcal{C}}^{-} \top$ implies $ \Vdash_{\mathcal{C}}^{-} \top$ for all $\C \supseteq \B$ and $\B \supseteq \B$ we have $\Vdash^{-}_{\mathcal{B}} \top$ and so $\vdash^{-}_{\mathcal{B}} \top$. Contradiction. Hence, $\nVdash^{+}_{\B} \psi \land \chi$.
        \medskip
        
        \item $(\phi= \psi \rightarrow \chi).$ Assume that $\Vdash^+_{\B} \psi \rightarrow \chi$. Then, by the semantic clause for proofs of the implication, $\psi;\emptyset\Vdash^+_{\mathcal{B}} \chi$. We take into account the fact that either $\nVdash^{+}_{\B} \psi$ or $\Vdash^{+}_{\B} \psi$. If $\nVdash^{+}_{\mathcal{B}} \psi$ then by the clause for refutation of $\top$ we have $\nVdash^{-}_{\mathcal{B}} \psi \to \chi$. If $\Vdash^{+}_{\mathcal{B}} \psi$ then since $\B \supseteq \B$ and $\psi;\emptyset\Vdash^+_{\mathcal{B}} \chi$ we have $\Vdash^{+}_{\mathcal{B}} \chi$. In this case the induction hypothesis yields $\nVdash^{-}_{\B} \chi$ and then $\nVdash^{-}_{\B} \psi \to \chi$, so in any case we conclude $\Vdash^{-}_{\B} \psi \to \chi$.

       % Then, for every base $\mathcal{C}$ extending $\mathcal{B}$, either $\not\Vdash^+_{\C} \psi$ or, if $\Vdash^+_{\C} B$, $\Vdash^+_{\C} C$. In the first case, we have that $\not\Vdash^+_{\B} B$, in the second case, by the induction hypothesis $\not\Vdash^-_{\C} C$, and so, $\not\Vdash^-_{\B} C$. In either case, we conclude that $\not\Vdash^-_{\B} B\rightarrow C$.
        \medskip

        Now, assume that $\Vdash^-_{\B} \psi\rightarrow \chi$. By the semantic clause for refutations of the implication, $\Vdash^+_{\B} \psi$ and $\Vdash^-_{\B} \chi$. Then, by the induction hypothesis, $\not\Vdash^+_{\B} \chi$. Since $\B\supseteq\B$, $\Vdash^+_{\B} \psi$ and $\not\Vdash^+_{\B} \chi$, we conclude $\psi;\emptyset\not\Vdash^+_{\B} \chi$, hence $\not\Vdash^+_{\B} \psi\rightarrow \chi$. 

    \end{enumerate}
    \medskip
    The proofs for $\vee$ and $\mapsfrom$ are analogous to the ones for $\wedge$ and $\rightarrow$, respectively.

    \end{proof}

\subsection{Proofs from Section~\ref{sec:sc}}\label{app:sec5}

\begin{proof}[Proof of Theorem~\ref{thm:soundness}]
    We prove the result by induction on the length of derivations, understood as the number of rule applications it contains.

    \medskip

\begin{enumerate}
    \item[]  \textbf{Case 1}. The derivation has length $0$. Then it either consists in a single occurrence of proof assumption $\phi$ and is a deduction showing $\phi; \emptyset \vdash^{+}_{\BintNA} \phi$ or in a single occurrence of a refutation assumption $\phi$ and is a deduction showing $\emptyset; \phi \vdash^{-}_{\BintNA} \phi$. It follows that  $\phi; \emptyset \Vdash^{+}_{\mathcal{B}} \phi$ holds for every $\mathcal{B}$, whence $\phi; \emptyset \Vdash^{+} \phi$. It also  follows immediately that  $\emptyset; \phi \Vdash^{-}_{\mathcal{B}} \phi$ holds for every $\mathcal{B}$, whence $\emptyset; \phi \Vdash^{-} \phi$.

    \medskip

     \item[] \textbf{Case 2}. The derivation has length greater than $0$. Then the result is proved by considering the last rule applied in it, so we have to show that each rule application preserves semantic validity. Notice, however, that even though a deduction ending with an application of $\top (+)$ or  $\bot (-)$ has no open assumptions it can still be considered a deduction showing $\Gamma ; \Delta \vdash^{+}_{\BintNA} \top$ or $\Gamma ; \Delta \vdash^{-}_{\BintNA} \bot$, respectively. We only show the case of the rule $PR$ in its proof version, as the dual proof case is analogous.

\end{enumerate}

\begin{itemize}

   \item[] $(PR)$ If $\Gamma_1;\Delta_1\Vdash^+\phi$ and $\Gamma_2;\Delta_2\Vdash^-\phi$, then $\Gamma_1,\Gamma_2;\Delta_1,\Delta_2\Vdash^+ \phi$.
   \medskip

    Assume $\Gamma_1;\Delta_1\Vdash^+\phi$ and $\Gamma_2;\Delta_2\Vdash^-\phi$ and pick an arbitrary $\mathcal{B}$. Assume, for the sake of proving a contradiction, that there is $\mathcal{C} \supseteq \mathcal{B}$ such that $\Vdash^{+}_{\mathcal{C}}\gamma$ and $\Vdash^{-}_{\mathcal{C}}\delta$ for every $\gamma\in\Gamma_1\cup\Gamma_2$ and $\delta\in\Delta_1\cup\Delta_2$. 
     Then, by $\Gamma_1;\Delta_1\Vdash^+\phi$, we have that $\Vdash^{+}_{\mathcal{C}}\phi$, and by $\Gamma_2;\Delta_2\Vdash^-\phi$, we have $\Vdash^{-}_{\mathcal{C}}\phi$, contradicting 
     theorem \ref{theorem:positivenegativesupport}. Hence there is no extension $\mathcal{C}$ of the arbitrary $\mathcal{B}$ such that $\Vdash^{+}_{\mathcal{C}}\gamma$ and $\Vdash^{-}_{\mathcal{C}}\delta$ for every $\gamma\in\Gamma_1\cup\Gamma_2$ and $\delta\in\Delta_1\cup\Delta_2$, so $\Gamma_{1}, \Gamma_{2} ; \Delta_{1}, \Delta_{2} \Vdash^{+} \phi$ holds vacuously.

\end{itemize}
    
    \end{proof}

\begin{comment}
     \Victor{The last step of the proof was incorrect. It was like this: \\ Furthermore, let $\mathcal{C}$ be some base extending $\mathcal{B}$ such that $\Vdash^{+}_{\mathcal{C}}\gamma$ and $\Vdash^{-}_{\mathcal{C}}\delta$ for every $\gamma\in\Gamma_1\cup\Gamma_2$ and $\delta\in\Delta_1\cup\Delta_2$. Then, by $\Gamma_1;\Delta_1\Vdash^+\phi$, we have that $\Vdash^{+}_{\mathcal{C}}\phi$, and by $\Gamma_2;\Delta_2\Vdash^-\phi$, we have $\Vdash^{-}_{\mathcal{C}}\phi$, contradicting Theorem \ref{theorem:positivenegativesupport}. Hence, either $\Gamma_1;\Delta_1\not\Vdash^+\phi$ or $\Gamma_2;\Delta_2\not\Vdash^-\phi$, and so, the statement hold vacuously. \\ Consider the following consequences: $\bot ; \emptyset \Vdash p$ and $ \emptyset ; \top \Vdash p$ and $\bot ; \top \Vdash p$. All of them are vacuously satisfied.}

\end{comment}

\begin{proof}[Proof of Lemma~\ref{lemma:atomicsupportiffderivability}]
    The proof is the same as that for Lemma 7 in \cite{EcumenicalPTS}. The only difference is that there is no need to analyse the cases for $\bot$ and $\top$ separately from the atomic cases, as both of those formulas are now treated as atoms.
\end{proof}

\begin{proof}[Proof of Theorem~\ref{thm:completeness}] Assume that $\Gamma;\Delta\Vdash^* \phi$. Then, $\Gamma;\Delta\Vdash_{\mathcal{U}}^* \phi$. Let $\Theta=(\Gamma\cup\Delta\cup\{\phi\})^{Sub}$, $\alpha$ an atomic mapping for $\Theta$ and $\mathcal{U}$ a simulation base based on $\alpha$. Consider the sets $\Gamma^{\At}=\{\alpha(\psi) | \psi\in\Gamma\}$ and $\Delta^{\At}=\{\alpha(\chi) | \chi\in\Delta\}$. Furthermore, let $\B$ be a base constructed by adding to $\mathcal{U}$ a proof rule concluding $p^{\psi}$ from empty premises, for every $p^{\psi}\in\Gamma^{\At}$, and a refutation rule concluding $p^{\chi}$, for every $p^{\chi}\in\Delta^{\At}$. We consider two cases:

    \begin{enumerate}
        \item $\B$ is inconsistent. Then, (i) there is a deduction $\Pi$ in $\B$ showing $\vdash^+_\B\bot$, or (ii) there is a deduction $\Sigma$ showing $\vdash^-_\B\top$. If (i) is the case but every rule in $\Pi$ is a rule in $\mathcal{U}$, we would have $\vdash^+_{\mathcal{U}}\bot$, contradicting Lemma \ref{lemma:simulationconsistent}. Hence, there must be some rule applied in $\Pi$, which is in $\B$ but not in $\mathcal{U}$. Consider the derivation ${\Pi}'$ we obtain from $\Pi$ by substituting every such rule which is an instance of a proof rule concluding $p^{\psi}$ from empty premises by a pair of assumptions $p^{\psi};\emptyset$ and every instance of a refutation rule concluding $p^{\chi}$ from empty premises by a pair of assumptions $\emptyset;p^{\chi}$. Then, $\Pi'$ is a derivation showing $\Gamma';\Delta'\vdash^+_\mathcal{U}\bot$ for some $\Gamma'\subseteq\Gamma^{\At}$ and $\Delta'\subseteq\Delta^{\At}$. Consider the derivation $\Pi''$ obtained by substituting every instance of a formula $p^{\varphi}$ by $\varphi$ in $\Pi'$. We can easily show, by induction on the length of the derivations, that $\Pi''$ is a derivation showing $\Gamma;\Delta\vdash^+_{\BPR}\bot$. Then, applying $\bot(+)$ at the end of $\Pi''$, we obtain a derivation showing $\Gamma;\Delta\vdash^*_{\BPR}\phi$. If (ii) is the case, we follow a similar strategy and obtain a derivation showing $\Gamma;\Delta\vdash^-_{\BPR}\top$, hence, by applying $\top(-)$ at the end, we show that $\Gamma;\Delta\vdash^*_{\BPR}\phi$.

  \medskip
  
        \item $\B$ is consistent. Then, by construction of this base, we have that $\vdash_\B^+p^{\psi}$ for every  $p^{\psi}\in\Gamma^{\At}$, and $\vdash_\B^-p^{\chi}$ for every  $p^\chi\in\Delta^{\At}$. Note that $\B$ is epistemically adequate, $\Gamma,\Delta\subseteq\Theta$, and so, by Lemma \ref{lemma:atomicsupportiffderivability}, $\Vdash_\B^+{\psi}$ for every  ${\psi}\in\Gamma$, and $\Vdash_\B^-{\chi}$ for every  $\chi\in\Delta$. Furthermore, as $\mathcal{U}\subseteq\B$ and $\Gamma;\Delta\Vdash_{\mathcal{U}}^* \phi$, we get that $\Vdash_{\mathcal{B}}^* \phi$. Again, by Lemma \ref{lemma:atomicsupportiffderivability}, as $\phi\in\Theta$, we obtain $\vdash_{\mathcal{B}}^* p^{\phi}$. Let $\Pi^*$ be a derivation showing $\vdash_{\mathcal{B}}^* p^{\phi}$ in $\B$. If every rule in $\Pi^*$ is a rule in $\U$, then $\Pi^*$ is actually a derivation showing $\vdash_{\U}^* p^{\phi}$. Otherwise, we follow a similar strategy to that of the previous case, substituting every instance of an application of a proof rule concluding $p^\psi$ from empty premisses by a pair of assumptions $p^\psi;\emptyset$, for every $p^\psi\in\Gamma^{\At}$, and analogously for refutation rules concluding $p^\chi$ from empty premises, for $p^\chi\in\Delta^{\At}$. Thus we obtain a derivation in $\U$ showing $\Gamma';\Delta'\vdash_{\U}^* p^{\phi}$ for some $\Gamma'\subseteq\Gamma^{\At}$ and $\Delta'\subseteq\Delta^{\At}$. In either case, we have a derivation showing $\Gamma^{\At}; \Delta^{\At}\vdash_{\U}^* p^{\phi}$. Take this derivation and substitute every instance of an atom $p^{\varphi}$ by $\varphi$. The atoms which are not the image of some formula under $\alpha$, such as (possibly) the conclusion of $\bot(-)$, remain unchanged. It is straightforward to show that the obtained derivation is in $\BPR$ and shows $\Gamma;\Delta\vdash^*_{\BPR} \phi$.
        
\end{enumerate}
\end{proof}

\end{document}